\def\UrlSpecials{\do\~{\kern -.15em\lower .7ex\hbox{~}\kern .04em}} \catcode`~=13 
\newcommand{\nn}{\nonumber}
\newcommand{\calA}{\mathcal{A}}
\newcommand{\calB}{\mathcal{B}}
\newcommand{\calE}{\mathcal{E}}
\newcommand{\calG}{\mathcal{G}}
\newcommand{\calM}{\mathcal{M}}
\newcommand{\calP}{\mathcal{P}}
\newcommand{\calR}{\mathcal{R}}
\newcommand{\calT}{\mathcal{T}}
\newcommand{\calU}{\mathcal{U}}
\newcommand{\ba}{\mathbf{a}}
\newcommand{\be}{\mathbf{e}}
\newcommand{\rmc}{\mathrm{c}}
\newcommand{\rmd}{\mathrm{d}}
\newcommand{\rme}{\mathrm{e}}
\newcommand{\rmP}{\mathrm{P}}
\newcommand{\bbE}{\mathbb{E}}
\newcommand{\bbF}{\mathbb{F}}
\newcommand{\bbN}{\mathbb{N}}
\newcommand{\bbR}{\mathbb{R}}
\DeclareMathAlphabet{\mathbsf}{OT1}{cmss}{bx}{n}
\DeclareMathAlphabet{\mathssf}{OT1}{cmss}{m}{sl}% slanted sans serif
\DeclareSymbolFont{bsfletters}{OT1}{cmss}{bx}{n}  
\DeclareSymbolFont{ssfletters}{OT1}{cmss}{m}{n}
\DeclareMathSymbol{\bsfGamma}{0}{bsfletters}{'000}
\DeclareMathSymbol{\ssfGamma}{0}{ssfletters}{'000}
\DeclareMathSymbol{\bsfDelta}{0}{bsfletters}{'001}
\DeclareMathSymbol{\ssfDelta}{0}{ssfletters}{'001}
\DeclareMathSymbol{\bsfTheta}{0}{bsfletters}{'002}
\DeclareMathSymbol{\ssfTheta}{0}{ssfletters}{'002}
\DeclareMathSymbol{\bsfLambda}{0}{bsfletters}{'003}
\DeclareMathSymbol{\ssfLambda}{0}{ssfletters}{'003}
\DeclareMathSymbol{\bsfXi}{0}{bsfletters}{'004}
\DeclareMathSymbol{\ssfXi}{0}{ssfletters}{'004}
\DeclareMathSymbol{\bsfPi}{0}{bsfletters}{'005}
\DeclareMathSymbol{\ssfPi}{0}{ssfletters}{'005}
\DeclareMathSymbol{\bsfSigma}{0}{bsfletters}{'006}
\DeclareMathSymbol{\ssfSigma}{0}{ssfletters}{'006}
\DeclareMathSymbol{\bsfUpsilon}{0}{bsfletters}{'007}
\DeclareMathSymbol{\ssfUpsilon}{0}{ssfletters}{'007}
\DeclareMathSymbol{\bsfPhi}{0}{bsfletters}{'010}
\DeclareMathSymbol{\ssfPhi}{0}{ssfletters}{'010}
\DeclareMathSymbol{\bsfPsi}{0}{bsfletters}{'011}
\DeclareMathSymbol{\ssfPsi}{0}{ssfletters}{'011}
\DeclareMathSymbol{\bsfOmega}{0}{bsfletters}{'012}
\DeclareMathSymbol{\ssfOmega}{0}{ssfletters}{'012}
\newcommand{\hatA}{\hat{A}}
\newcommand{\tila}{\tilde{a}}
\newcommand{\tilH}{\tilde{H}}
\newcommand{\hatR}{\hat{R}}
\newcommand{\barf}{\overline{f}}
\def\fndot{\, \cdot \,}
\newcommand{\dotleq}{\stackrel{.}{\leq}}
\newcommand{\dotgeq}{\stackrel{.}{\geq}}
\DeclareMathOperator*{\argmax}{arg\,max}
\newcommand{\bone}{\mathbbm{1}}
\newtheorem{theorem}{Theorem} 
\newtheorem{lemma}{Lemma}
\newtheorem{proposition}[theorem]{Proposition}
\newtheorem{definition}{Definition}
\newtheorem{remark}{Remark}
\newcommand{\qednew}{\nobreak \ifvmode \relax \else
      \ifdim\lastskip<1.5em \hskip-\lastskip
      \hskip1.5em plus0em minus0.5em \fi \nobreak
      \vrule height0.75em width0.5em depth0.25em\fi}
\newcommand{\tba}{\tilde{\ba}}
\begin{document}
\flushbottom
\allowdisplaybreaks[1]

%\title{Remaining Uncertainties and Exponents under Various  R\'enyi Conditional Entropies} 
\title{Analysis of Remaining Uncertainties and Exponents under Various Conditional  R\'enyi Entropies} 

\author{Vincent Y.~F.\ Tan$^\dagger$,~\IEEEmembership{Senior Member,~IEEE},  $\,$ and  $\,$ Masahito Hayashi$^\ddagger$,~\IEEEmembership{Senior Member,~IEEE}
\thanks{$\dagger$Vincent~Y.~F. Tan is with the Department of Electrical and Computer Engineering and the Department of Mathematics, National University of Singapore (Email: \url{vtan@nus.edu.sg}). }
\thanks{$^\ddagger$Masahito~Hayashi is with the  Graduate School of Mathematics, Nagoya University, and the Center for Quantum Technologies (CQT),  National University of Singapore   (Email: \url{masahito@math.nagoya-u.ac.jp}).   }   \thanks{This paper was presented in part at the 2016 International Symposium on Information Theory  (ISIT) in Barcelona, Spain.}   }

%%%%%%%%%%%%%%%%%%%%%%%%%%%%%%%%%%%%%%%%%%%%%%%%%%%%%%%%%%%%%%
%%%%%%%%%%%%%%%%%%%%%%%%%%%%%%%%%%%%%%%%%%%%%%%%%%%%%%%%%%%%%%

\maketitle

\begin{abstract}
In this paper, we analyze the asymptotics of the normalized remaining uncertainty of a source  when a compressed or hashed version of it and correlated side-information   is observed. For this system,   commonly known as Slepian-Wolf  source coding, we establish the optimal (minimum) rate  of compression of the source to ensure that the remaining uncertainties vanish.   We also study the exponential rate of decay of the remaining uncertainty to zero when the rate is above the optimal rate of compression. In our study, we consider various classes of random  universal hash functions. Instead of measuring remaining uncertainties using traditional Shannon information measures, we do so using two forms of the conditional R\'enyi entropy. Among other   techniques, we employ new one-shot bounds and the moments of type class enumerator method for these evaluations.  We show that these asymptotic results are generalizations of  the strong converse exponent and the error exponent of the Slepian-Wolf  problem   under maximum {\em a posteriori} (MAP) decoding.   
\end{abstract}  
\begin{IEEEkeywords}
Remaining uncertainty, Conditional R\'enyi entropies,  R\'enyi divergence, Error exponent, Strong converse exponent, Slepian-Wolf  coding, Universal hash functions, Information-theoretic security, Moments of type class enumerator method
\end{IEEEkeywords}

%\section{Introduction}  \label{sec:intro}

\section{Introduction} \label{sec:intro}
In information-theoretic security~\cite{Liang,BlochBarros}, it is of fundamental importance to study the remaining uncertainty of a random variable $A^n$ given a compressed version of itself $f(A^n)$ and another correlated signal $E^n$. This model, reminiscent of the the Slepian-Wolf  source  coding  problem\footnote{In this paper, we abuse terminology and use the terms {\em Slepian-Wolf  coding}~\cite{sw73} and {\em lossless source coding with decoder side-information}  interchangeably.}~\cite{sw73}, is illustrated in Fig.~\ref{fig:sw}. A model somewhat similar to the one we study here was studied by  Tandon, Ulukus and Ramachandran~\cite{tandon13} who analyzed the problem of secure source coding with a helper.  In particular, a party  would like to reconstruct a source $A^n$ given a ``helper'' signal  (or a compressed version of it) but an eavesdropper, who can tap on $f(A^n)$ is also present in the system. The authors in~\cite{tandon13} analyzed the tradeoff between the compression rate and the equivocation of $A^n$ given $f(A^n)$. Villard and   Piantanida~\cite{villard13} and Bross~\cite{bross16} considered the setting in which the eavesdropper also has access to memoryless side-information $E^n$
that is correlated with $A^n$. However, there are many ways that one could measure the equivocation or remaining uncertainty. The traditional way, starting from Wyner's seminal paper on the wiretap channel~\cite{Wyn75} (and also in~\cite{Liang,BlochBarros,tandon13,villard13, bross16}), is to do so using the conditional  Shannon  entropy $H( A^n|f(A^n),E^n)$,  leading to a ``standard'' equivocation measure. In this paper, we   study the asymptotics  of remaining uncertainties based on the family of  R\'enyi  information measures~\cite{renyi}. The measures we consider include the  {\em conditional R\'enyi entropy} $H_{1 + s} (A^n | f(A^n),E^n)$ and its so-called {\em Gallager form}, which we denote as $H_{1 + s}^\uparrow (A^n | f(A^n),E^n)$. We note that unlike the  conditional Shannon entropy, there is  no universally accepted definition for the conditional R\'enyi entropy, so we define the quantities that we study carefully in Section \ref{sec:info_measures}.   Extensive discussions of various plausible notions of the conditional R\'enyi entropy  are provided in the recent works by Teixeira, Matos and  Antunes \cite{teixeira2012} and Fehr and Berens~\cite{fehr}. 

We motivate our study by first showing that the limits of the (normalized)  remaining uncertainty  $\frac{1}{n} H_{1 + s} (A^n | f(A^n),E^n)$ and the exponent of the remaining uncertainty $-\frac{1}{n} \log H_{1 + s} (A^n | f(A^n),E^n)$ (for appropriately chosen R\'enyi parameters $1+s$) are, respectively,  {\em generalizations} of  the strong converse exponent and the error exponent for decoding $A^n$ given $(f(A^n), E^n)$. %This problem is exactly the   lossless source coding with decoder side information problem (or Slepian-Wolf problem~\cite{sw73})  as shown in Fig.~\ref{fig:sw}.   %are   related to the asymptotics of the normalized  remaining uncertainty  $\frac{1}{n} H_{1 + s} (A^n | f(A^n),E^n)$ and its exponent $-\frac{1}{n} \log H_{1 + s} (A^n | f(A^n),E^n)$ for appropriately chosen R\'enyi parameters $1+s$. 
Recall that the strong converse exponent~\cite{Dueck79,Oohama94} is the exponential rate at which the probability of {\em correct} decoding tends to zero when one operates at a rate {\em below} the first-order coding rate, i.e., the conditional Shannon entropy $H(A|E)$. In contrast, the error exponent~\cite{gallagerIT,Gal76,Kos77,Csi97} is the exponential rate at which the probability of {\em incorrect} decoding tends to zero when one operates  at a rate  {\em above} $H(A|E)$. Thus, studying the asymptotics of the conditional R\'enyi entropy allows us not only to understanding the remaining uncertainty for various classes of {\em hash functions}~\cite{carter79, Wegman81}  but also allows us to provide additional information and intuition concerning  the strong converse exponent and the error exponent for Slepian-Wolf coding~\cite{sw73}. We also motivate our study by considering a scenario in information-theoretic security where the hash functions we study appear naturally, and coding can be done in a computationally efficient manner. The present work can be regarded  a follow-on from the authors' previous work  in~\cite{HayashiTan15} on the asymptotics  of the equivocations where we studied the behavior of $C_{1+s} :=nR-H_{1 + s} (f(A^n) |  E^n)$ and $C_{1+s}^\uparrow:= nR-H_{1 + s}^\uparrow (f(A^n) |  E^n)$ (where $R = \frac{1}{n}\log \| f\|$ is the rate of the  cardinality of the range of $f$). In \cite{HayashiTan15}, we also studied the  exponents and second-order asymptotics of the equivocation. However, we note that because we consider the remaining uncertainty instead of the equivocation, several novel techniques, including new one-shot bounds and large-deviation techniques, have to be developed to single-letterize various expressions.

\begin{figure} 
\centering
\setlength{\unitlength}{.4mm}
\begin{picture}(200, 60)
\thicklines
\put(-15, 45){\vector(1, 0){30}}
 
\put(45, 45){\vector(1, 0){60}}
\put(135, 45){\vector(1, 0){30}}

\put(120, 0){\vector(0,1){30}}

%\thicklines
\put(15, 30){\line(1, 0){30}}
\put(15, 30){\line(0,1){30}}
\put(45, 30){\line(0,1){30}}
\put(15, 60){\line(1,0){30}}

\put(105, 30){\line(1, 0){30}}
\put(105, 30){\line(0,1){30}}
\put(135, 30){\line(0,1){30}}
\put(105, 60){\line(1,0){30}}

\put(-15, 49){  $A^n$}
\put(60, 50){  $f_n(A^n)$}
 \put(103, 0){ $E^n$}
%\put(51, -10){  $\bbE[\rvg(X)]\le\Gamma$}
%\put(65, 8){  $[2^{nR}]$}
\put(153, 49){  $\hatA^n$} 
\put(140, 33){  $H_{1\pm s}(A^n|f_n (A^n),E^n)$} 
\put(140, 17){  $H_{1\pm s}^\uparrow(A^n|f_n (A^n),E^n)$} 
\put(27, 42){$f_n$ } 
\put(117, 42){$g_n$} 
%
%\put(150, 0){\line(1, 0){30}}
%\put(150, 0){\line(0,1){30}}
%\put(180, 0){\line(0,1){30}}
%\put(150, 30){\line(1,0){30}}
%\put(161, 12){$\varphi$} 
%\put(190, 20){  $\hatm$} 
%\put(186, 1){  $\Pr(\hatM \ne M)$} 
  \end{picture}
  \caption{The Slepian-Wolf~\cite{sw73}  source coding problem. We are interesting in quantifying the asymptotic behaviors of  the remaining uncertainty of $A^n$ given $( f_n(A^n), E^n)$ measured according to the conditional R\'enyi entropies  $H_{1\pm s}$ and $H_{1\pm s}^\uparrow$ defined in  \eqref{eqn:renyi_ent_Q}  and \eqref{eqn:gallager_form}.  }
  \label{fig:sw}
\end{figure}
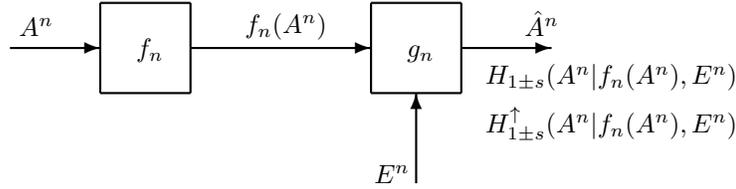

\subsubsection*{Paper Organization}

This paper is organized as follows. In Section~\ref{sec:prelims}, we recap the definitions of standard Shannon  information measures and some less common R\'enyi  information measures~\cite{fehr,teixeira2012}. We also introduce some new quantities and state relevant properties of the information measures. We state some notation concerning the method of types~\cite{Csi97}. In Section~$  $\ref{sec:motivate}, we further motivate our study by relating the quantities we wish to characterize to the error exponent and strong converse exponent of Slepian-Wolf coding (Proposition~\ref{pr:err_exp}). In Section \ref{sec:equiv}, we define various important classes of hash functions~\cite{carter79, Wegman81} (such as universal$_2$  and strong universal hash functions) and further motivate the study of the quantities of interest by  discussing    efficient implementations of universal$_2$ hash functions via circulant matrices~\cite{HayashiT2013}. The final parts of Section \ref{sec:equiv} contain  our main results concerning the asymptotics of the normalized remaining uncertainties (Theorem \ref{thm:rem}),  the optimal rates of compression of the main source to ensure that the remaining uncertainties vanish (Theorem~\ref{cor:threshold}), and the exponents of the remaining uncertainties (Theorem~\ref{thm:exponents}). We show that the optimal rates are tight in certain  ranges of the R\'enyi parameter.  For these evaluations, we make use of several novel one-shot bounds, large-deviation techniques as well as the moments of type class enumerator method~\cite{merhav08,merhav14, merhav_FnT,kaspi11,merhav13}.   Theorems \ref{thm:rem}, \ref{cor:threshold} and~\ref{thm:exponents} are proved in Sections \ref{sec:prf_rem}, \ref{sec:threshold_prf} and~\ref{sec:prf_exp} respectively.  We conclude our discussion and suggests further avenues for research in Section~\ref{sec:con}. Some technical results (e.g.,   one-shot bounds, concentration inequalities) are relegated to the appendices.

\section{Information Measures and Other Preliminaries }\label{sec:prelims}
\subsection{Basic Shannon and R\'enyi Information Quantities} \label{sec:info_measures}
We now introduce some information measures that generalize Shannon's information measures. Fix a normalized distribution $P_A \in\calP(\calA)$ and a sub-distribution (a non-negative vector but not necessarily summing to one) $Q_A\in\bar{\calP}(\calA)$ supported on a finite set $\calA$. Then the {\em relative entropy} and the {\em R\'enyi divergence of order $1+s$} are respectively defined as 
\begin{align}
D(P_A  \| Q_A)  &:= \sum_{a\in\calA}P_A(a) \log\frac{P_A(a)}{Q_A(a)} \\*
D_{1+s}(P_A  \| Q_A)  &:=  \frac{1}{s}\log\sum_{a\in\calA} P_A(a)^{1+s}Q_A(a)^{-s},
\end{align}
where throughout, $\log $ is to the natural base $\rme$.  
It is known that $\lim_{s\to 0} D_{1+s}(P_A\| Q_A) = D(P_A\| Q_A)$ so a special (limiting) case of the  R\'enyi divergence is the usual relative entropy. It is  also known that  the map $s\mapsto s D_{1+s}(P_A\| Q_A)$ is concave in $s \in\bbR$ and hence $D_{1+s}(P_A\| Q_A)$ is monotonically increasing for $s\in\bbR$. Furthermore, the following {\em data processing} or {\em information processing inequalities} for R\'enyi divergences hold for $s\in [-1,1]$,
\begin{align}
D(P_A W  \| Q_A W) &\le D(P_A  \| Q_A)\\
D_{1+s}(P_A W  \| Q_A W) &\le D_{1+s}(P_A  \| Q_A) . \label{eqn:dpi_rd}
\end{align}
Here $W:\calA\to\calB$ is any stochastic matrix (channel) and $P_AW(b):=\sum_a W(b|a)P_A(a)$ is the output distribution induced by $W$ and $P_A$.

%We   use $P_{\mix,\calA}$ to denote the uniform distribution on  $\calA$.  
 We also introduce conditional entropies on the product alphabet  $\calA\times\calE$ based on the divergences above. Let $I_A(a)=1$ for each $a\in\calA$. 
If $P_{AE}$ is a  distribution  on $\calA\times\calE$, the {\em conditional entropy}, the {\em conditional R\'enyi entropy of order $1 + s$ }  and the {\em min-entropy} {\em relative to another normalized distribution $Q_E$ on $ \calE$}  as
\begin{align}
H(A|E|P_{AE}\| Q_E)  &:= - D(P_{AE} \| I_{A}\times Q_E) , \label{eqn:cond_entr_given} \\
H_{1+s}(A|E|P_{AE}\| Q_E)  &:= - D_{1+s}(P_{AE} \| I_{A}\times Q_E) ,\label{eqn:r_cond_entr_given} \\
H_{\min}(A|E|P_{AE}\| Q_E)  &= -\log\max_{(a,e):Q_E(e)>0} \frac{P_{AE}(a,e)}{Q_E(e)} .
\end{align}
It is known that $\lim_{s\to 0 } H_{1+s}(A|E|P_{AE}\| Q_E)=H(A|E|P_{AE}\| Q_E)$ and  
\begin{equation}
\lim_{s\to \infty} H_{1+s}(A|E|P_{AE}\| Q_E)=H_{\infty}(A|E|P_{AE}\| Q_E)=H_{\min}(A|E|P_{AE}\| Q_E) .\label{eqn:min_ent_lim}
\end{equation}
If $Q_E=P_E$, we simplify the above notations and denote the {\em conditional entropy}, the {\em conditional R\'enyi entropy of order $1 + s$} and the {\em min-entropy} as
\begin{align}
H(A|E|P_{AE}) &:= H(A|E|P_{AE}\|P_E) =-\sum_e P_E(e)\sum_a P_{A|E}(a|e) \log P_{A|E}(a|e),\\
H_{1+s}(A|E|P_{AE}) &:= H_{1+s}(A|E|P_{AE}\|P_E) = -\frac{1}{s}\log\sum_{e} P_E(e)\sum_a P_{A|E}(a|e)^{1+s}, \label{eqn:renyi_ent_Q} \\
H_{\min}(A|E|P_{AE}) &:= H_{\min}(A|E|P_{AE}\|P_E)= -\log\max_{(a,e):P_E(e)>0} P_{A|E}(a | e).
\end{align}
The map $s\mapsto s H_{1+s}(A|E|P_{AE})$ is concave, and  $H_{1+s}(A|E|P_{AE}\|Q_E)$ is 
monotonically decreasing  for  $s\in\bbR\setminus\{0\}$.  The definition of the conditional R\'enyi entropy in \eqref{eqn:renyi_ent_Q} is due to Hayashi~\cite[Section II.A]{Hayashi11} and \v{S}kori\'{c} {\em et al.}~\cite[Definition 7]{skoric11}. %Alternative expressions for  $H_{1+s} (A|E|P_{AE})$  are presented in Lemma \ref{lem:alt_H} in Appendix~\ref{sec:alt_expressions}.

We are also interested in the  so-called {\em Gallager form} of the conditional R\'enyi entropy  and the min-entropy for a  joint distribution $P_{AE} \in\calP(\calA\times\calE)$:
\begin{align}
H_{1+s}^{\uparrow} (A|E|P_{AE})   &:=  -\frac{1 +  s}{s}\log\sum_e P_E(e)\bigg( \sum_a P_{A|E}(a|e)^{1+s}\bigg)^{\frac{1}{1+s}} \label{eqn:gallager_form} \\
H_{\min}^{\uparrow} (A|E|P_{AE}) &    :=H_{\infty}^{\uparrow} (A|E|P_{AE}) =  -\log\sum_eP_E(e) \max_a P_{A|E}(a|e) \label{eqn:min_gal} %\\
%H_{1+s|1+s'}^{\uparrow}  (A|E|P_{AE})  &  \nn\\*
% &\hspace{-1in} := -\frac{1 +  s'}{s}\log\sum_e \Big( \sum_a P_{AE}(a,e)^{1+s}\Big)^{\frac{1}{1+s'}}
\end{align}
%In the same way as the conditional R\'enyi entropy and conditional min-entropy, we have $\lim_{s\to\infty}H_{1+s}^{\uparrow} (A|E|P_{AE})=H_{\min}^{\uparrow} (A|E|P_{AE})$. 
By defining the familiar  {\em Gallager function}  \cite{gallagerIT,Gal76}  (parametrized slightly differently)
\begin{equation}
\phi\big (s|A|E|P_{AE} \big):=\log\sum_e P_E(e)\bigg( \sum_a P_{A|E}(a|e)^{ \frac{ 1}{1-s}}\bigg)^{1-s}
\end{equation}
we can express \eqref{eqn:gallager_form} as 
\begin{equation}
H_{1+s}^{\uparrow} (A|E|P_{AE})=-\frac{1+s}{s}\phi\bigg( \frac{s}{1+s}\Big|A |E |P_{AE}\bigg),
\end{equation}
thus (loosely) justifying the nomenclature ``Gallager form'' of the conditional R\'enyi entropy in \eqref{eqn:gallager_form}. Note that $H_{1+s}$ and  $H_{1+s}^\uparrow$ are respectively denoted as $\tilH_{1+s}^4$ and $H_{1+s}$ in the paper by Fehr and Berens~\cite{fehr}. The Gallager form of the  conditional R\'enyi entropy, also commonly known as {\em Arimoto's conditional R\'enyi entropy}~\cite{arimoto75}, was shown in~\cite{fehr} to satisfy two natural  properties for $s\ge -1$, namely,  {\em monotonicity under conditioning} (or simply {\em monotonicity})
\begin{equation}
H_{1+s}^\uparrow (A|B,E|P_{ABE})\le H_{1+s}^\uparrow(A|E|P_{AE}) , \label{eqn:mono}
\end{equation}
 and the {\em chain rule}
\begin{equation}
H_{1+s}^\uparrow (A|B,E|P_{ABE})\ge H_{1+s}^\uparrow(A|E|P_{AE})-\log|\calB|.\label{eqn:cr}
\end{equation}
  The monotonicity  property  of $H_{1+s}^\uparrow$ was also shown operationally by Bunte and Lapidoth in the context of lossless source coding with lists and side-information~\cite{bunte13} and encoding tasks with side-information~\cite{bunte14}.   We exploit these properties in the sequel. The quantities $H_{1+s}$ and $H_{1+s}^{\uparrow}$ can be   shown to be related as follows~\cite[Theorem 4]{fehr}
\begin{align}
\max_{Q_E \in \calP(\calE)} H_{1+s}(A|E|P_{AE}\| Q_E) = H_{1+s}^{\uparrow}(A|E|P_{AE}) \label{eqn:ent_min}
\end{align}
for $s\in [-1,\infty)\setminus\{0\}$. The maximum on the left-hand-side is attained for the tilted distribution
\begin{equation}
Q_E(e) = \frac{ (\sum_a P_{AE}(a,e)^{1+s})^{\frac{1}{1+s}} }{\sum_e(\sum_a P_{AE}(a,e)^{1+s})^{\frac{1}{1+s}}}. \label{eqn:Q_tilt}
\end{equation}
The map $s\to s H_{1+s}^{\uparrow}(A|E|P_{AE})$  is concave and the map $s\mapsto H_{1+s}^{\uparrow}(A|E|P_{AE})$ is monotonically decreasing for $s\in (-1,\infty)$. It can be shown by L'H\^{o}pital's rule that $
\lim_{s\to 0}  H_{1+s}^\uparrow(A|E|P_{AE}) = H(A|E|P_{AE})$. 
Thus, we regard $ H_{1}^\uparrow(A|E|P_{AE} )$ as $H(A|E|P_{AE})$, i.e., when $s=0$, the conditional R\'enyi entropy and its Gallager form coincide and are equal to the  conditional Shannon entropy.  

 We also find it useful to consider a  {\em two-parameter family  of the conditional  R\'enyi entropy}\footnote{This {\em new} information-theoretic quantity is  somewhat related to $H_{1+\theta, 1+\theta'}$ in the work by  Hayashi and Watanabe~\cite[Eq.~(14)-(15)]{HayashiW2013} but is different and not to be confused with $H_{1+\theta, 1+\theta'}$. }
 \begin{align}
H_{1+s| 1+t}(A|E|P_{AE}) := -\frac{1 + t}{s}\log\sum_{ e} P_E(e) \bigg(\sum_aP_{A|E}(a|e)^{1+s}  \bigg)
 \bigg(\!\sum_{\tila}P_{A|E}(\tila|e)^{1+t} \bigg)^{-\frac{s}{1+t}}. \label{eqn:two_param} 
 \end{align}
%Clearly, $H_{1+s|1+s}=H_{1+s}^\uparrow$. 
 Clearly $ H_{1+s| 1+s }(A|E|P_{AE})=H_{1+s}^{\uparrow} (A|E|P_{AE})$,  so the two-parameter conditional  R\'enyi entropy is  a generalization of the Gallager form  of the conditional R\'enyi entropy in~\eqref{eqn:gallager_form}.  

% The R\'enyi entropies can be shown to satisfy a form of {\em data processing} or {\em information processing  inequality}. In particular if $f:\calA\to\calM$ is any function on the set $\calA$, we have 
%\begin{align}
%H(f(A)|E|P_{AE}) &\le  H(  A |E|P_{AE}) ,\label{eqn:dpi1}\\
%H_{1+s}(f(A)|E|P_{AE}) &\le  H_{1+s}(  A |E|P_{AE}), \label{eqn:dpi2}\\
%H_{1+s}^\uparrow(f(A)|E|P_{AE}) &\le  H_{1+s}^\uparrow(  A |E|P_{AE})\label{eqn:dpi3} .
%\end{align}
%Inequalities \eqref{eqn:dpi2} and \eqref{eqn:dpi3} hold true for all $s> -1$.  
%These inequalities  say  that processing the random variable $A$ cannot increase its randomness measured under any of the above conditional R\'enyi entropies.

For future reference, given  a joint source $P_{AE}$, define the {\em critical rates}
\begin{align}
\hatR_s &:=\frac{\rmd }{\rmd t} \, tH_{1+t}(A|E|P_{AE})\Big|_{t=s } , \quad\mbox{and}\label{eqn:crit_rate1} ,\\
\hatR_s^\uparrow &:=\frac{\rmd}{\rmd t} tH_{1+t|1+s} (A|E|P_{AE} )  \Big|_{t=s} . \label{eqn:crit_rate2}
\end{align}
%We note that $\hatR_s$ and $\hatR_s^\uparrow$ are  monotonically non-increasing in $s$ because the functions $t\mapsto tH_{1+t}(A|E|P_{AE})$ and $t\mapsto tH_{1+t}^\uparrow (A|E|P_{AE})$  are concave.  The fact that $t\mapsto -tH_{1+t}^\uparrow(A|E|P_{AE})$ is convex is because the maximum of convex functions is convex; cf.~\eqref{eqn:ent_min}. 

\subsection{Notation for Types} \label{sec:types}
The proofs  of our results leverage on the {\em method of types}~\cite[Ch.~2]{Csi97}, so we summarize some relevant notation here. The set of all distributions (probability mass functions) on a finite set $\calA$ is denoted as $\calP(\calA)$. The {\em type} or {\em empirical distribution} of a sequence $\ba\in\calA^n$ is the distribution $Q(a) = \frac{1}{n}\sum_{i=1}^n\bone\{ a_i = a\},a\in\calA$.  The set of all sequences $\ba\in\calA^n$ with  type  $Q \in\calP(\calA)$ is the {\em type class} and is denoted as $\calT_Q \subset\calA^n$. The set of all $n$-types (types formed from length-$n$ sequences) on alphabet $\calA$ is denoted as $\calP_n(\calA)$. When we write $a_n\dotleq b_n$, we mean that inequality on an exponential scale, i.e., $\varlimsup_{n\to\infty}\frac{1}{n}\log\frac{a_n}{b_n}\le 0$. The notations $\dotgeq$ and $\doteq$ are defined analogously. Throughout, we will use the fact that the number of types $|\calP_n(\calA)|\le (n+1)^{|\calA|}\doteq 1$.%, i.e.,  $|\calP_n(\calA)|$ is polynomial. 

\section{Motivation for Studying Remaining Uncertainties} \label{sec:motivate}
As mentioned in the introduction, in this paper, we study the remaining uncertainty and its rate of exponential decay measured using various R\'enyi information measures. In this section, we further motivate the relevance of this study by relating the remaining uncertainty to the strong converse exponent  for decoding $\ba = (a_1,a_2,\ldots, a_n)\in\calA^n$ given side information $\be = (e_1,e_2,\ldots, e_n)\in\calE^n$ and the compressed version of $\ba$, namely $m=f(\ba)$ (Slepian-Wolf problem). We also relate the exponential rate of decay of the remaining uncertainty for a source coding rate above the first-order fundamental limit  to the error exponent of the Slepian-Wolf problem.

\subsection{Relation to the Strong Converse Exponent for Slepian-Wolf Coding }
Consider the Slepian-Wolf source coding problem as shown in Fig.~\ref{fig:sw}. For a given   function (encoder) $f_n:\calA^n\to\calM_n$ and side information vector $\be\in\calE^n$, we may define the {\em maximum a-posteriori (MAP)} decoder $g_{f_n}: \calM_n\times\calE^n\to\calA^n$ as follows:
\begin{align}
g_{f_n}(m,\be)& :=\argmax_{ \ba \in\calA^n : f_n(\ba) = m } P_{AE}^n(\ba,\be)   
  =\argmax_{ \ba \in\calA^n : f_n(\ba) = m } P_{A|E}^n(\ba | \be)  \label{eqn:def_gf}  . 
\end{align}
Define the probability of correctly decoding $\ba$ given the encoder $f_n$ and the MAP decoder $g_{f_n}$ as follows:
\begin{align}
\rmP_{\rmc}^{(n)}(f_n) := \sum_{\be} \sum_{\ba : \ba = g_{f_n}(f_n(\ba),\be)}  P_{AE}^n(\ba,\be)
\end{align}
Then, by the definition of $H_{\infty}^\uparrow$ in \eqref{eqn:min_gal}, we immediately see that 
\begin{equation}
-\frac{1}{n}\log \rmP_{\rmc}^{(n)}(f_n )=\frac{1}{n}H_{\infty}^\uparrow(A^n | f_n(A^n),E^n  | P_{AE}^n ). \label{eqn:str_conv_exp}
\end{equation}
When optimized over $\{ f_n \}_{n=1}^\infty$, the quantity on the left of \eqref{eqn:str_conv_exp} (or its limit) is called the {\em strong converse exponent} as it characterizes the optimal  exponential rate at which the  probability of {\em correct} decoding   the true source $\ba$  given $(f_n(\ba),\be)$ decays to zero.   Thus, by studying the asymptotics of $\frac{1}{n}H_{1+s}^\uparrow$ for all $s\in [0,\infty)$ and, in particular, the limiting case   of $s\uparrow \infty$ (which we do in \eqref{eqn:plus_max_uni_str}  in Part (2) of Theorem~\ref{thm:rem}), we obtain a generalization of  the strong converse exponent for the Slepian-Wolf problem. In fact, it is known that $\varliminf_{n\to\infty}-\frac{1}{n}\log \rmP_{\rmc}^{(n)}(f_n)>0$ for any sequence of encoders $\{f_n\}_{n=1}^\infty$ if and only if the rate $\varlimsup_{n\to\infty}\frac{1 }{n}\log \|f_n\|< H(A|E|P_{AE})$ \cite[Theorem 2]{Oohama94}. This  fact will be utilized in  the proof of Theorem~\ref{cor:threshold}.
\subsection{Relation to the Error Exponent for Slepian-Wolf Coding }
Similarly, we may define the probability of {\em incorrectly} decoding $\ba$ given the encoder $f_n$ and MAP decoder $g_{f_n }$ as follows:
\begin{equation}
\rmP_{\rme}^{(n)}(f_n) := \sum_{\be} \sum_{\ba : \ba \ne g_{f_n}(f_n(\ba),\be)}  P_{AE}^n(\ba,\be). \label{eqn:perr}
\end{equation}
Then we have the following proposition concerning the exponent of $\rmP_{\rme}^{(n)}(f_n)$.
\begin{proposition} \label{pr:err_exp}
Assume that $\rmP_{\rme}^{(n)}(f_n) $ tends to zero exponentially fast for a given sequence of hash functions $ \{f_n \}_{n=1}^{\infty}$, i.e., $\lim_{n\to\infty}-\frac{1}{n}\log \rmP_{\rme}^{(n)}(f_n) >0$ (the existence of the limit is part of the assumption). Then for any $s\ge 1$, we have 
\begin{align}
\lim_{n\to\infty} -\frac{ 1}{n}\log \rmP_{\rme}^{(n)}(f_n) &  = \lim_{n\to\infty}-\frac{1}{n}\log H_{1+s}(A^n | f_n(A^n),E^n | P_{AE}^n) \label{eqn:err_exp1} \\
&  = \lim_{n\to\infty}-\frac{1}{n}\log H_{1+s}^\uparrow(A^n | f_n(A^n),E^n | P_{AE}^n) .\label{eqn:err_exp2}
\end{align}
\end{proposition}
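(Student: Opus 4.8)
The plan is to sandwich the remaining uncertainties $H_{1+s}$ and $H_{1+s}^\uparrow$ between quantities whose exponential rates are easily seen to coincide with the error exponent of the MAP decoder. First I would write out both quantities explicitly in terms of the conditional probabilities $P_{A|E}^n(\ba|\be)$ grouped by the hash bin $m = f_n(\ba)$. For fixed $(m,\be)$, let $\ba^\star = \ba^\star(m,\be) = g_{f_n}(m,\be)$ be the MAP reconstruction, i.e.\ the maximizer of $P_{A|E}^n(\cdot|\be)$ over the bin. Then $\sum_{\ba: f_n(\ba)=m} P_{A|E}^n(\ba|\be)^{1+s}$ splits as $P_{A|E}^n(\ba^\star|\be)^{1+s}$ plus the contribution of the ``error'' sequences in the bin, and the latter is exactly what gets summed (weighted by $P_E^n(\be)$) to produce $\rmP_{\rme}^{(n)}(f_n)$ after raising to the first power. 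The key elementary inequalities, valid for $s \ge 1$ and any collection of nonnegatives $\{p_i\}$ with maximum $p_{\max}$, are
\begin{equation}
\Big(\sum_{i \ne \mathrm{argmax}} p_i\Big)^{1+s} \le \sum_i p_i^{1+s} - p_{\max}^{1+s} \le \sum_{i \ne \mathrm{argmax}} p_i \cdot p_{\max}^{s} \le \sum_{i\ne\mathrm{argmax}} p_i,
\end{equation}
where the middle step uses $x^{1+s} - y^{1+s} \le (1+s)(x-y)x^s$ or simply the superadditivity-type bound $\sum_i p_i^{1+s} - p_{\max}^{1+s} \le (\sum_{i\ne\mathrm{argmax}} p_i)^{1+s} \le (\sum_{i\ne\mathrm{argmax}} p_i)$ when the residual mass is at most $1$; since $p_{\max}\le 1$ here, these all go through. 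Summing against $P_E^n(\be)$ (and, for $H_{1+s}^\uparrow$, applying the $\tfrac{1}{1+s}$ power inside and Jensen/monotonicity of $\ell^p$ norms) yields two-sided bounds of the form $c_n \cdot \rmP_{\rme}^{(n)}(f_n)^{?} \le \big(\text{the } \exp \text{ of } -\tfrac{s}{1+s} \text{ or } -s \text{ times } H\big) \le C_n \cdot \rmP_{\rme}^{(n)}(f_n)^{?}$ with sub-exponential prefactors.

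Concretely, for $H_{1+s}$ I would argue: from $-s H_{1+s} = \log \sum_\be P_E^n(\be) \sum_{\ba: f_n(\ba) = m(\ba)} P_{A|E}^n(\ba|\be)^{1+s}$ — wait, more carefully, $\exp(-sH_{1+s}) = \sum_m \sum_\be P_E^n(\be) \sum_{\ba:f_n(\ba)=m} P_{A|E}^n(\ba|\be)^{1+s}$ after absorbing the $I_A$ factor. The diagonal term $\sum_{m,\be} P_E^n(\be) P_{A|E}^n(\ba^\star|\be)^{1+s} = \rmP_{\rmc}^{(n)}$-like but with exponent $1+s$, is at least $(\rmP_{\rmc}^{(n)})^{1+s} \ge (1 - \rmP_{\rme}^{(n)})^{1+s}$, which $\to 1$, so it does not affect the exponent of the \emph{whole} sum unless the off-diagonal dominates. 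The off-diagonal term is, by the display above, between (essentially) $(\rmP_{\rme}^{(n)})^{1+s}$ and $\rmP_{\rme}^{(n)}$ up to polynomial-in-$n$ type-counting factors $\doteq 1$. Hence $\exp(-sH_{1+s}) \doteq \max\{1, \rmP_{\rme}^{(n)}\} \cdot(\text{stuff})$ — no, that's not quite it either; since the diagonal contributes a term converging to a positive constant, $\exp(-sH_{1+s})$ is bounded below by a positive constant, which would make the exponent zero. This is the subtlety: $H_{1+s}$ as defined does \emph{not} go to zero — it is $H_{1+s}$ of a \emph{conditional} entropy, and on the ``correct'' part the conditional distribution is nearly deterministic so that part of $H_{1+s}$ is near zero, i.e.\ $\exp(-sH_{1+s})$ is near one. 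The remaining uncertainty vanishes means $H_{1+s} \to 0$, equivalently $1 - \exp(-sH_{1+s}) \to 0$. So I would instead track $1 - \exp(-s H_{1+s}(A^n|f_n(A^n),E^n))$, show it equals (up to $\doteq 1$ factors and the bounds above) a quantity squeezed between $\rmP_{\rme}^{(n)}$ and $(\rmP_{\rme}^{(n)})^{1+s}$, take $-\tfrac1n\log$, use $-\tfrac1n\log(1-\exp(-sH_{1+s})) $ versus $-\tfrac1n\log(sH_{1+s}) = -\tfrac1n \log H_{1+s} + o(1)$ (since $1-e^{-x}\asymp x$ for small $x$ and $sH_{1+s}\to 0$), and conclude that $-\tfrac1n\log H_{1+s}$ is pinched between $-\tfrac1n\log\rmP_{\rme}^{(n)}$ and $\tfrac{1+s}{n}(-\log\rmP_{\rme}^{(n)})$. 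That last bracket is the gap; the main obstacle is closing it.

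To close the gap between the $(\rmP_{\rme}^{(n)})^{1+s}$ lower bound and the $\rmP_{\rme}^{(n)}$ upper bound, I would use the assumed \emph{existence} of the exponent together with a ``peeling'' / type-decomposition argument: decompose $\rmP_{\rme}^{(n)}$ over joint types $(\be,\ba)$ of the side information and the competing (erroneous) sequence relative to $\ba^\star$; on the dominant type class the residual mass in a bin is, up to $\doteq 1$ factors, \emph{achieved by a single competitor} (or a polynomial number of them), so $(\sum_{i\ne\mathrm{argmax}} p_i)^{1+s} \doteq \sum_{i\ne\mathrm{argmax}} p_i^{1+s}$ on that type, i.e.\ the $1+s$-th power and the first power of the \emph{per-type} error contribution have the same exponent. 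More precisely, since each $p_i \le P_{A|E}^n(\ba^\star|\be) \le $ something, and on a fixed type all the $p_i$ are equal, $(\sum p_i)^{1+s} = N^{1+s} p^{1+s}$ while $\sum p_i^{1+s} = N p^{1+s}$, differing only by $N^s \le (\text{exponential in } n)$ — so this does \emph{not} immediately give $\doteq$. The honest fix is: $\rmP_{\rme}^{(n)} \doteq e^{-nE}$ for some $E>0$ by hypothesis, and one shows $\exp(-sH_{1+s}) - (\text{diagonal}) \doteq e^{-nE}$ directly by the upper bound $(\cdot)^{1+s} \le (\cdot)$ and the matching lower bound $\sum p_i^{1+s} - p_{\max}^{1+s} \ge \max_{i\ne\mathrm{argmax}} p_i^{1+s} \ge (\text{largest single error term})^{1+s}$, and then argue that the largest single error term has the \emph{same} exponent as $\rmP_{\rme}^{(n)}$ — this is the standard fact that for MAP decoding the error probability is exponentially dominated by its single largest ``pairwise'' contribution, i.e.\ the error exponent equals the exponent of a single most-likely confusion event, whose $(1+s)$-th power (a fixed constant multiple of the exponent) still has exponent... no. I think the cleanest route, and the one I expect the authors take, is: the hypothesis ``$-\tfrac1n\log\rmP_{\rme}^{(n)}$ has a limit'' forces the per-type error exponents to line up so that $-\tfrac1n\log(\rmP_{\rme}^{(n)})^{1+s} = (1+s)E$ and $-\tfrac1n\log\rmP_{\rme}^{(n)} = E$ are \emph{both} $E$ only if $E$ is interpreted correctly — actually they are $(1+s)E$ and $E$, genuinely different, so the squeeze argument alone fails, and one must use a sharper lower bound, namely $\sum_i p_i^{1+s} - p_{\max}^{1+s} \ge $ (the single largest error term)$^{1+s}$ combined with $\rmP_{\rme}^{(n)} \doteq$ (single largest error term), giving $\exp(-sH_{1+s}) - 1 \doteq (\rmP_{\rme}^{(n)})^{1+s} \cdot (\text{poly}) $ on the lower side, and on the upper side one uses a \emph{Hölder/power-mean} refinement: $\sum_i p_i^{1+s} - p_{\max}^{1+s} \le p_{\max}^{s}\sum_{i\ne\mathrm{argmax}} p_i$ but $p_{\max}$ itself is exponentially small on the relevant types — it is the \emph{non-dominant} side information sequences (those for which decoding is hard) that contribute, and there $p_{\max} \doteq \rmP_{\rme}^{(n)}$-ish, so $p_{\max}^s \sum p_i \doteq (\rmP_{\rme}^{(n)})^s \cdot \rmP_{\rme}^{(n)} = (\rmP_{\rme}^{(n)})^{1+s}$, matching the lower bound. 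That final observation — that on the type classes that dominate $\rmP_{\rme}^{(n)}$ the MAP-optimal probability $P_{A|E}^n(\ba^\star|\be)$ is itself exponentially comparable to $\rmP_{\rme}^{(n)}$ — is the crux, and it is exactly where the standard Slepian-Wolf error-exponent analysis (the $\rho\in[0,1]$ Gallager bound becoming tight via the MAP/type-enumerator method cited in the paper) gets invoked. The $H_{1+s}^\uparrow$ case then follows from the $H_{1+s}$ case by the identity $H_{1+s}^\uparrow = \max_{Q_E} H_{1+s}(\cdot\|Q_E)$ in \eqref{eqn:ent_min} together with the monotonicity relations, since sandwiching $H_{1+s} \le H_{1+s}^\uparrow$ on one side and a direct estimate on the other (using that the optimal $Q_E$ in \eqref{eqn:Q_tilt} only reweights exponentially-neutral factors on the dominant types) pins both to the same exponent; so once \eqref{eqn:err_exp1} is established, \eqref{eqn:err_exp2} costs only a short monotonicity/optimization argument.
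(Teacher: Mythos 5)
There is a genuine gap, and it sits exactly where you flagged it yourself: your lower bound on the deficiency $1-\rme^{-sH_{1+s}}$ only reaches $(\rmP_{\rme}^{(n)})^{1+s}$, and your attempt to upgrade it to $\rmP_{\rme}^{(n)}$ via a type-enumerator / ``single dominant confusion event'' argument is both unnecessary and unsound. The unsound part is the claimed crux that on the type classes dominating $\rmP_{\rme}^{(n)}$ the MAP mass $P_{A|E}^n(\ba^\star|\be)$ is itself exponentially comparable to $\rmP_{\rme}^{(n)}$; nothing in the hypotheses forces this, and in general the off-diagonal sum $\sum_{\ba\ne \ba^\star}Q(\ba|m,\be)^{1+s}$ need not have the same exponent as $\rmP_{\rme}^{(n)}$ at all. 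The idea you are missing is that the deficiency is not driven by the off-diagonal terms but by the \emph{first-order deficit of the diagonal term}: writing $\epsilon(m,\be):=\sum_{\ba\ne g_{f_n}(m,\be)}Q(\ba|m,\be)$, Taylor's theorem (with Lagrange remainder, using $s\ge 1$ so that $(1+\xi)^{s-1}\le 1$ on $\xi\in[-1,0]$) gives
\begin{equation}
Q(g_{f_n}(m,\be)|m,\be)^{1+s}=(1-\epsilon(m,\be))^{1+s}\le 1-(1+s)\epsilon(m,\be)+\frac{s(1+s)}{2}\,\epsilon(m,\be)^2 ,
\end{equation}
while the off-diagonal terms contribute at most $\epsilon(m,\be)$ (bounding $Q^{1+s}\le Q$). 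Averaging over $P(m,\be)$ yields $\rme^{-sH_{1+s}}\le 1-s\,\rmP_{\rme}^{(n)}(f_n)+\frac{s(1+s)}{2}\sum_{m,\be}P(m,\be)\epsilon(m,\be)^2$, i.e. $sH_{1+s}\dotgeq s\,\rmP_{\rme}^{(n)}(f_n)$ once the quadratic remainder is argued to be exponentially negligible; the $(1+s)$-versus-$1$ mismatch you were fighting never arises because the linear deficit $(1+s)\epsilon$ of the correct-decoding term beats the at-most-$\epsilon$ off-diagonal gain, leaving a net $s\epsilon$.

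With that lower bound in hand, the paper's argument closes in three lines: $\rme^{-sH_{1+s}^\uparrow}\ge\sum_{m,\be}P(m,\be)Q(g_{f_n}(m,\be)|m,\be)=1-\rmP_{\rme}^{(n)}(f_n)$ gives $sH_{1+s}^\uparrow\dotleq\rmP_{\rme}^{(n)}(f_n)$, and since $H_{1+s}\le H_{1+s}^\uparrow$ (your \eqref{eqn:ent_min}), both entropies are pinched between constant multiples of $\rmP_{\rme}^{(n)}(f_n)$, so all three exponents coincide. So your overall sandwiching instinct and your treatment of the upper-bound side are fine, and your route to \eqref{eqn:err_exp2} from \eqref{eqn:err_exp1} is more roundabout than needed (the one-line bound on $\rme^{-sH^\uparrow_{1+s}}$ above replaces any tilting/$Q_E$ discussion), but as written the proposal does not prove the lower-bound direction for $H_{1+s}$, which is the heart of the proposition.
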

We recall, by the Slepian-Wolf theorem~\cite{sw73}, that there exists a sequence of encoders $\{f_n\}_{n=1}^\infty$ such that  $\rmP_{\rme}^{(n)}(f_n)$ tends to zero  if and only if   $\varliminf_{n\to\infty}\frac{1 }{n}\log \|f_n\|\ge H(A|E|P_{AE})$. 
When optimized over $\{ f_n \}_{n=1}^\infty$,  the quantity on the left of \eqref{eqn:err_exp1}  is called the {\em optimal error exponent}   and it characterizes the optimal exponential rate at which the error probability of decoding $\ba$ given $(f_n(\ba),\be)$ decays to zero. Thus, Proposition~\ref{pr:err_exp} says that the exponents of $H_{1  + s}$ and $H_{1+ s}^\uparrow$   for   $s\ge 1$ are generalizations of the error exponent of decoding $A^n$ given $(f_n(A^n),E^n)$. We establish bounds on these limits for certain classes of hash functions in Part~(2) of Theorem \ref{thm:exponents}.% Note that the convergence rates in \eqref{eqn:err_exp1} and \eqref{eqn:err_exp2} differ for different $s>0$. % conditional R\'enyi entropies (and their Gallager versions). Thus, by estimating the exponents  of the conditional R\'enyi entropies, we may obtain an estimate for the error exponent.
\begin{proof}
We first consider the Gallager form of the conditional R\'enyi entropy $H_{1+s}^\uparrow$. For brevity, we let  $f=f_n$  (suppressing the dependence on $n$) and we also define the probability distributions $P: =P_{f(A^n) , E^n}$ and  $Q:= P_{A^n| f(A^n),E^n}$. Recall the definition of the MAP decoder $g_f(m,\be) $ in \eqref{eqn:def_gf}. We have
\begin{align}
\rme^{-s H_{1+s}^\uparrow(A^n| f(A^n), E^n | P_{AE}^n)  }  &=\sum_{\be,m} P(m,\be) \left(\sum_{\ba} Q(\ba | m,\be)^{1+s} \right)^{\frac{1}{1+s}}\\
&\ge\sum_{\be,m} P(m,\be) \left(  Q(g_f( m,\be) | m,\be)^{1+s} \right)^{\frac{1}{1+s}}\\
&= \sum_{\be,m} P(m,\be)  Q(g_f( m,\be) | m,\be )\\
&=1-\rmP_{\rme}^{(n)}(f) . \label{eqn:ee_gal}
\end{align}
In the following chain of inequalities, we will employ    Taylor's theorem with the Lagrange form of the remainder for the function $t\mapsto  (1+t)^{1+s}$  at $t=0^-$, i.e.,
\begin{equation}
 (1+t)^{1+s} = 1 + (1+s) t + \frac{ s (1+ s )}{2}(1+\xi)^{s-1} t^2\label{eqn:taylor0} 
\end{equation}
for some $\xi\in [t,0]$.  We choose $t$ to be  $Q(g_f( m,\be) | m,\be )-1$ in  our application in~\eqref{eqn:taylor1} to follow.  Let $\xi(m,\be)$ be a generic element of $[Q(g_f( m,\be) | m,\be )-1,0]\subset [-1,0]$ taking the role of $\xi$ in the Taylor series expansion in~\eqref{eqn:taylor0}. We bound the   conditional R\'enyi entropy as follows:
\begin{align}
&\rme^{-s H_{1+s} (A^n| f(A^n), E^n | P_{AE}^n)  } \nn\\*
&=\sum_{\be,m} P(m,\be) \sum_{\ba}  Q(\ba | m,\be)^{1+s} \\
&=\sum_{\be,m} P(m,\be) \Bigg\{ Q(g_f(m,\be) | m,\be)^{1+s}  + \sum_{\ba : \ba\ne g_f(m,\be)}  Q(\ba | m,\be)^{1+s} \Bigg\} \\
& = \sum_{\be,m} P(m,\be)  \Bigg\{  1+(1+s) \big[ Q(g_f( m,\be) | m,\be )-1\big]  \nn\\*
&\qquad + \frac{s (1+s)}{2} \big(1+\xi(m,\be)\big)^{s-1} \big[ Q(g_f( m,\be) | m,\be )-1\big]^2 +   \sum_{\ba : \ba\ne g_f(m,\be)}  Q(\ba | m,\be)^{1+s}  \Bigg\}\label{eqn:taylor1} \\
& \le  \sum_{\be,m} P(m,\be)  \Bigg\{  1+(1+s) \big[ Q(g_f( m,\be) | m,\be )-1\big]  \nn\\*
&\qquad + \frac{s (1+s)}{2} \big[ Q(g_f( m,\be) | m,\be )-1\big]^2 +  \sum_{\ba : \ba\ne g_f(m,\be)}  Q(\ba | m,\be)  \Bigg\} \label{eqn:taylor2}  \\
&  =  1+\sum_{\be,m} P(m,\be) \Bigg\{  ( - (1+s) +1 )\sum_{\ba : \ba\ne g_f(m,\be)}  Q(\ba | m,\be)  +   \frac{s (1+s)}{2}  \bigg[ \sum_{\ba : \ba\ne g_f(m,\be)}  Q(\ba | m,\be) \bigg]^2     \Bigg\}\\
&= 1-s \sum_{\be,m} P(m,\be) \sum_{\ba : \ba\ne g_f(m,\be)}  Q(\ba | m,\be)    + \frac{s (1+s)}{2}  \sum_{\be,m} P(m,\be)  \bigg[  \sum_{\ba : \ba\ne g_f(m,\be)}  Q(\ba | m,\be)\bigg]^2\label{eqn:ee_no_gal0} \\
&=1-s \rmP_{\rme}^{(n)}(f)  + \frac{s (1+s)}{2}  \sum_{\be,m} P(m,\be)  \bigg[  \sum_{\ba : \ba\ne g_f(m,\be)}  Q(\ba | m,\be)\bigg]^2 .\label{eqn:ee_no_gal}
\end{align}
In \eqref{eqn:taylor2}, noting that $s - 1\ge 0$, we uniformly upper bounded $\big(1+\xi(m,\be) \big)^{s-1}$ by $1$.  We also upper bounded $Q(\ba | m,\be)^{1+s}$ by $Q(\ba | m,\be)$.  In \eqref{eqn:ee_no_gal}, we used the definition of $\rmP_{\rme}^{(n)}(f)$ stated in \eqref{eqn:perr}.
%In \eqref{eqn:ee_no_gal0},  we
%%we  upper bounded  $\frac{s (1+s)}{2}$ by $1$ since $s \in [0,1]$. We also
% upper bounded the cubic term by  the  quadratic term, explaining the factor of $s(1+s)$ in~\eqref{eqn:ee_no_gal}. 
%where in \eqref{eqn:taylor1} we used the mean-value form of Taylor theorem  for $t \mapsto (1+t)^{1+s}$ and $s \in [0,s]$. 
Because $\rmP_{\rme}^{(n)}(f)$ is assumed to decay exponentially fast, we have 
\begin{align}
\rmP_{\rme}^{(n)}(f)  
&\doteq -\log\big(1- \rmP_{\rme}^{(n)}(f) \big) \label{eqn:log_ineq}\\*
&\ge sH_{1+s}^\uparrow (A^n| f(A^n), E^n | P_{AE}^n) \label{eqn:useee_gal}\\
&\ge sH_{1+s} (A^n| f(A^n), E^n | P_{AE}^n)  \label{eqn:ineq_H}\\
&\ge -\log\Bigg\{ 1-s \rmP_{\rme}^{(n)}(f)  +\frac{s (1+s)}{2}   \sum_{\be,m} P(m,\be)  \bigg[  \sum_{\ba : \ba\ne g_f(m,\be)}  Q(\ba | m,\be)\bigg]^2\Bigg\}\label{eqn:use_ee_no_gal}\\*
&\doteq s \rmP_{\rme}^{(n)}(f) - \frac{s (1+s)}{2}  \sum_{\be,m} P(m,\be)  \bigg[  \sum_{\ba : \ba\ne g_f(m,\be)}  Q(\ba | m,\be)\bigg]^2,\label{eqn:final_ee}
\end{align}
where \eqref{eqn:log_ineq} and \eqref{eqn:final_ee} follow  from $-\log(1-t)= t+O(t^2)$, \eqref{eqn:useee_gal} uses \eqref{eqn:ee_gal}, \eqref{eqn:ineq_H} uses the fact that $H_{1+s}^\uparrow\ge H_{1+s}$ (cf.\ \eqref{eqn:ent_min}) and~\eqref{eqn:use_ee_no_gal} uses~\eqref{eqn:ee_no_gal}. The second term in \eqref{eqn:final_ee}  is    exponentially smaller  than $ \rmP_{\rme}^{(n)}(f) $ because of the square  operation and  the fact that $\sum_{\ba : \ba\ne g_f(m,\be)}  Q(\ba | m,\be) <1$. Now, since $s\ge 1$ is constant, the exponents of the quantities on the left and right sides of the above chain are equal. Thus they are equal to the exponents of  $H_{1+s}^\uparrow (A^n| f(A^n), E^n | P_{AE}^n)$ and $H_{1+s} (A^n| f(A^n), E^n | P_{AE}^n)$ for every $s\ge 1$.  This completes the proof of Proposition~\ref{pr:err_exp}.
\end{proof}

\section{Main Results: Asymptotics of the Remaining Uncertainties} \label{sec:equiv}
In this section we present our   results concerning the asymptotic behavior of the remaining uncertainties and its exponential behavior. As mentioned in Section \ref{sec:motivate}, the former is a generalization of the strong converse exponent for the Slepian-Wolf problem~\cite{sw73}, while the latter is a generalization of the error exponent for the same problem. Before doing so, we define various classes of random hash functions and further motivate our analysis using an example from information-theoretic security.

\subsection{Definitions of Various Classes of  Hash Functions}
We now define  various classes of  hash functions.  We start by stating a slight generalization of the canonical definition  of a universal$_2$ hash function by Carter and Wegman~\cite{carter79}.

\begin{definition} \label{def:has}
 A {\em  random\footnote{For brevity, we will sometimes omit the qualifier ``random''. It is understood, henceforth, that all so-mentioned hash functions are random hash functions.}   hash function} $f_X$ is a stochastic map from $\calA$ to $\calM:= \{1, \ldots ,M\}$, where $X$ denotes a random variable describing its stochastic behavior. The set of all random hash functions mapping from $\calA$ to $\calM$ is denoted as $\calR = \calR(\calA,\calM)$. A hash 
function  $f_X$ is called an {\em $\epsilon$-almost  universal$_2$ hash function}  if it satisfies the following
condition: For any {\em distinct} $a_1, a_2\in\calA$, %the probability that $f_X(a_1)=f_X(a_2)$, namely
\begin{equation}
\Pr\big(f_X(a_1)=f_X(a_2) \big) \le \frac{\epsilon}{M}.\label{eqn:hash}
\end{equation}
   When $\epsilon=1$ in \eqref{eqn:hash}, we simply say that $f_X$ is a  {\em  universal$_2$ hash function~\cite{carter79}}. We denote the set of universal$_2$ hash functions  mapping from $\calA$ to $\calM$ by $\calU_2 =\calU_2(\calA,\calM)$. 
   \end{definition}
   The following definition is due to Wegman and Carter~\cite{Wegman81}.
   \begin{definition} \label{def:strong_has}
A random hash function $f_X : \calA\to\{1,\ldots, M\}$ is called {\em strongly universal} when the random variables $\{f_X(a) :  a\in\calA\}$ are independent and subject to a uniform distribution, i.e.,
\begin{align}
\Pr\big(f_X(a) = m \big) = \frac{1}{M} \label{eqn:st}
\end{align}
for all $m \in \{1,\ldots,M\}$.  If $f_X$ is a strongly universal hash function, we emphasize this fact by writing $\barf_{X}$. 
\end{definition}

\begin{figure}
\centering
\includegraphics[width = .45\columnwidth]{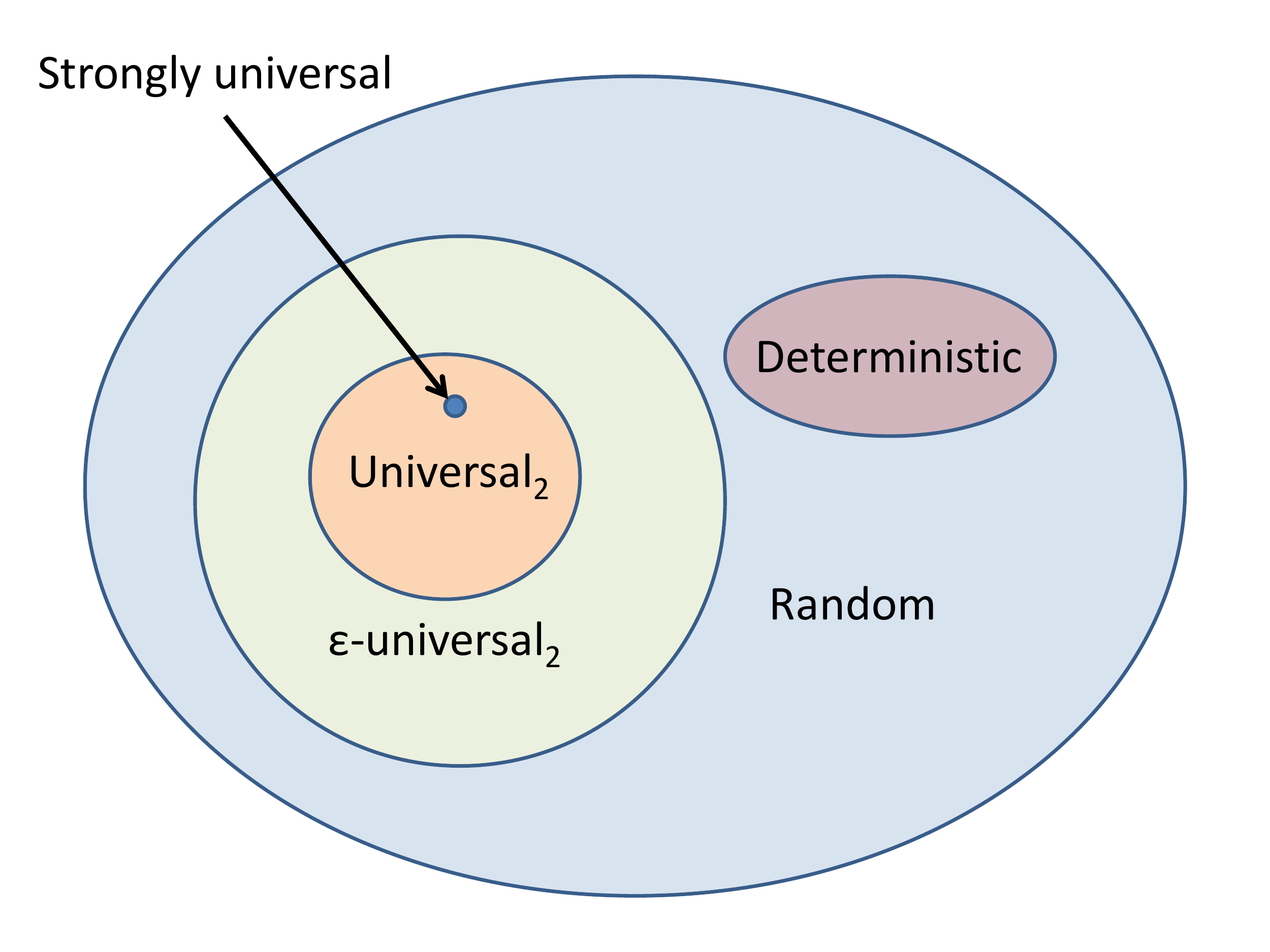}
\vspace{-.1in}
\caption{Hierarchy of hash functions. See Definitions  \ref{def:has} and \ref{def:strong_has}. }\vspace{-.1in}
\label{fig:venn}
\end{figure}

As an example, if $f_X$ independently and uniformly assigns each element of $a\in\calA$ into one of $M$ ``bins'' indexed by $m\in\calM$ (i.e., the familiar random binning process introduced by Cover in the context of Slepian-Wolf coding~\cite{cover75}), then  \eqref{eqn:st} holds,    
%with , so this is a universal$_2$ hash function, and furthermore, \eqref{eqn:hash} is achieved with equality.  In fact, this random binning process  
yielding a strongly universal  hash function.
% in the sense of Definition \ref{def:strong_has}. 
The hierarchy of hash functions is shown in Fig.~\ref{fig:venn}. 

%We focus mainly universal$_2$ hash functions in part because they can be efficiently implemented via Toeplitz matrices. 
%This can be done with  (low) complexity of $O(m\log m)$.   
%For details, see the discussion in Hayashi and Tsurumaru~\cite{HayashiT2013}. 

A universal$_2$ hash function $f$ can be  implemented efficiently via circulant (special case of Toeplitz) matrices. The  complexity is low---applying $f$ to an $m$-bit string requires $O(m\log m)$ operations generally. For details, see the discussion in Hayashi and Tsurumaru~\cite{HayashiT2013} and the subsection to follow. So, it is natural to assume that the encoding functions $f$ we analyze in this paper are  universal$_2$ hash functions.

\subsection{Another Motivation for Analyzing  Remaining Uncertainties} \label{sec:mot}

\begin{figure}
\centering
\includegraphics[width = .5\columnwidth]{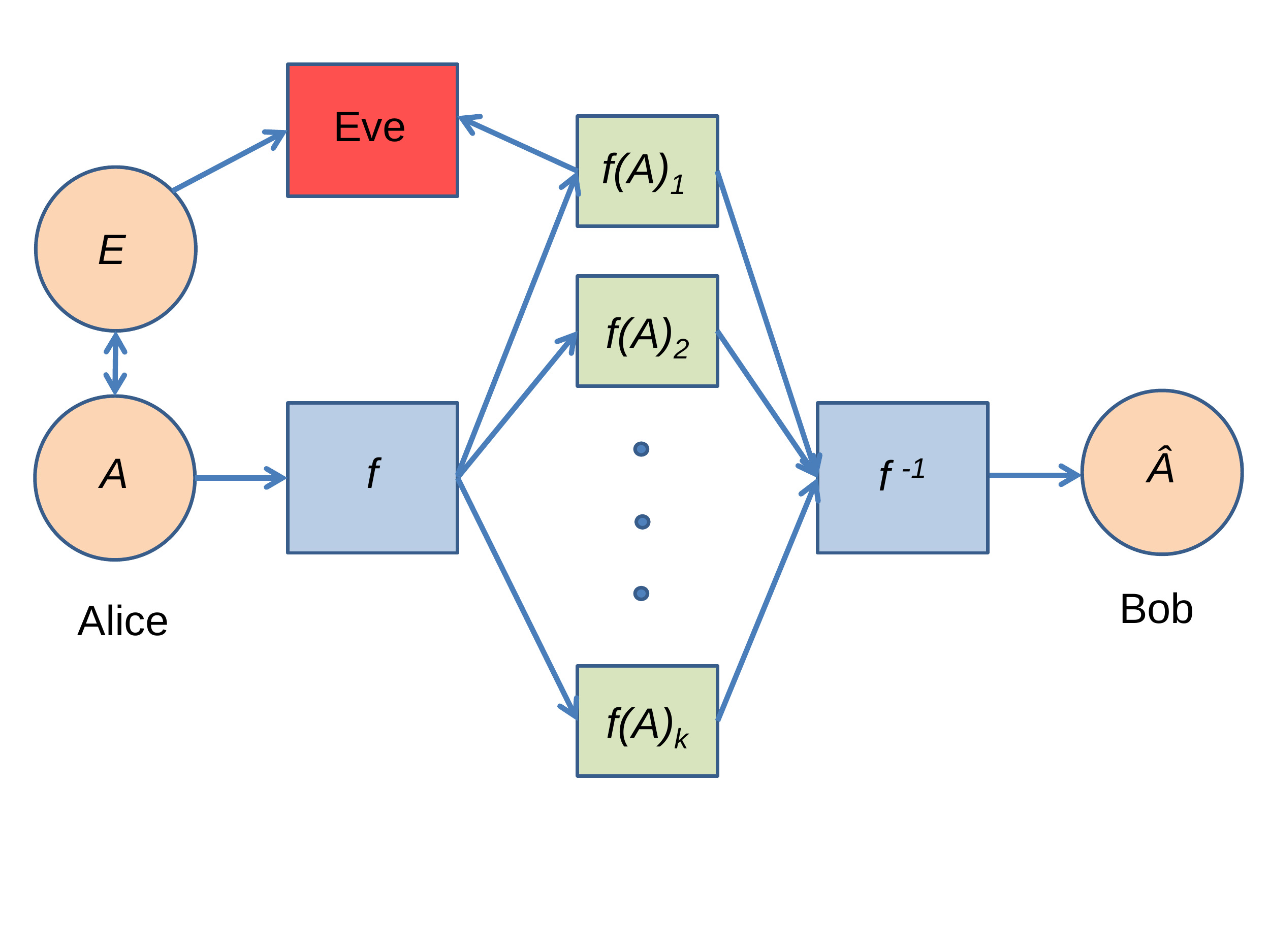}
\vspace{-.5in}
\caption{A secure communication scenario that motivates our study of remaining uncertainties. See Section \ref{sec:mot} for a discussion. } 
\label{fig:mot}
\end{figure}

To ensure a reasonable level of security in practice, we often send our message via multiple paths in networks. 
Assume that Alice wants to send an $m$-bit ``message'' $A$ to Bob via $l \in\bbN$ paths,
and that Eve has access to side-information $E$ correlated to $A$ and intercepts one of the $l$ paths.  
We also suppose   $m=kl$ for some $k\in\bbN$.
Alice applies an invertible function $f$ to $A$ and divides 
$f(A)$ into $k$ equal-sized parts 
$(f(A)_1, f(A)_2, \ldots, f(A)_k) \in \bbF_{2}^m 
\cong \bbF_2^l \oplus\ldots\oplus \bbF_2^l$ ($k$ times). See Fig.~\ref{fig:mot}.
Bob receives all of them, and applies $f^{-1}$ to decode $A$. Hence, Bob can  recover the original message $A$ losslessly.
However, if Eve somehow manages to tap on the $j$-th part $f(A)_j$ (where $j  \in \{1,2,\ldots, k\}$),  
Eve can possibly estimate the message $A$ from $E$ and $f(A)_j$ (in Fig.~\ref{fig:mot}, we assume Eve taps on the first piece of information $j=1$).
  Eve's  uncertainty with respect to $A$ is $H(A|f(A)_j, E|P_{A E})$ ($H$ here is a generic entropy function; it will be taken to be various conditional R\'enyi entropies in the subsequent subsections).
In this scenario, it is not easy to estimate the uncertainty $H(A|f(A)_j, E |P_{A E})$ as it depends on the choice of $j$.
To avoid such a difficulty,
we propose to apply a {\em random} invertible function $f_X$ to $A$.
To further resolve the aforementioned issue from a computational perspective, we regard $\bbF_{2}^m$ as the finite extension field $\bbF_{2^m}$.
When Alice and Bob choose invertible element $X$ in the finite field $\bbF_{2^m}$ subject to the uniform distribution,
and $f_X(A)$ is defined as $f(A):= XA$,
the map $A \mapsto f(A)_j$ is a universal$_2$ hash function.
Then, Eve's uncertainty with respect to $A$ can be described as  
$H(A|f(A)_j,E ,X|P_{AE}\times P_X)$.
When $(A,E)$ is taken to be $(A^n,E^n) = \{  (A_i,E_i) \}_{i=1}^n$ where the $(A_i,E_i)$'s are  independent and identically distributed, 
our results  in the following subsections are directly applicable  in evaluating Eve's uncertainty measured according to various conditional R\'enyi entropies. 
We remark that if $m$ is not a multiple of  $l$,
we can make the final block smaller than $l$ bits without any loss of generality asymptotically.

Indeed, this protocol can be efficiently implemented   with (low) complexity of $O(m\log m)$ \cite{HayashiT2013}  
because multiplication in the finite field $\bbF_{2^m}$
can be realized by an appropriately-designed  circulant matrix, leading to a fast Fourier transform-like algorithm.
Therefore, this communication setup, which contains an eavesdropper, is ``practical'' in the sense that encoding and decoding can be realized efficiently.

%For example, if we independently and uniformly assign each element of $\calA$ into ``bins'' in $\calM$ (cf.~\cite{cover75}), then for   $a_1\ne a_2$, we have $\Pr(f_X(a_1)=f_X(a_2))=|\calM|^{-1}$ so this is a universal$_2$ hash function, and~\eqref{eqn:hash} with $\epsilon=1$ is an equality.  In fact, this binning process yields a   strong universal  hash function. The hierarchy of hash function is shown in Fig.~\ref{fig:venn}. We focus   on universal$_2$ hash functions because they can be efficiently implemented via Toeplitz matrices~\cite{HayashiT2013}. % Let $|t|^+=\max\{0,t\}$.%  The following is our first main result.

\subsection{Asymptotics of Remaining Uncertainties} \label{sec:asymp_ru}
%\subsubsection{Remaining Uncertainties}
Our results in Theorem \ref{thm:rem} to follow pertain to    the {\em worst-case} remaining uncertainties over  all universal$_2$ hash functions. We are interested in  $\sup_{f_{X_n} \in\calU_2 } \frac{1}{n}H_{1\pm s}$ and $\sup_{f_{X_n} \in\calU_2 }\frac{1}{n} H_{1\pm s}^\uparrow$, where $H_{1\pm s}$ is a shorthand  for $H_{1\pm s}( A^n | f_{X_n}(A^n) , E^n, X_n | P_{AE}^n\times P_{X_n} )$   (similarly for  $H_{1\pm s}^\uparrow$) and $P_{AE}^n$ is the $n$-fold product measure. We emphasize that the evaluations of $\sup_{f_{X_n} \in\calU_2 } \frac{1}{n}H_{1\pm s}$ and $\sup_{f_{X_n} \in\calU_2 }\frac{1}{n} H_{1\pm s}^\uparrow$ are {\em stronger} than those in standard achievability arguments in Shannon theory where one often uses a random selection argument to assert that an object (e.g.,  a code) with good properties exist. In our calculations of the asymptotics of  $\sup_{f_{X_n} \in\calU_2 } \frac{1}{n} H_{1\pm s}$ and $\sup_{f_{X_n} \in\calU_2 } \frac{1}{n}H_{1\pm s}^\uparrow$, we assert that {\em all}   hash functions in $\calU_2$ have a certain desirable property; namely, that the remaining uncertainties can be appropriately upper bounded. %In some cases  where we cannot obtain a tight evaluation  for worst-case universal$_2$ hash functions (e.g., in \eqref{eqn:min_gal_inf_half_res}), we only consider the remaining uncertainty of a strong universal hash function $\barf_{X_n}$.  
 In addition, in Theorem~\ref{cor:threshold} to follow, we also quantify  the minimum rate $R$ such that the {\em best-case} remaining uncertainties over {\em all random} hash functions  $\inf_{f_{X_n}\in\calR}\frac{1}{n} H_{1\pm s}$ and $\inf_{f_{X_n} \in\calR} \frac{1}{n} H_{1\pm s}^\uparrow$ vanish. For many values of $s$, we show the minimum rates for  the two different evaluations (worst-case over   all $f_{X_n}\in\calU_2$ and best-case over all $f_{X_n}\in\calR$) coincide, establishing tightness for the optimal compression rates.

Let $|t|^+ :=\max\{0,t\}$.  The following is our first main result.

\begin{theorem}[Remaining Uncertainties]\label{thm:rem}
For each $n\in\bbN$, let the size\footnote{When we write $M_n=\rme^{nR}$, we mean that $M_n$ is the integer $\lfloor\rme^{nR}\rfloor.$ } of the  range of $f_{X_n}$ be $M_n=\rme^{nR}$. Fix a joint distribution $P_{AE}\in\calP(\calA\times\calE)$.  Define the worst-case limiting  normalized remaining uncertainties   over all universal$_2$ hash functions as
\begin{align}
G(R,s) &:= \varlimsup_{n\to\infty}\frac{1}{n}\sup_{f_{X_n} \in\calU_2} H_{1 + s} ( A^n | f_{X_n}(A^n) , E^n, X_n | P_{AE}^n\times P_{X_n} ),\quad\mbox{and} \label{eqn:Gdef}\\
G^\uparrow(R,s)&:= \varlimsup_{n\to\infty}\frac{1}{n}\sup_{f_{X_n} \in\calU_2} H_{1 + s}^\uparrow ( A^n | f_{X_n}(A^n) , E^n, X_n | P_{AE}^n\times P_{X_n} ) \label{eqn:Gdef_up} .
\end{align}
Recall the definitions of the critical rates $\hatR_s$ and $\hatR_s^\uparrow$ in \eqref{eqn:crit_rate1} and \eqref{eqn:crit_rate2} respectively. 
%\begin{align}
%\hatR_s &:=\frac{\rmd}{\rmd t} tH_{1+t} (A|E|P_{AE} )  \Big|_{t=s} ,\quad\mbox{and}\\
%\hatR_s^\uparrow &:=\frac{\rmd}{\rmd t} tH_{1+t|1+s} (A|E|P_{AE} )  \Big|_{t=s} .
%\end{align}
The following achievability statements hold:
\begin{enumerate}
%\item For any $s\in [0,1]$, we have 
%\begin{align}
%\liminf_{n\to\infty}\frac{1}{n}\inf_{f_{X_n}} H_{1-s} ( A^n | f_{X_n}(A^n) , E^n, X_n | P_{AE}^n\times P_{X_n} ) \ge |H_{1-s}^\uparrow(A|E |P_{AE}) - R|^+. \label{eqn:min_inf}
%\end{align}
%Furthermore,
%\begin{align}
%&\lim_{n\to\infty}\frac{1}{n}\sup_{f_{X_n} \in\calU_2} H_{1-s} ( A^n | f_{X_n}(A^n) , E^n, X_n | P_{AE}^n\times P_{X_n} )   \nn\\*
%&=\lim_{n\to\infty}\frac{1}{n}H_{1-s} ( A^n | \barf_{X_n}(A^n) , E^n, X_n | P_{AE}^n\times P_{X_n} ) \label{eqn:min_uni} \\
%& = |H_{1-s} (A|E |P_{AE}) - R|^+. \label{eqn:min_uni_res}
%\end{align}
\item For any $s\in [0,1]$, we have 
\begin{equation}
G(R,-s)\le  |H_{1-s} (A|E |P_{AE}) - R|^+,\label{eqn:min_uni_res}
\end{equation}
%\item For $s \in [0,1/2]$, we have 
%\begin{align}
%&\lim_{n\to\infty}\frac{1}{n} \sup_{f_{X_n} \in\calU_2 } H_{1-s}^\uparrow(A^n | f_{X_n} (A^n),E^n,X_n | P_{AE}^n\times P_{X_n} )  \nn\\*
%&= \lim_{n\to\infty}\frac{1}{n} \inf_{f_{X_n}  \in\calR } H_{1-s}^\uparrow(A^n | f_{X_n} (A^n),E^n,X_n | P_{AE}^n\times P_{X_n} )  \label{eqn:min_gal_inf} \\
%&=\lim_{n\to\infty} \frac{1}{n}H_{1-s}^\uparrow (A^n  |\barf_{X_n}(A^n) , E^n, X_n | P_{AE}^n\times P_{X_n} ) \label{eqn:min_gal_str}  \\
%&=| H_{1-s}^\uparrow (A|E|P_{AE} )- R|^+.\label{eqn:min_gal_res} 
%\end{align}
 and for any $s \in [0,1/2]$, we have 
\begin{equation}
G^\uparrow(R,-s)\le  | H_{1-s}^\uparrow (A|E|P_{AE} )- R|^+.  \label{eqn:min_gal_res} 
\end{equation}
%For $s \in [1/2,1]$,
%% we have 
%%\begin{align}
%%&\lim_{n\to\infty}\frac{1}{n} \inf_{f_{X_n}\in\calR  } H_{1-s}^\uparrow(A^n | f_{X_n} (A^n),E^n,X_n | P_{AE}^n\times P_{X_n} )  \nn\\*
%%&= \lim_{n\to\infty} \frac{1}{n} H_{1-s}^\uparrow (A^n  |\barf_{X_n}(A^n) , E^n, X_n | P_{AE}^n\times P_{X_n} ) \label{eqn:min_gal_inf_half} \\
%%&= |H_{1-s}^\uparrow (A|E | P_{AE}) - R|^+.\label{eqn:min_gal_inf_half_res}
%%\end{align}
%\begin{equation}
%\lim_{n\to\infty}\frac{1}{n} \inf_{f_{X_n}\in\calR  } H_{1-s}^\uparrow(A^n | f_{X_n} (A^n),E^n,X_n | P_{AE}^n\times P_{X_n} )\le |H_{1-s}^\uparrow (A|E | P_{AE}) - R|^+.\label{eqn:min_gal_inf_half_res}
%\end{equation}
\item For $s\in (0,\infty)$, we have 
\begin{align}
G(R,s)\le \left\{ \begin{array}{cc}
H_{1+s}(A|E|P_{AE}) - R & R \le\hatR_s\\
\max_{ t\in [0,s]} \frac{t}{s} (H_{1+t}(A|E|P_{AE}) - R ) & R >\hatR_s
\end{array} \right.  , \label{eqn:plus_max_res}
\end{align}
and 
\begin{align}
G^\uparrow(R,s)\le \left\{ \begin{array}{cc}
  H_{1+s}^\uparrow(A|E|P_{AE}) - R   & R \le\hatR_s^\uparrow \\
\max_{ t\in [0,s]} \frac{t}{ s } (H_{1  +t|1+s}(A|E|P_{AE}) - R ) & R >\hatR_s^\uparrow .
\end{array} \right.  .\label{eqn:plus_max_uni_str} 
\end{align}
%\item For $s\in [0,\infty)$, we have 
%\begin{align}
%&\liminf_{n\to\infty}\frac{1}{n} \inf_{f_{X_n}\in\calR} H_{1+s}(A^n | f_{X_n} (A^n),E^n,X_n | P_{AE}^n\times P_{X_n} )   \nn\\
%&\ge \left\{  \begin{array}{cc}
%\frac{1}{s}\max_{ t\in [0,\frac{s}{1-s}]} \frac{t}{1+t} (H_{1+t}^\uparrow (A|E|P_{AE}) - R) &   s<1 \\
%\frac{1}{s}\max_{ t\in [0, \infty)} \frac{t}{1+t} (H_{1+t}^\uparrow (A|E|P_{AE}) - R) &   s\ge 1  
%\end{array}\right. , \label{eqn:plus_uni}
%\end{align}
%and 
%\begin{align}
%\liminf_{n\to\infty}\frac{1}{n} \inf_{f_{X_n}\in\calR} H_{1+s}(A^n | f_{X_n} (A^n),E^n,X_n | P_{AE}^n\times P_{X_n} )  \ge \frac{1+s}{s} \max_{t\in [0,s]} \frac{t}{1+t} (H_{1+t}^\uparrow (A|E|P_{AE}) - R). \label{eqn:plus_uni2}
%\end{align}
\end{enumerate}
%Furthermore, the limits in \eqref{eqn:min_inf}, \eqref{eqn:plus_uni}, and \eqref{eqn:plus_uni2} exist and equalities hold when $|\calE|=1$. 
\end{theorem}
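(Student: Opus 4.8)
The plan is to reduce all four bounds to a single master estimate on the worst-case over $\calU_2$ of a suitable moment of the conditional distribution $P_{A^n\mid f_{X_n}(A^n),E^n,X_n}$, and then to single-letterize that estimate via a type-decomposition together with the key averaging property \eqref{eqn:hash} of universal$_2$ hash functions. Concretely, for the non-Gallager quantities $H_{1+s}$ one has $\rme^{-sH_{1+s}} = \mathbb{E}_{X_n}\sum_{\be,m}P(m,\be)\sum_{\ba}Q(\ba\mid m,\be)^{1+s}$, so for $s>0$ (the case in Part~(2)) I want a \emph{lower} bound on this moment, i.e.\ an \emph{upper} bound on $H_{1+s}$; for $s<0$ (Part~(1), writing $s\mapsto -s$ with $s\in[0,1]$) the direction flips and I want an \emph{upper} bound on the moment. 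The first step is therefore to record the one-shot identities for $\rme^{-sH_{1+s}}$ and $\rme^{-\frac{s}{1+s}H_{1+s}^\uparrow}$ in terms of the ``collision'' expression $\sum_{\ba,\tilde\ba:\,f_{X_n}(\ba)=f_{X_n}(\tilde\ba)}(\cdots)$, and to apply the expectation over $X_n$: by \eqref{eqn:hash} with $\epsilon=1$, the probability that two distinct sequences collide is at most $1/M_n$, while the diagonal terms $\ba=\tilde\ba$ contribute exactly $1$. This is the mechanism that converts the hash structure into the ``$-R$'' term in every bound.

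The second step is the single-letterization. After the $X_n$-expectation, the bound on the remaining uncertainty takes the shape $\rme^{-sH_{1+s}} \gtrless \sum_{\ba}P_{AE}^n$-type moments $+ \frac{1}{M_n}\sum_{\ba\ne\tilde\ba}$-type moments, and I decompose the sum over $\ba$ (and pairs $(\ba,\tilde\ba)$ sharing the same conditional type given $\be$) according to the method of types from Section~\ref{sec:types}. Each type class contributes $\rme^{n(\text{entropy})}\cdot\rme^{-n(\text{divergence/energy})}$, and because the number of types is subexponential ($|\calP_n|\doteq 1$) the exponent is governed by a finite-dimensional optimization over conditional types $V_{A\mid E}$ (and $V_{\tilde A\mid E}$). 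Carrying out that optimization is exactly where the critical rates $\hatR_s,\hatR_s^\uparrow$ from \eqref{eqn:crit_rate1}--\eqref{eqn:crit_rate2} enter: the tilted optimizer of $tH_{1+t}(A\mid E)-tR$ over $t\in[0,s]$ is interior when $R>\hatR_s$ (giving the $\max_{t\in[0,s]}\frac{t}{s}(H_{1+t}-R)$ branch) and sits at $t=s$ when $R\le\hatR_s$ (giving the $H_{1+s}-R$ branch); the two-parameter entropy $H_{1+s|1+t}$ of \eqref{eqn:two_param} appears for the Gallager-form case precisely because $H_{1+s}^\uparrow = \max_{Q_E}H_{1+s}(A\mid E\|Q_E)$ via \eqref{eqn:ent_min}, and optimizing the tilt $Q_E$ in \eqref{eqn:Q_tilt} inside the type computation produces the $1+s$ in the second slot. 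For Part~(1), the negative-order moments are handled by the reverse type estimates; here the restriction to $s\in[0,1]$ (resp.\ $s\in[0,1/2]$ for the Gallager form) comes from needing the relevant power map to be concave/convex in the right direction so that Jensen or the monotonicity \eqref{eqn:mono}--\eqref{eqn:cr} can be applied, and from the data-processing inequality \eqref{eqn:dpi_rd} which holds only for $s\in[-1,1]$.

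The translation between $H_{1+s}^\uparrow$ and $H_{1+s}$ should be done by invoking \eqref{eqn:ent_min}: for the upper bounds on $G^\uparrow(R,\pm s)$ I can either work directly with the one-shot identity for $\rme^{-\frac{s}{1+s}H_{1+s}^\uparrow}$ and optimize the inner tilt by type, or bound $H_{1+s}^\uparrow(A^n\mid\cdots) \le H_{1+s}(A^n\mid\cdots\|Q_E^{\otimes n})$ for the product tilted reference and reduce to the non-Gallager computation with $Q_E$ in place of $P_E$ — the latter is cleaner and explains why the answer is stated with $H_{1-s}^\uparrow$ and $H_{1+t|1+s}$ rather than their un-arrowed counterparts. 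Finally, the $|\cdot|^+$ in \eqref{eqn:min_uni_res}--\eqref{eqn:min_gal_res} is simply the statement that a conditional R\'enyi entropy is nonnegative, so once the exponent of the remaining uncertainty is shown to be at least $0$ and the ``slope'' term is $H_{1-s}-R$, the bound follows.

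I expect the main obstacle to be the type-optimization in Part~(2): showing that the maximum of the type-exponent is attained on the one-parameter tilted family $\{t\in[0,s]\}$ (so that the bound collapses to the clean two-branch form with the threshold at $\hatR_s$ or $\hatR_s^\uparrow$) requires verifying concavity of $t\mapsto tH_{1+t}(A\mid E\mid P_{AE})$ — which is stated in Section~\ref{sec:info_measures} — \emph{and} checking that the cross term coming from the $\frac{1}{M_n}\sum_{\ba\ne\tilde\ba}$ contribution does not dominate; controlling that cross term against the diagonal term is precisely what forces the careful bookkeeping of conditional types $(V_{A\mid E},V_{\tilde A\mid E})$ and is where the moments-of-type-class-enumerator technology is genuinely needed. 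The negative-order case, by contrast, should be comparatively routine once the sign conventions and the admissible ranges of $s$ are pinned down.
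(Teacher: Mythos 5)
Your skeleton---writing $\rme^{-sH_{1+s}}$ (resp.\ $\rme^{sH_{1-s}}$) as a moment containing the collision sum $\sum_{\tilde\ba\in f_{X_n}^{-1}(f_{X_n}(\ba))}(\cdots)$, converting the universal$_2$ property into the $-R$ term through the first moment of that sum, moving $\bbE_{X_n}$ inside the power by Jensen, and then single-letterizing the resulting i.i.d.\ expressions---is essentially the paper's route; the only cosmetic difference is that the paper single-letterizes by Cram\'er's theorem on cumulant generating functions (which is how the thresholds $\hatR_s,\hatR_s^\uparrow$ and the two branches emerge) rather than by a type decomposition, an equivalent choice for $P_{AE}^n$. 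However, two of your specific steps would fail. First, the ``cleaner'' reduction you prefer for the Gallager-form bounds rests on $H_{1+s}^\uparrow(A^n|\cdots)\le H_{1+s}(A^n|\cdots\|Q_E^{\otimes n})$, which is the wrong direction: by \eqref{eqn:ent_min}, $H_{1+s}^\uparrow$ is the \emph{maximum} over reference distributions, so it is $\ge$ the value at any fixed product reference, and the actual maximizer \eqref{eqn:Q_tilt} for the conditioning system $(f_{X_n}(A^n),E^n,X_n)$ is not a product distribution. Hence this reduction cannot yield the upper bounds \eqref{eqn:min_gal_res} and \eqref{eqn:plus_max_uni_str}; you must take your other option and work directly with the one-shot expression for $\rme^{-\frac{s}{1+s}H_{1+s}^\uparrow}$ (resp.\ $\rme^{\frac{s}{1-s}H_{1-s}^\uparrow}$), in which the two-parameter quantity $H_{1+t|1+s}$ arises as the cumulant generating function of $\log\bigl[P_{A|E}^n(\ba|\be)\bigl(\sum_{\tilde\ba}P_{A|E}^n(\tilde\ba|\be)^{1+s}\bigr)^{-\frac{1}{1+s}}\bigr]$, not from optimizing a tilt $Q_E$.

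Second, your plan to control the off-diagonal term by tracking pairs of conditional types and invoking moments of the type-class enumerator is both unavailable and unnecessary here. For the \emph{worst case} over all ($\epsilon$-almost) universal$_2$ hash functions you only know the first moment of the collision enumerator (the pairwise bound \eqref{eqn:hash}), not its higher or fractional moments, so enumerator-moment arguments cannot be run; that machinery is needed in this paper only for the converse for \emph{strongly} universal hash functions in Theorem~\ref{cor:threshold}, where the enumerator is a sum of i.i.d.\ indicators. The achievability bounds need only the first moment precisely because Jensen goes the right way: $t\mapsto t^{-s}$ is convex for $s>0$ (lower-bounding $\rme^{-sH_{1+s}}$), $t\mapsto t^{s}$ is concave for $s\in[0,1]$, and $t\mapsto t^{s/(1-s)}$ is concave only for $s\le 1/2$---which is the true source of the ranges in Part~(1), not the data-processing inequality or the monotonicity/chain-rule properties \eqref{eqn:mono}--\eqref{eqn:cr}. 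With these two corrections, the remaining $t$-optimizations producing the $\hatR_s$ and $\hatR_s^\uparrow$ thresholds are as you describe and coincide with the paper's argument.
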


Theorem \ref{thm:rem} is proved in Section \ref{sec:prf_rem} and uses several novel one-shot bounds on the remaining uncertainty (summarized in Appendix \ref{app:one-shot}) coupled with appropriate uses of   large-deviation results such as Cram\'er's theorem  and Sanov's theorem~\cite{Dembo}.

\begin{figure}[t]
\centering
\begin{tabular}{cc}
\includegraphics[width = .475\columnwidth]{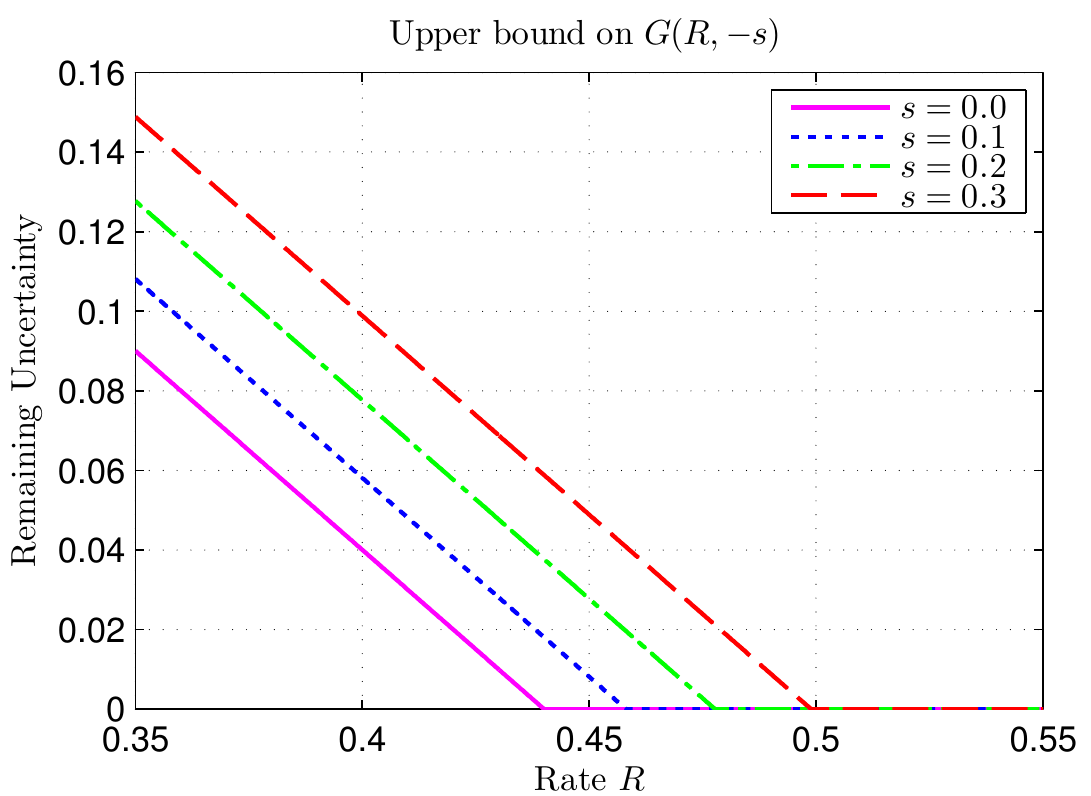} &
\includegraphics[width = .475\columnwidth]{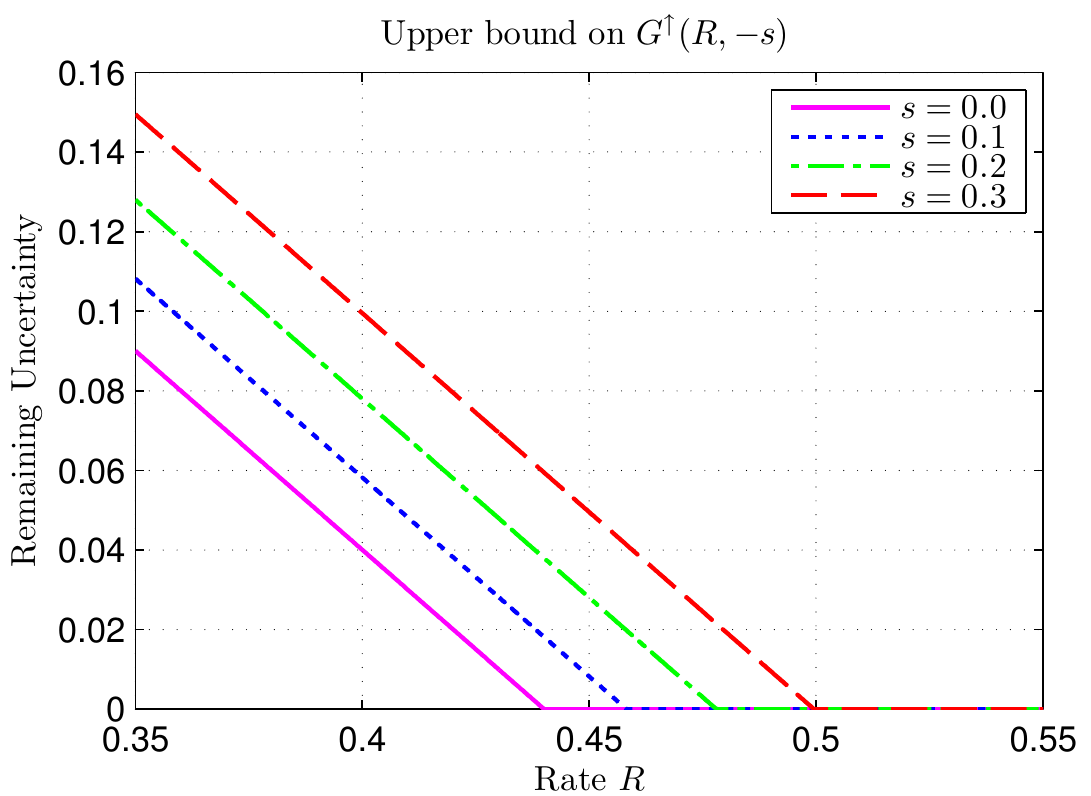} 
\end{tabular}
\caption{Illustration of the upper bounds on the remaining uncertainties $G (R,   -s) $ and $G^\uparrow (R,   -s) $ in \eqref{eqn:min_uni_res} and \eqref{eqn:min_gal_res} respectively. All curves transition from a positive quantity to zero at the R\'enyi conditional entropies $H_{1-s}(A|E|P_{AE})$ (left) and $H_{1-s}^{\uparrow}(A|E|P_{AE})$ (right).}
\label{fig:rm_simple}
\begin{tabular}{cc}
\includegraphics[width = .475\columnwidth]{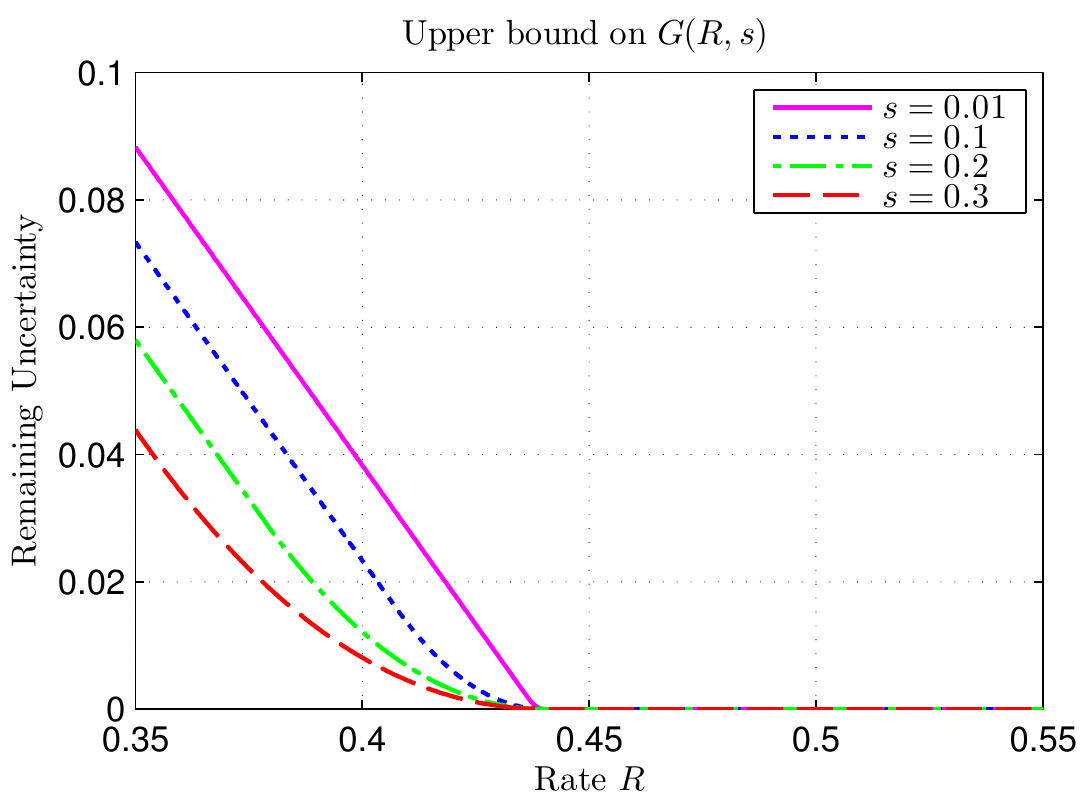} &
\includegraphics[width = .475\columnwidth]{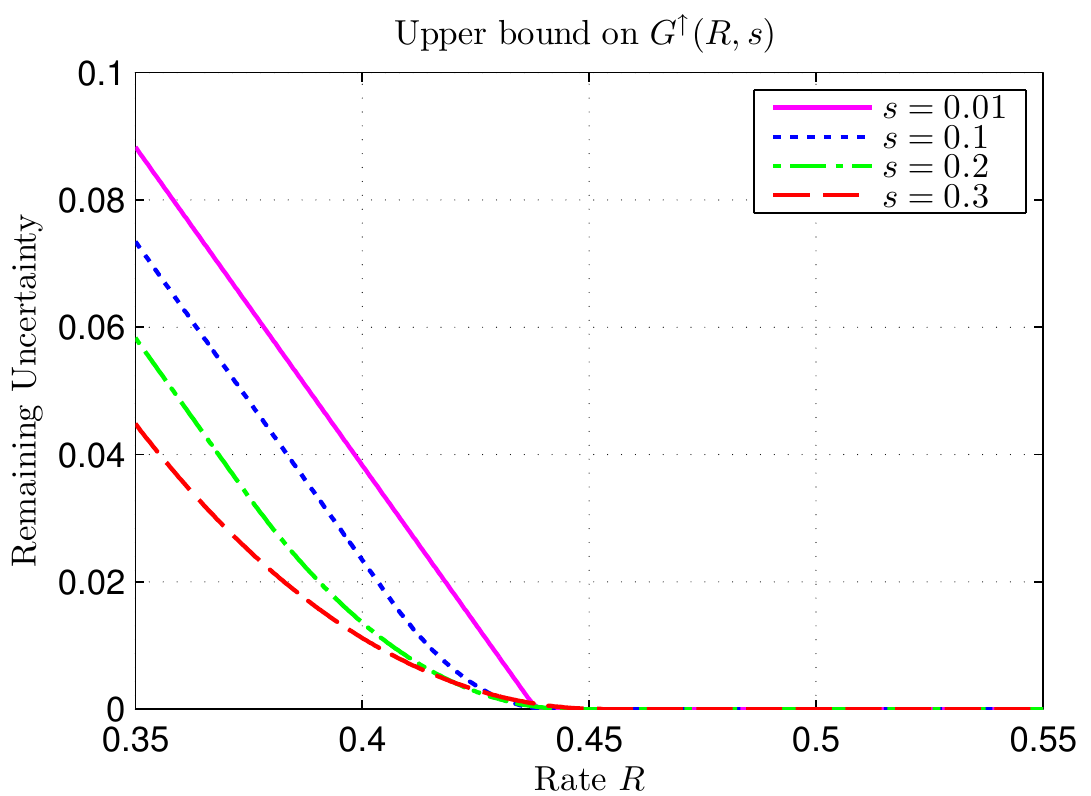} 
\end{tabular}
\caption{Illustration of the upper bounds on the remaining uncertainties $G (R,   s) $ and $G^\uparrow (R,   s) $ in \eqref{eqn:plus_max_res} and \eqref{eqn:plus_max_uni_str} respectively. All curves transition from a positive quantity to zero at the  conditional Shannon entropy $H(A|E|P_{AE})\approx 0.44$ nats.}
\label{fig:rm}
\end{figure}

In Figs.~\ref{fig:rm_simple} and~\ref{fig:rm}, we plot the upper bounds in  \eqref{eqn:min_uni_res}--\eqref{eqn:plus_max_uni_str}  for a correlated source $P_{AE} \in\calP( \{0,1\}^2)$ with $P_{AE}(0,0) = 0.7$ and $P_{AE}(0,1)=P_{AE}(1,0)= P_{AE}(1,1)=0.1$.   For the upper bounds in  \eqref{eqn:min_uni_res} and \eqref{eqn:min_gal_res}, we see from Fig.~\ref{fig:rm_simple} that the rates at which the curves  transition from a positive quantity to zero are clearly the  conditional R\'enyi entropies $H_{1-s}(A|E|P_{AE})$ and $H_{1-s}^\uparrow(A|E|P_{AE})$. In contrast, from  Fig.~\ref{fig:rm}, we observe that the rates at which the normalized remaining uncertainties  transition from positive quantities to zero are the same and are equal to the  conditional Shannon entropy $H(A|E|P_{AE})\approx 0.44$ nats.

\subsection{Optimal  Rates for Vanishing Remaining Uncertainties} \label{sec:opt_rates}
 The tightness of the bounds in Theorem \ref{thm:rem} is partially addressed in the following theorem where we are  concerned with the minimum compression rates $R$ such that the various normalized remaining uncertainties  tend to zero. %The proof of Theorem \ref{cor:threshold} shows that the upper bound in  \eqref{eqn:min_uni_res} is, in fact, an equality for all sufficiently small $s$. 
%  Among other results, the proof of Theorem \ref{cor:threshold} shows that  the upper bound in \eqref{eqn:min_gal_res}  is tight for all $s \in [0,1/2] $, i.e.,   $G^\uparrow(R,-s)=| H_{1-s}^\uparrow (A|E|P_{AE} )- R|^+$.  %This set of results also serves as a converse to that in Theorem \ref{thm:rem}.  %WE MIGHT WANT TO DISCUSS THE CONVERSE IN WORDS HERE. 

\begin{figure}
\centering 
\begin{tabular}{ccc}
\includegraphics[width = .30\columnwidth]{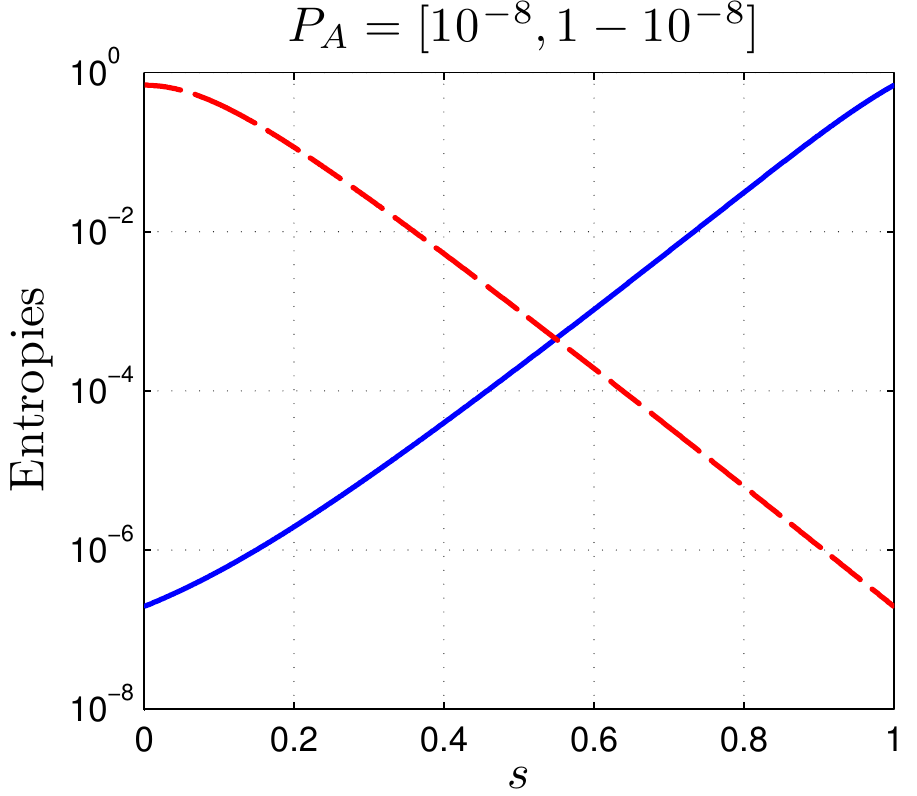}   & 
\includegraphics[width = .30\columnwidth]{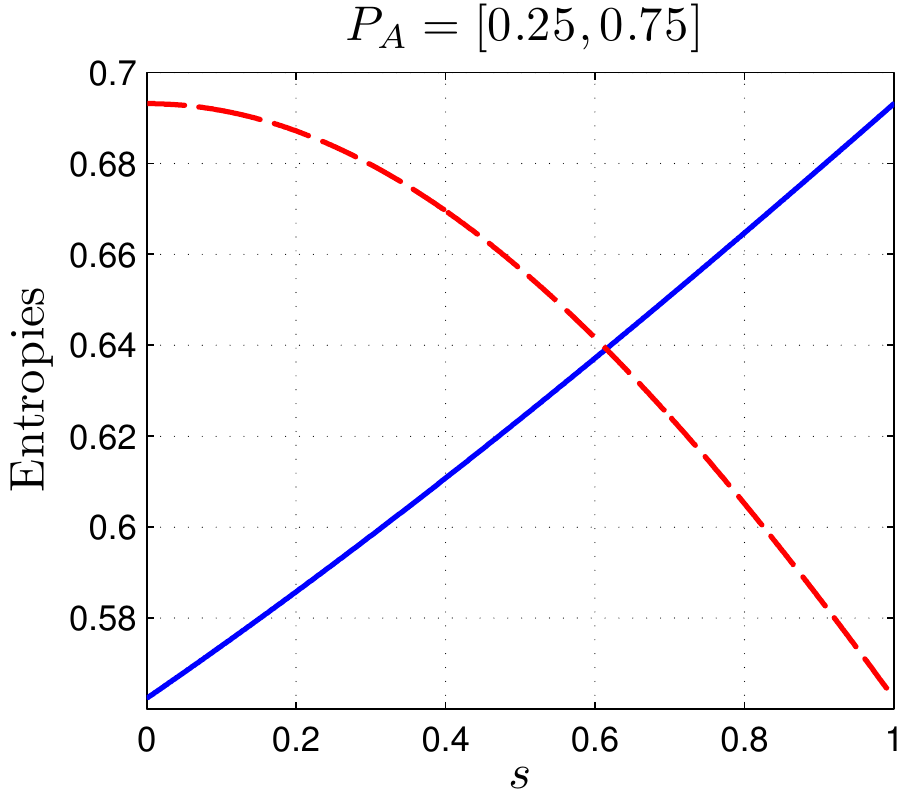}   & 
\includegraphics[width = .30\columnwidth]{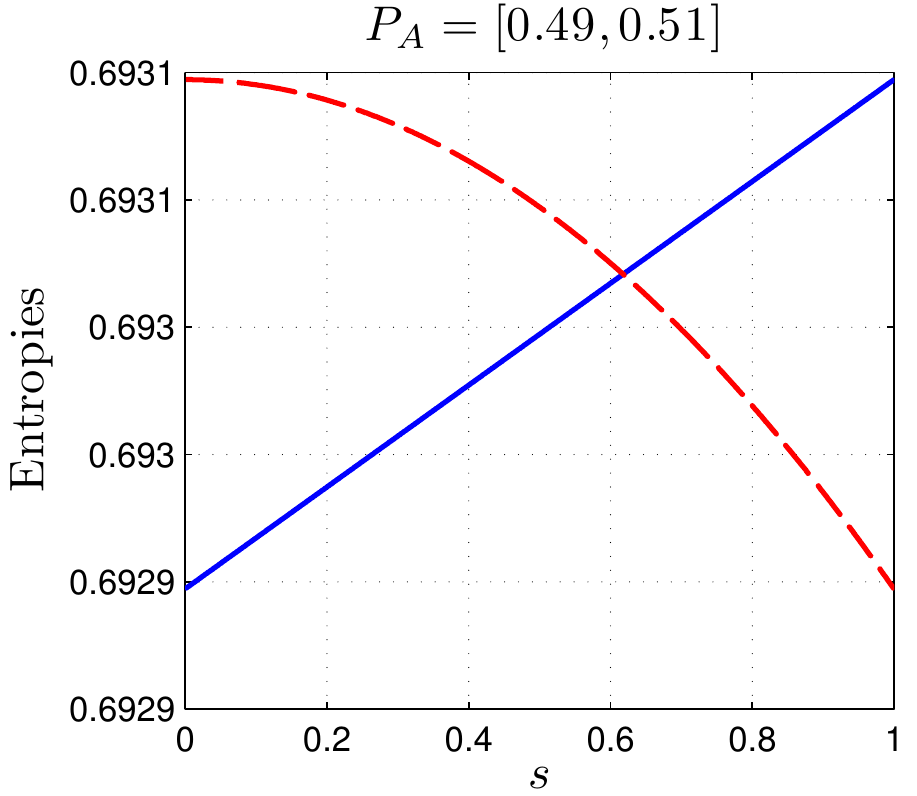}      
\end{tabular}
\caption{Plots of $H_{1-s}(A|P_{A})$ (solid) and  $H(A|P_A^{(s-1)})$ (dotted) for $P_A  \in \calP( \{ 0,1\})$ where $P_A(0) = 10^{-8}$ (left; almost deterministic), $P_A(0) = 0.25$ (middle) and $P_A(0)=0.49$ (right; almost uniform). For the three cases, $s_0(A|P_A) \approx  0.549$, $s_0(A|P_A)\approx 0.615$ and $s_0(A|P_A)\approx 0.618$ respectively.  }
\label{fig:ce}
\end{figure}
To state the next result succinctly, we require a few additional definitions. Let $P_A\in\calP(\calA)$ be a given distribution.  Let $\gamma (t) := t H_{1+t}(A|P_A) =-\log\sum_a P_A(a)^{1+t}$ and let $P_A^{(t)}(a):= P_A(a)^{1+t} \rme^{\gamma (t)}$ be a {\em tilted distribution}\footnote{$P_A^{(t)}(\fndot)$ is indeed a  valid distribution as $\sum_a P_A^{(t)}(a)=1$.} relative to $P_A$.   Define 
\begin{equation}
s_{0}(A|P_A):= \max\big\{ s \in [0,1] : H_{1-s}(A|P_{A}) \le  H(A|P_A^{(s-1)})  \big\} . \label{eqn:def_s0A}
\end{equation}
%Observe that if $A\sim P_A$ is {\em not}  uniformly distributed on $\calA$, 
We claim that  $s_0(A|P_A)$ is always positive; this is because $s\mapsto  H_{1-s}(A|P_{A})$ and $s\mapsto H(A|P_A^{(s-1)}) $ are   continuous   and 
\begin{align}
H_{1-s}(A|P_A)  &= \left\{ \begin{array}{cc}
H(A|P_A) & s = 0\\
\log|\calA| & s =1 
\end{array}
 \right.  , \quad \mbox{and} \\
 H (A|P_A^{(s-1)})  &= \left\{ \begin{array}{cc}
\log|\calA|& s = 0\\
H(A|P_A)  & s =1 
\end{array}
 \right. .
\end{align}
If $A\sim P_A$ is not uniform on $\calA$, $s_0(A|P_A) \in (0,1)$. In fact  since $s\mapsto H_{1-s}(A|P_{A})$ and $s\mapsto H(A|P_A^{(s-1)})$ are monotonically increasing  and decreasing\footnote{Intuitively, $H(A|P_A^{(s-1)}) $ is monotonically decreasing because as $s$ increases, $P_A^{(s-1)}$ converges to a deterministic distribution, which has the lowest Shannon entropy $0$.}  respectively, $s_0(A|P_A) \in (0,1)$ can also be expressed as the {\em unique} solution to the equation $H_{1-s}(A|P_{A}) =  H(A|P_A^{(s-1)})$.  If 
% at $s=0$, the former $H( A|P_A)$ is the Shannon entropy of $A$ while the latter  $H( A|P_{-1})=\log |\calA|$ is the Shannon entropy of the uniform random variable on $\calA$ (the max-entropy) so it is strictly larger. 
 $A\sim P_A$ is uniform on $\calA$, $H_{1-s}(A|P_{A}) =  H(A|P_A^{(s-1)})$ for any $s\in [0,1]$ and as such, $s_0(A|P_A)=1$. 
 %Hence, there always exists an $s_0(A|P_A)>0$ such that $H_{1-s}(A|P_{A}) \le  H(A|P_A^{(s-1)})$ holds for all $s\in [0,s_0(A|P_A)]$. 
% Furthermore, when  $A\sim P_A$ is not uniform, 
 See Fig.~\ref{fig:ce} for   illustrations of these arguments.  Now, given $P_{AE} \in \calP(\calA\times\calE)$, define
\begin{equation}
s_0=s_{0}(A|E |P_{AE}) := \min\{ s_{0}(A|P_{A|E=e}) : e\in\calE \}. \label{eqn:def_s0}
\end{equation}
Clearly, by the preceding arguments and the fact that $\calE$ is a finite set, $s_0$ is  positive.% if  {\em none} of the the random variables $A|\{E=e\}\sim P_{A|E=e}$ for any $e\in\calE$ is uniform on $\calA$. We make this assumption for the lower bound in  \eqref{eqn:optkey1_s0} in Theorem \ref{cor:threshold}.

\begin{theorem}[Optimal   Rates for Vanishing Normalized Remaining Uncertainties]  \label{cor:threshold}
For each $n\in\bbN$, let the size   of the  range of $f_{X_n}$ be $M_n=\rme^{nR}$. 
Define the best-case    limiting  normalized remaining uncertainties  over all random hash functions  as
\begin{align}
\widetilde{G}(R,s) &:= \varliminf_{n\to\infty}\frac{1}{n}\inf_{f_{X_n} \in\calR} H_{1 + s} ( A^n | f_{X_n}(A^n) , E^n, X_n | P_{AE}^n\times P_{X_n} ),\quad\mbox{and} \label{eqn:Gdef_inf}\\
\widetilde{G}^\uparrow(R,s)&:= \varliminf_{n\to\infty}\frac{1}{n}\inf_{f_{X_n} \in\calR} H_{1 + s}^\uparrow ( A^n | f_{X_n}(A^n) , E^n, X_n | P_{AE}^n\times P_{X_n} ) \label{eqn:Gdef_up_inf} .
\end{align}
Also define the limiting  normalized remaining uncertainty  for strongly universal hash functions  $\barf_{X_n} : \calA^n\to\{1,\ldots, M_n\}$ as
\begin{equation}
\overline{G}(R,s)   := \varliminf_{n\to\infty}\frac{1}{n}H_{1 + s}( A^n | \barf_{X_n} (A^n), E^n, X_n | P_{AE}^n\times P_{X_n} )  .\label{eqn:Gdef_inf_str} 
\end{equation}
%Assume that all the random variables  $A|\{E=e\},e\in\calE$ are not uniform.
Now define the optimal compression rates 
\begin{align}
 T_s  &:= \inf\{ R \in\bbR: {G}(R,s) =0 \},\\
\widetilde{T}_s & := \inf\{ R\in\bbR: \widetilde{G}(R, s) =0 \},\\
 T_s^\uparrow  &:= \inf\{ R\in\bbR: {G}^\uparrow(R,s) =0 \},\\
\widetilde{T}^\uparrow_s & := \inf\{ R\in\bbR: \widetilde{G}^\uparrow(R, s) =0 \},\quad\mbox{and} \\
\overline{T}_s & := \inf\{ R\in\bbR: \overline{G}(R, s) =0 \}.
\end{align}
\begin{enumerate}
\item For $s\in [0,1]$, we have 
 \begin{align}
T_{-s} \le     H_{1-s}(A|E|P_{AE}) .  \label{eqn:optkey1}
  \end{align}
and for $s\in [ 0,s_0]$, we have 
\begin{equation}
\overline{T}_{-s}  \ge   H_{1-s}(A|E|P_{AE})    .\label{eqn:optkey1_s0}
\end{equation}
\item For $s\in ( 0,1)$, we have 
 \begin{align}
T_s =  \widetilde{T}_s  &=  H(A|E|P_{AE}). \label{eqn:optkey1a} 
  \end{align}
\item   For $s\in [0,1/2]$, we have 
  \begin{align}
 {T}^\uparrow_{-s}  = \widetilde{T}^\uparrow_{-s}   =  H_{1-s}^\uparrow(A|E|P_{AE}) ,  \label{eqn:optkey2} 
 \end{align}
and for $s\in [0,\infty)$, we have 
 \begin{align}
 {T}^\uparrow_{ s}  = \widetilde{T}^\uparrow_{ s}  =  H(A|E|P_{AE}). \label{eqn:optkey2a} 
% \left\{  \begin{array}{cc}
% H_{1-s}(A|E|P_{AE})  & s \in [-1,0] \\
% H(A|E|P_{AE})  & s \in [0,\infty) 
%\end{array}  \right.   \label{eqn:optkey1} , \\
% \inf\{ R: G^\uparrow(R,s) =0 \}  = \inf\{ R: \widetilde{G}^\uparrow(R,s) =0 \} &= \left\{  \begin{array}{cc}
% H_{1-s}^\uparrow(A|E|P_{AE})  & s \in [-1/2,0] \\
% H(A|E|P_{AE})  & s \in [0,\infty) 
%\end{array}  \right.   \label{eqn:optkey2} .
 \end{align}
 \end{enumerate}
 \end{theorem}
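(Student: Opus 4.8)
The theorem is a list of matching achievability (upper) and converse (lower) bounds for the optimal compression rates, and the plan is to treat the two directions separately. First I would dispose of the elementary monotonicities $\widetilde G(R,\cdot)\le G(R,\cdot)$ and $\overline G(R,\cdot)\le G(R,\cdot)$ — valid because $\inf_{\calR}\le\inf_{\calU_2}\le\sup_{\calU_2}$ and because the strongly universal functions form a subclass of $\calU_2$ (Fig.~\ref{fig:venn}) — which, together with nonnegativity of all the conditional R\'enyi entropies appearing here, give $\widetilde T_s\le T_s$, $\widetilde T_s^\uparrow\le T_s^\uparrow$ and $\overline T_{-s}\le T_{-s}$ for free; so in each item it then suffices to prove one achievability bound and one matching converse. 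The achievability bounds will come from Theorem~\ref{thm:rem} for the negative-order quantities and from the universal$_2$-hashing achievability of Slepian--Wolf coding combined with Fano's inequality for the positive-order ones. The converse bounds split into three groups by technique: (i) the thresholds equal to $H(A|E|P_{AE})$, via the strong converse exponent for Slepian--Wolf coding \cite{Dueck79,Oohama94} and the monotonicity of $s\mapsto H_{1+s}^\uparrow$; (ii) the threshold $H_{1-s}^\uparrow(A|E|P_{AE})$, via the chain rule \eqref{eqn:cr} and additivity of $H_{1+s}^\uparrow$ on product sources; and (iii) the threshold $H_{1-s}(A|E|P_{AE})$ for strongly universal hash functions, which is the crux and is handled by the moments of type class enumerator method.

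\textbf{Achievability.} For the negative-order cases the right-hand sides of \eqref{eqn:min_uni_res} and \eqref{eqn:min_gal_res} already vanish once $R\ge H_{1-s}(A|E|P_{AE})$, respectively $R\ge H_{1-s}^\uparrow(A|E|P_{AE})$, which yields $T_{-s}\le H_{1-s}(A|E|P_{AE})$ and $T_{-s}^\uparrow\le H_{1-s}^\uparrow(A|E|P_{AE})$. For the positive-order cases I would argue directly rather than through \eqref{eqn:plus_max_res}--\eqref{eqn:plus_max_uni_str}: since $H_{1+s}\le H_1=H$ and $H_{1+s}^\uparrow\le H_1^\uparrow=H$ (conditional Shannon entropy) for $s\ge 0$, and since the averaged MAP error probability $\bbE_{X_n}[\rmP_{\rme}^{(n)}(f_{X_n})]$ of a universal$_2$ family tends to $0$ for every $R>H(A|E|P_{AE})$, Fano's inequality together with concavity of the binary entropy gives $\frac1n H(A^n|f_{X_n}(A^n),E^n,X_n|P_{AE}^n\times P_{X_n})\to 0$; hence $G(R,s)=G^\uparrow(R,s)=0$ for $R>H(A|E|P_{AE})$, i.e.\ $T_s\le H(A|E|P_{AE})$ and $T_s^\uparrow\le H(A|E|P_{AE})$.

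\textbf{Converse for the $H(A|E|P_{AE})$- and $H_{1-s}^\uparrow(A|E|P_{AE})$-thresholds.} For $s\ge 0$, monotonicity of $s\mapsto H_{1+s}^\uparrow$ and \eqref{eqn:min_gal}, \eqref{eqn:str_conv_exp} give $H_{1+s}^\uparrow(A^n|f_{X_n}(A^n),E^n,X_n)\ge H_\infty^\uparrow(A^n|f_{X_n}(A^n),E^n,X_n)=-\log\sum_x P_{X_n}(x)\rmP_{\rmc}^{(n)}(f_x)$, and by the (uniform) strong converse exponent for Slepian--Wolf coding one has $\rmP_{\rmc}^{(n)}(f)\le e^{-n\delta(R)+o(n)}$ for all encoders of rate $\le R$ whenever $R<H(A|E|P_{AE})$, with $\delta(R)>0$; hence $\widetilde G^\uparrow(R,s)\ge\delta(R)>0$, so $\widetilde T_s^\uparrow\ge H(A|E|P_{AE})$, and with the achievability this proves \eqref{eqn:optkey2a}. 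For the non-Gallager quantity with $s\in(0,1)$ I would first record the elementary inequality $H_{1+s}(A|E|P_{AE})\ge H_\infty^\uparrow(A|E|P_{AE})$ — from $\sum_a P_{A|E}(a|e)^{1+s}\le(\max_a P_{A|E}(a|e))^{s}$ and Jensen's inequality for the concave map $x\mapsto x^s$ — apply it per realization $X_n=x$ and use that $e^{-sH_{1+s}(A^n|f_{X_n}(A^n),E^n,X_n)}$ averages over $X_n$, and then rerun the same strong-converse argument to get $\widetilde T_s\ge H(A|E|P_{AE})$, which with the achievability gives \eqref{eqn:optkey1a}. For \eqref{eqn:optkey2}, I would apply the chain rule \eqref{eqn:cr} (valid for order $1-s$ since $s\le 1$) at each $X_n=x$, together with additivity $H_{1-s}^\uparrow(A^n|E^n|P_{AE}^n)=nH_{1-s}^\uparrow(A|E|P_{AE})$, to get $H_{1-s}^\uparrow(A^n|f_x(A^n),E^n|P_{AE}^n)\ge n(H_{1-s}^\uparrow(A|E|P_{AE})-R)$; averaging over $X_n$ (which, by the defining formula of $H^\uparrow$ and monotonicity of the exponential, only raises the value past $\min_x$) yields $\widetilde G^\uparrow(R,-s)\ge H_{1-s}^\uparrow(A|E|P_{AE})-R>0$ for $R<H_{1-s}^\uparrow(A|E|P_{AE})$, so $\widetilde T_{-s}^\uparrow\ge H_{1-s}^\uparrow(A|E|P_{AE})$ and \eqref{eqn:optkey2} follows.

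\textbf{The main obstacle: the strongly universal converse \eqref{eqn:optkey1_s0}.} The bound $\overline T_{-s}\ge H_{1-s}(A|E|P_{AE})$ for $s\in[0,s_0]$ is where the real work lies: monotonicity ($H_{1-s}\ge H_1=H$) together with the standard Slepian--Wolf equivocation result ($\frac1n H(A^n|\barf_{X_n}(A^n),E^n,X_n)\to|H(A|E|P_{AE})-R|^+$ for random binning) only delivers the weaker $\overline T_{-s}\ge H(A|E|P_{AE})$, and no chain rule is available for the non-Gallager form. Instead I would expand, for $s\in(0,1)$,
\[
e^{sH_{1-s}(A^n|\barf_{X_n}(A^n),E^n,X_n)}=\bbE_{X_n}\!\left[\sum_{m,\be}P_E^n(\be)\Big(\sum_{\ba\in B_m}P_{A|E}^n(\ba|\be)\Big)^{\!s}\sum_{\ba\in B_m}P_{A|E}^n(\ba|\be)^{1-s}\right],
\]
where $B_m=B_m(X_n)$ is the $m$-th bin of the random binning $\barf_{X_n}$, and evaluate the expectation by the moments of type class enumerator method \cite{merhav08,merhav14,merhav_FnT,kaspi11,merhav13}: grouping $\ba$ according to its conditional type $Q_{A|E}$ given a type-$Q_E$ string $\be$, the number of type-$Q_{A|E}$ sequences landing in a fixed bin concentrates around $e^{n(H_Q(A|E)-R)}$ when $H_Q(A|E)>R$ (here $H_Q(A|E)$ is the conditional Shannon entropy under the type $Q_{AE}$) and is $0$ or $1$ otherwise, leaving a large-deviations optimization over $Q_{AE}$. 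The technical heart is to show that for $R<H_{1-s}(A|E|P_{AE})$ the optimizing type gives a strictly positive exponent, and that this is guaranteed exactly by $s\le s_0$: the defining inequality $H_{1-s}(A|P_{A|E=e})\le H(A|P_{A|E=e}^{(s-1)})$ in \eqref{eqn:def_s0A}--\eqref{eqn:def_s0} decides which tilted type of $A^n$ — the ``under-populated-bin'' contribution from an atypical type versus the ``concentrated-bin'' contribution — dominates the remaining uncertainty, so that the transition point is precisely $H_{1-s}(A|E|P_{AE})$ and not something smaller (which is why only a one-sided bound, valid on $[0,s_0]$, is claimed). Combining with $\overline T_{-s}\le T_{-s}\le H_{1-s}(A|E|P_{AE})$ then pins down $\overline T_{-s}=T_{-s}=H_{1-s}(A|E|P_{AE})$ on $[0,s_0]$, and the analogous combination of $\widetilde T\le T$ with the item-(i) converses completes parts (2) and (3). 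I expect the bookkeeping in this type-enumerator computation — tracking the two competing regimes and verifying the exact role of $s_0$ — to be the main difficulty; the remaining items are essentially immediate given Theorem~\ref{thm:rem} and the cited strong-converse and achievability results for Slepian--Wolf coding.
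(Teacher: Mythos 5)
Your overall architecture coincides with the paper's: achievability for the negative orders from Theorem~\ref{thm:rem}, the exponential strong converse of Oohama--Han via $H_{\infty}^{\uparrow}$ and \eqref{eqn:str_conv_exp} for the $+s$ thresholds, the Fehr--Berens monotonicity/chain rule \eqref{eqn:mono}--\eqref{eqn:cr} for $\widetilde{T}^{\uparrow}_{-s}$, and the moments-of-type-class-enumerator method for $\overline{T}_{-s}$. Two sub-steps differ, both legitimately. (a) For $T_s,T_s^{\uparrow}\le H(A|E|P_{AE})$ you invoke monotonicity in the order plus Fano's inequality and the classical fact that \emph{every} universal$_2$ family drives the averaged MAP error to zero above $H(A|E|P_{AE})$; this works (the random-coding bound uses only the collision property, hence is uniform over $\calU_2$), but it imports an achievability result not contained in the paper, whereas the paper reads the same conclusion off \eqref{eqn:plus_max_res} and \eqref{eqn:plus_max_uni_str} directly. (b) Your elementary derivation of $H_{1+s}\ge H_{\infty}^{\uparrow}$ for $s\in(0,1)$ (from $\sum_a P_{A|E}(a|e)^{1+s}\le(\max_a P_{A|E}(a|e))^{s}$ and Jensen) replaces the paper's citation of Hayashi's inequality $H_{1+s}\ge H^{\uparrow}_{1/(1-s)}$ followed by monotonicity; it is a self-contained shortcut, and the remainder of that converse (uniformity of the strong-converse exponent over encoders, averaging over $X_n$) is handled as in the paper.

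The one place where the proposal is not yet a proof is exactly the step you yourself flag as the ``technical heart'' of \eqref{eqn:optkey1_s0}. You have the correct starting identity and the correct method, but the decisive estimates are only described, not carried out: one must lower-bound the $s$-th moment of the bin enumerator $N_Q$ by $(\bbE[N_Q])^{s}$ when $H(Q)>R$ (a concentration bound for the binomial count, the paper's Appendix~C) and by $\Pr(N_Q=1)\doteq\bbE[N_Q]$ when $H(Q)\le R$, and then compare the two resulting exponents over types. That comparison reduces to whether $s-1\le t_R$, where $t_R$ is the tilting parameter with $H(P_A^{(t_R)})=R$, and it is precisely $R<H_{1-s}$ together with $s\le s_0$ (i.e.\ $H_{1-s}(A|P_{A|E=e})\le H(A|P_{A|E=e}^{(s-1)})$ for every $e$, cf.\ \eqref{eqn:def_s0A}--\eqref{eqn:def_s0}) that forces the ``concentrated-bin'' regime to dominate, yielding the exponent $s(H_{1-s}(A|E|P_{AE})-R)$ rather than the smaller one coming from under-populated types. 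One must also perform the $|\calE|>1$ aggregation: condition on the type $Q_E$ of $\be$, pay the $D(Q_E\|P_E)$ cost, and check that the optimizing $Q_E^{*}$ (a tilting of $P_E$ by $\rme^{sH_{1-s}(A|P_{A|E=e})}$) returns exactly $s|H_{1-s}(A|E|P_{AE})-R|^{+}$. Without these verifications, the assertion that ``the transition point is precisely $H_{1-s}$'' is a statement of the goal, not a derivation --- although, to be fair, for the threshold claim alone it would suffice to show the lower-bound exponent is strictly positive for every $R<H_{1-s}(A|E|P_{AE})$, and the route you sketch is the one the paper executes in Section~\ref{sec:lb1} and Appendix~\ref{app:exp}.
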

The proof of this result is provided in Section \ref{sec:threshold_prf}. 

 For Part (1)  of the above result, unfortunately, we do not have a matching lower bound to $T_{-s} $. However, for $s\in [0,s_0]$, the bound in~\eqref{eqn:optkey1_s0} says that   restricted to the important class of strongly universal hash functions (e.g., the   ubiquitous random binning procedure~\cite{cover75}), the result in \eqref{eqn:optkey1} is tight as there is a matching lower bound. Hence, \eqref{eqn:optkey1_s0} serves as a ``partial converse''   to~\eqref{eqn:optkey1}. In other words,  \eqref{eqn:optkey1} is {\em tight with respect to the ensemble average}~\cite{Gallager73}  when the ensemble is chosen to be  a  strongly universal hash function. 
 
The equalities in \eqref{eqn:optkey1a}--\eqref{eqn:optkey2a} imply  that in the specified ranges of $s$, the optimal rates for the best-case remaining uncertainty over all hash functions and worst-case remaining uncertainty over all universal$_2$ hash functions are the same. It is interesting to 
observe that the   optimal rate  for the $-s$ case in~\eqref{eqn:optkey2} depends on $s \in [0,s_0]$ but the optimal rates for the $+s$ cases  in~\eqref{eqn:optkey1a} and~\eqref{eqn:optkey2a} do not. This is also clearly observed in Figs.~\ref{fig:rm_simple} and~\ref{fig:rm}.  

The proofs of the achievability parts (upper bounds) of these results follow directly from Theorem \ref{thm:rem}.  For the converse parts (lower bounds), we appeal to the method of types~\cite[Ch.~2]{Csi97}, the moments of type class enumerator method~\cite{merhav08, merhav14,  merhav_FnT, kaspi11,merhav13}, and the exponential strong converse for Slepian-Wolf coding~\cite[Theorem~2]{Oohama94}. We also exploit  a result by Fehr and Berens~\cite[Theorem~3]{fehr} concerning the monotonicity~\eqref{eqn:mono} and chain rule~\eqref{eqn:cr}   for Gallager form of the conditional R\'enyi entropy $H_{1-s}^\uparrow(A|E|P_{AE})$.

 \subsection{Exponential Rates of Decrease of Remaining Uncertainties}
Lastly, we consider the rate of exponential decrease of the various worst-case remaining uncertainties.

%\section{Exponents of the Remaining Uncertainties} \label{sec:exponents}
%In this section we present our   results concerning the exponential decay of the remaining uncertainties. Throughout, we assume that  the number of compressed symbols $M_n = \rme^{nR}$. 

\begin{theorem}[Exponents of Remaining Uncertainties] \label{thm:exponents}
For each $n\in\bbN$, let the size of the  range of $f_{X_n}$ be $M_n=\rme^{nR}$. Fix a joint distribution $P_{AE}\in\calP(\calA\times\calE)$.  Define the exponents  of \eqref{eqn:Gdef} and \eqref{eqn:Gdef_up} as
\begin{align}
E(R,s) &:= \varliminf_{n\to\infty}-\frac{1}{n}\log \left[\sup_{f_{X_n} \in\calU_2} H_{1 + s} ( A^n | f_{X_n}(A^n) , E^n, X_n | P_{AE}^n\times P_{X_n} ) \right],\quad\mbox{and} \label{eqn:def_exp}\\
E^\uparrow(R,s)&:= \varliminf_{n\to\infty}-\frac{1}{n}\log\left[\sup_{f_{X_n} \in\calU_2} H_{1 + s}^\uparrow ( A^n | f_{X_n}(A^n) , E^n, X_n | P_{AE}^n\times P_{X_n} ) \right].\label{eqn:def_exp_gal}
\end{align}
These are the exponents  of the worst-case remaining uncertainties over all universal$_2$ hash functions. 
The following achievability statements hold:
\begin{enumerate}
\item For $s\in [0,1]$, we have  % [783]
\begin{equation}
E (R,  -s) \ge \left|\sup_{t \in (s,1) } t  \big(R - H_{1-t}(A|E|P_{AE}) \big)  \right|^+ , \label{eqn:exp1}
\end{equation}
 and for any $s \in [0,1/2]$, we have  % [785]
\begin{equation}
E^\uparrow (R,-s) \ge\left|\sup_{t \in (s,1/2) } \frac{t}{1-t} \left(R - H_{1-t}^\uparrow(A|E|P_{AE}) \right)  \right|^+ .   \label{eqn:exp2}
\end{equation}
\item For $s\in [0,\infty)$, we   have 
\begin{align}
E(R, s) \ge   \sup_{t \in (0,1 /2) } \frac{t}{1-t}\left(R - H_{1-t}^\uparrow (A|E|P_{AE}) \right)  , \label{eqn:exp3}
\end{align}
and 
\begin{align}
E^\uparrow(R, s)& \ge  \sup_{t \in (0,1/2) } \frac{t}{1-t}\left(R - H_{1-t}^\uparrow(A|E|P_{AE})\right) .  \label{eqn:exp4}
\end{align}
\end{enumerate}
\end{theorem}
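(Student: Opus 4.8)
\emph{Proof plan.} The two parts call for different techniques, both reusing machinery already in place for Theorem~\ref{thm:rem} and the classical Slepian--Wolf exponents.

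For Part~(2) ($s\in[0,\infty)$), the idea is to reduce everything to the MAP error probability. Since $s\mapsto H_{1+s}^\uparrow$ is monotonically decreasing with $\lim_{s\to 0}H_{1+s}^\uparrow=H$, and $H_{1+s}\le H_{1+s}^\uparrow$ by~\eqref{eqn:ent_min}, every realization of the hash obeys
\begin{align*}
H_{1+s}(A^n| f_{X_n}(A^n),E^n,X_n| P_{AE}^n\times P_{X_n}) &\le H_{1+s}^\uparrow(A^n| f_{X_n}(A^n),E^n,X_n| P_{AE}^n\times P_{X_n}) \\
&\le H(A^n| f_{X_n}(A^n),E^n,X_n),
\end{align*}
and Fano's inequality bounds the last quantity by $\log 2 + n\log|\calA|\cdot\bar{\rmP}_{\rme}^{(n)}(f_{X_n})$, where $\bar{\rmP}_{\rme}^{(n)}(f_{X_n})$ is the MAP error probability in~\eqref{eqn:perr} averaged also over $X_n$. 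Next I would invoke the universal$_2$ property~\eqref{eqn:hash} in a union bound over confusable sequences and apply Gallager's $\rho$-trick to obtain, for every $\rho\in[0,1]$,
\[
\sup_{f_{X_n}\in\calU_2}\bar{\rmP}_{\rme}^{(n)}(f_{X_n}) \le \exp\!\big(-n[\rho R - E_0(\rho)]\big), \qquad E_0(\rho) := \log\sum_e P_E(e)\Big(\sum_a P_{A|E}(a|e)^{\frac{1}{1+\rho}}\Big)^{1+\rho}.
\]
Passing to $-\tfrac1n\log(\cdot)$, letting $n\to\infty$ (the prefactor $\log 2+n\log|\calA|$ is subexponential), and optimizing over $\rho$ gives $E(R,s),E^\uparrow(R,s)\ge\sup_{\rho\in(0,1)}[\rho R - E_0(\rho)]$. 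Finally the substitution $\rho = t/(1-t)$ with $t\in(0,1/2)$, together with the identity $\frac{t}{1-t}H_{1-t}^\uparrow(A|E|P_{AE}) = E_0\!\big(\tfrac{t}{1-t}\big)$ (immediate from~\eqref{eqn:gallager_form}), rewrites the bound as~\eqref{eqn:exp3}--\eqref{eqn:exp4}.

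For Part~(1) ($s\in[0,1]$ for $H$, $s\in[0,1/2]$ for $H^\uparrow$), Fano is unavailable because $H_{1-s}$ and $H_{1-s}^\uparrow$ have order below one and hence exceed the corresponding conditional Shannon entropy. Instead I would start from the one-shot upper bounds on $\sup_{f_X\in\calU_2}H_{1-s}(A|f_X(A),E,X|P_{AE}\times P_X)$ and its Gallager form collected in Appendix~\ref{app:one-shot}---the same bounds that drive~\eqref{eqn:min_uni_res}--\eqref{eqn:min_gal_res}---but now extract their exponential rate rather than only their vanishing threshold. Specialized to $P_{AE}^n$ with $M_n=\rme^{nR}$, the right-hand side of each one-shot bound is an expectation under $P_{AE}^n$ of a function of the conditional information density $\sum_{i=1}^n\big(-\log P_{A|E}(A_i|E_i)\big)$ and of $\log M_n = nR$, carrying a truncation/tilting parameter. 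Estimating the exponential rate of that expectation via Cram\'er's and Sanov's theorems~\cite{Dembo}, exactly as in Section~\ref{sec:prf_rem}, yields a convex-duality problem whose value, for the non-Gallager form, equals $\big|\sup_{t}\big[tR - tH_{1-t}(A|E|P_{AE})\big]\big|^+ = \big|\sup_{t}t\big(R - H_{1-t}(A|E|P_{AE})\big)\big|^+$, where the admissible range is $t\in(s,1)$: the lower endpoint $t>s$ comes from requiring the truncation exponent in the one-shot bound to be nonnegative, while $t<1$ is the range on which the relevant tilted moment and one-shot estimate stay valid. This is~\eqref{eqn:exp1}. For the Gallager form, the extra maximization over $Q_E$ in~\eqref{eqn:ent_min}, attained at the tilt~\eqref{eqn:Q_tilt}, turns the relevant cumulant generating function into $E_0$ and the prefactor into $\tfrac{t}{1-t}$ with $\rho=t/(1-t)$, and the admissible range becomes $t\in(s,1/2)$; this gives~\eqref{eqn:exp2}. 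The outer $|\cdot|^+$ in both cases merely records $H_{1-s},H_{1-s}^\uparrow\ge 0$.

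The routine part is Part~(2); the crux lies in the large-deviation evaluation of Part~(1), and within it the Gallager-form case~\eqref{eqn:exp2}. Two points need care. First, one must feed the precise one-shot bound---including its truncation parameter---into the Cram\'er/Sanov estimate and verify that the resulting variational problem produces \emph{exactly} the optimization domain $t\in(s,1)$ (resp.\ $t\in(s,1/2)$); these endpoints are pinned down by the R\'enyi order $1-s$, the nonnegativity of the truncation exponent, and the finiteness domain of the tilted moments, and getting them wrong alters the claimed exponent. Second, for~\eqref{eqn:exp2} the maximization over the reference side-information law $Q_E$ must be interleaved with the large-deviation argument---the technically heaviest step---with the reparametrization $\rho = t/(1-t)$ and prefactor $\tfrac{t}{1-t}$ tracked throughout.
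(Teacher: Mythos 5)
Your Part (1) plan has a genuine gap. The one-shot bounds that drive \eqref{eqn:min_uni_res}--\eqref{eqn:min_gal_res} are \eqref{10-15-30}--\eqref{10-15-31} in Lemma~\ref{lem:one-shot-direct}, and their right-hand sides are \emph{already explicit}: $1+\epsilon^{t}\rme^{tH_{1-t}(A^n|E^n)}/M^{t}$ (resp.\ the Gallager-form analogue). There is no truncation parameter and no residual expectation of an information density to which Cram\'er or Sanov could be applied --- you are conflating these bounds with the lower bounds of Lemmas~\ref{lem:one-shot-direct2} and~\ref{lem:one-shot3}, which are used for the $+s$ remaining-uncertainty bounds in Theorem~\ref{thm:rem}, not for the exponents. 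If you apply \eqref{10-15-30} at the order $1-s$ itself, Jensen plus $\log(1+x)\le x$ give only the exponent $s(R-H_{1-s}(A|E|P_{AE}))$, i.e.\ the $t\downarrow s$ endpoint of \eqref{eqn:exp1}, which is weaker than the claimed $\sup_{t\in(s,1)}$. The missing idea --- and the entire content of the paper's proof --- is the monotonicity of the conditional R\'enyi entropy in its order: since $H_{1-s}\le H_{1-t}$ (and $H^\uparrow_{1-s}\le H^\uparrow_{1-t}$) for every $t\ge s$, one may first pass to order $1-t$, write $H_{1-t}(A^n|f_{X_n}(A^n),E^n,X_n)\le\frac1t\log\bbE_{X_n}\rme^{tH_{1-t}(A^n|f_{X_n}(A^n),E^n)}$, invoke the one-shot bound at order $1-t$, and only then optimize over $t$. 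The endpoints $t<1$ and $t<1/2$ come from the concavity (Jensen) step inside the proofs of \eqref{10-15-30} and \eqref{10-15-31}, not from ``nonnegativity of a truncation exponent'' or ``finiteness of tilted moments''; your account of where the domain $(s,1)$ (resp.\ $(s,1/2)$) comes from is a rationalization rather than a derivation, and no large-deviation analysis (nor any interleaved maximization over $Q_E$ for \eqref{eqn:exp2}) is needed or would obviously reproduce the stated formula.

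Your Part (2), by contrast, is a correct alternative to the paper's argument, and it is genuinely different: the paper simply notes $H_{1+s}\le H_{1+s}^\uparrow\le H_{1-s'}^\uparrow$ and takes $s'=0$ in \eqref{eqn:exp2}, whereas you go through $H_{1+s}\le H^\uparrow_{1+s}\le H$, Fano, and the Gallager random-binning bound, which indeed only uses the pairwise collision property \eqref{eqn:hash} and hence holds for every universal$_2$ ensemble; your identity $\frac{t}{1-t}H^\uparrow_{1-t}=E_0(t/(1-t))$ then recovers exactly \eqref{eqn:exp3}--\eqref{eqn:exp4}. Two remarks: (i) as written, ``$\log 2+n\log|\calA|\cdot\bar\rmP_{\rme}^{(n)}$'' has a non-decaying additive term; you must use the refined Fano bound $h_{\mathrm b}(\bar\rmP_{\rme}^{(n)})\le\bar\rmP_{\rme}^{(n)}\big(\log\frac{1}{\bar\rmP_{\rme}^{(n)}}+1\big)$ so that the whole right-hand side is $\bar\rmP_{\rme}^{(n)}$ times a polynomial-in-$n$ factor, which is harmless for the exponent; (ii) note that your route makes Part (2) self-contained, while the paper's route makes it a corollary of Part (1), so fixing Part (1) is still essential for the theorem as a whole.
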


Theorem \ref{thm:exponents} is proved in Section \ref{sec:prf_exp}. %Because $s\mapsto H_{1+s}$ and $s\mapsto H_{1+s}^\uparrow$ are monotonically non-increasing and the right-hand-sides of  \eqref{eqn:exp3}--\eqref{eqn:exp4} do not depend on $s$, 

We observe from Proposition \ref{pr:err_exp} that the right-hand-sides  of  the bounds in Part (2), which can be shown to be non-negative for $R\ge H(A|E|P_{AE})$, are lower bounds on the optimal error exponent~\cite{Gal76,Kos77,Oohama94}   for the Slepian-Wolf~\cite{sw73} problem, denoted as $E^*_{\mathsf{SW}}(R)$. In fact, it can be inferred from Gallager's work~\cite{Gal76}  (or \cite[Problem~2.15(a)]{Csi97} for the $\calE = \emptyset$ case) that if we replace the domain of the optimization over $t$ from $(0,1/2)$  to $(0,1)$, the lower bounds in \eqref{eqn:exp3}--\eqref{eqn:exp4} are equal to $E^*_{\mathsf{SW}}(R)$ for a certain range of coding rates above $H(A|E|P_{AE})$. The reason why we obtain a potentially smaller exponent is because we consider the {\em worst-case} over all universal$_2$ hash functions $f_{X_n}\in\calU_2$ in the definitions of $E(R,s)$ and $E^\uparrow(R,s)$ in   \eqref{eqn:def_exp} and \eqref{eqn:def_exp_gal} respectively. For the Slepian-Wolf problem, we can choose the {\em best}  sequence of hash functions.

%The techniques to establish these lower bounds to the optimal error exponent are already well known in the literature \cite{Gal76,Kos77,Oohama94,gallagerIT}.  In fact, they are tight (i.e., \eqref{eqn:exp3}--\eqref{eqn:exp4}  coincide with the optimal error exponent) for a certain range of coding rates above $H(A|E|P_{AE})$. See Gallager~\cite{Gal76} for details. % See Gallager Corollary~2 and Theorem~4 in Oohama and Han~\cite{Oohama94} and Exercise~2.13 in Csisz\'ar and K\"orner~\cite{Csi97} (for the unconditional case in which $\calE=\emptyset$). 

\begin{figure}
\centering
\begin{tabular}{cc}
\includegraphics[width = .475\columnwidth]{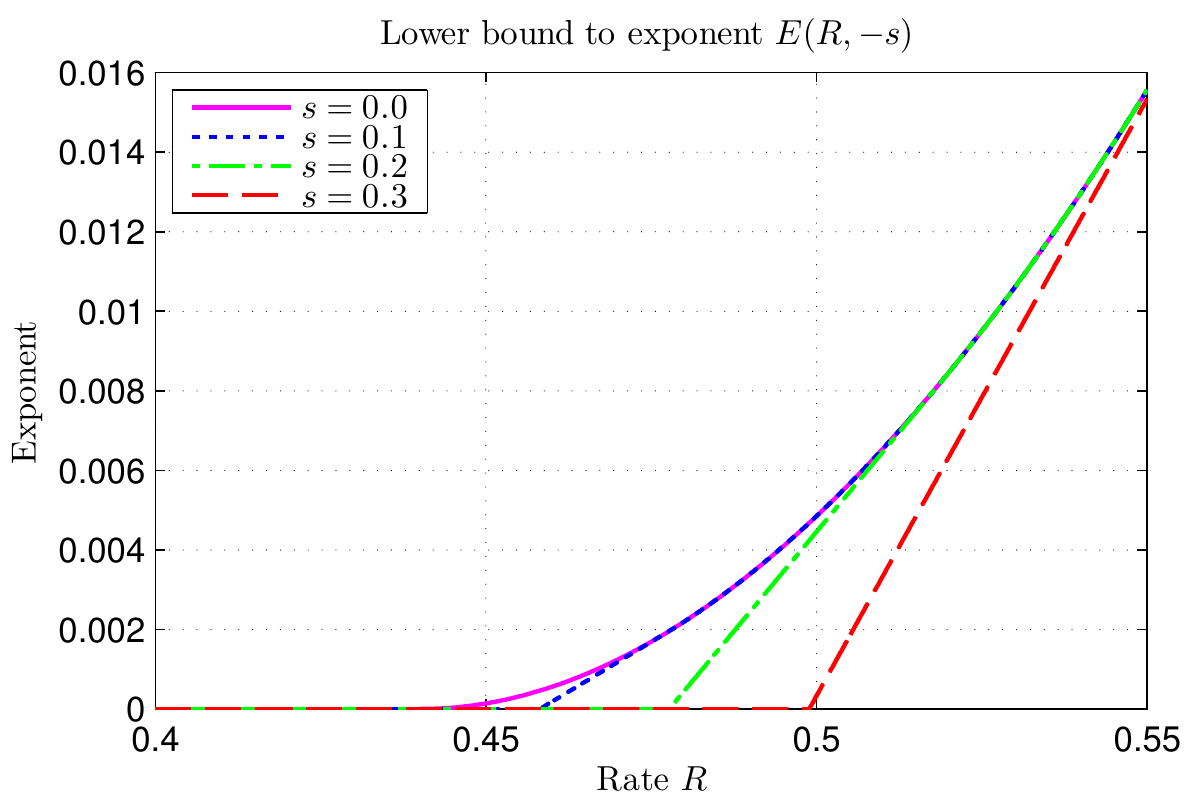} &
\includegraphics[width = .475\columnwidth]{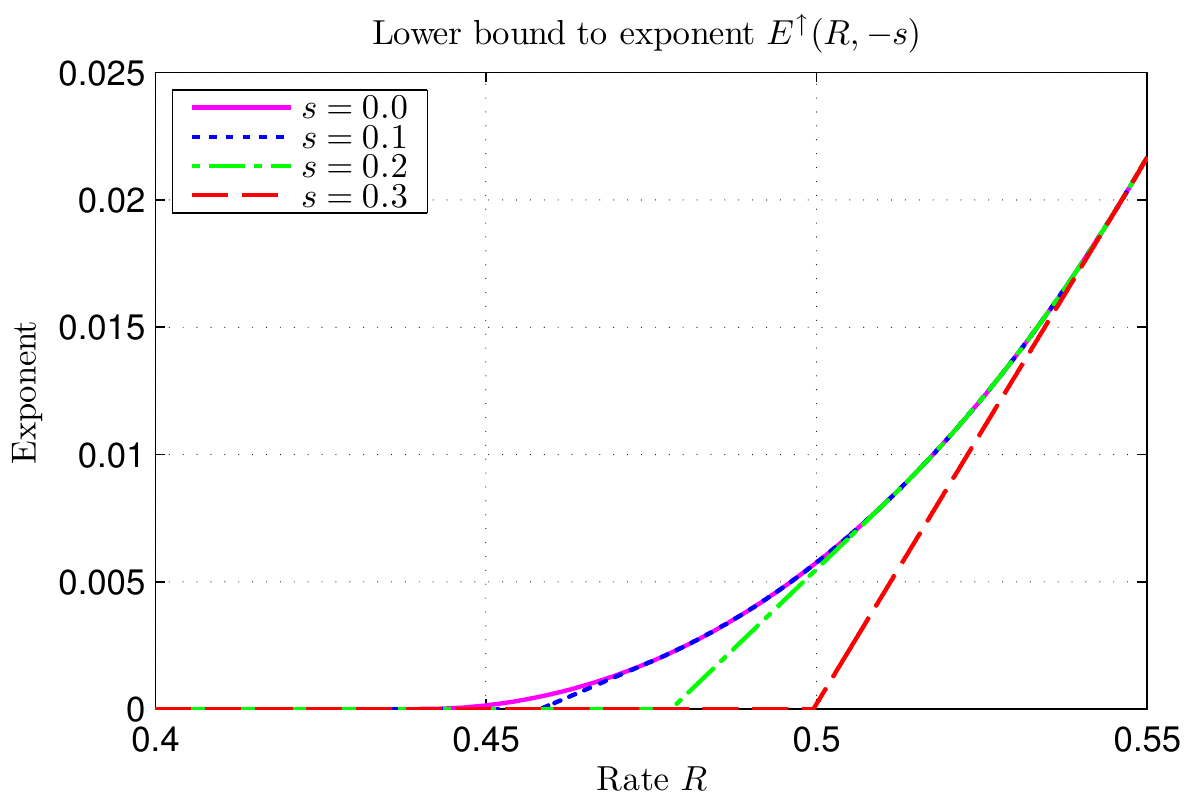} 
\end{tabular}
\caption{Illustration of the lower bounds on the exponents $E (R,  -s) $ and $E^\uparrow (R,  -s) $ in \eqref{eqn:exp1} and \eqref{eqn:exp2} respectively. The curves transition from $0$ to a positive quantity at $H_{1-s}(A|E|P_{AE})$ (left)  and $H_{1-s}^\uparrow(A|E|P_{AE})$ (right). }
\label{fig:exponents}
\end{figure}

In Fig.~\ref{fig:exponents}, we plot the lower bounds in  \eqref{eqn:exp1} and \eqref{eqn:exp2} for the same source $P_{AE}$  as in Figs.~\ref{fig:rm_simple} and \ref{fig:rm} in Section \ref{sec:asymp_ru}.  We note that the rates at which the lower bounds on the exponents transition from being zero to positive is given by $H_{1-s}(A|E|P_{AE})$ for \eqref{eqn:exp1} and $H_{1-s}^\uparrow(A|E|P_{AE})$ for \eqref{eqn:exp2}. The latter observation corroborates     \eqref{eqn:optkey2} of Theorem~\ref{cor:threshold}. Furthermore, as discussed in the previous paragraph, the $s=0$ case in the right plot of Fig.~\ref{fig:exponents} is a lower bound on $E^*_{\mathsf{SW}}(R)$. For this source, if we change the domain of optimization of $t$ from $(0,1/2)$  to $(0,1)$, the plot does not change (i.e., the optimal $t<1 /2$) so for rates in a small neighborhood above $H(A|E|P_{AE})\approx 0.44$ nats, the curve indeed traces out the optimal error exponent $E^*_{\mathsf{SW}}(R)$.

\section{Proof of Theorem~\ref{thm:rem}} \label{sec:prf_rem}
We   prove statements \eqref{eqn:min_uni_res}, \eqref{eqn:min_gal_res}, \eqref{eqn:plus_max_res}, and \eqref{eqn:plus_max_uni_str} in Subsections \ref{sec:dir_rem_1}, \ref{sec:dir_rem_2}, \ref{sec:dir_rem_3}, and \ref{sec:dir_rem_4} respectively.

\subsection{Proof of  \eqref{eqn:min_uni_res} in Theorem \ref{thm:rem}}\label{sec:dir_rem_1}
To prove the upper bound in \eqref{eqn:min_uni_res}, we use the one-shot bound in \eqref{10-15-30} in Lemma~\ref{lem:one-shot-direct} (Appendix \ref{app:one-shot}). We first assume that $ H_{1-s}(A|E|P_{AE}) -  R > 0$. In this case,  
\begin{equation}
M^{-s} \epsilon^s \rme^{s H_{1-s} (A^n|E^n|P_{AE}^n)}\ge 1, \label{eqn:suff_large0}
\end{equation}
   for $n$ sufficiently large.   Then the one-shot bound in  \eqref{10-15-30} implies that  for any $\epsilon$-almost universal$_2$ hash function $f_{X_n}$, 
\begin{align}
&H_{1-s}(A^n | f_{X_n}(A^n) , E^n,X_n |P_{AE}^n\times P_{X_n}) \nn\\*
&=\frac{1}{s}\log\bbE_{X_n} \left( \rme^{s H_{1-s} (A^n |f_{X_n} (A^n) ,E^n | P_{AE}^n ) } \right)\\
&\le \frac{1}{s}\log\left( 1+  \frac{\epsilon^s \rme^{s H_{1-s} (A^n|E^n|P_{AE}^n)}}{M^s}\right)\label{eqn:use_suff_large2} \\
&\le\frac{1}{s}\log \left( 2\cdot   \frac{\epsilon^s \rme^{s H_{1-s} (A^n|E^n|P_{AE}^n)}}{M^s}\right)\label{eqn:use_suff_large0} \\
&=\frac{1}{s}\log \left( 2\epsilon^s\right) +    n \big(  H_{1-s} (A|E|P_{AE}) - R  \big) ,  \label{eqn:use_suff_large1} 
\end{align}
where in \eqref{eqn:use_suff_large0} we used \eqref{eqn:suff_large0} and in \eqref{eqn:use_suff_large1}  we used the fact that the conditional R\'enyi entropy is additive for independent random variables, i.e.,  $ H_{1-s} (A^n|E^n|P_{AE}^n)=  nH_{1-s} (A |E |P_{AE} )$.   
Since this bound holds for all $\epsilon$-universal$_2$ hash functions  $f_{X_n} \in\calU_2$ (including $\epsilon=1$),    normalizing by $n$, taking the $\varliminf$, and appealing to the definition $G(R,-s)$ in \eqref{eqn:Gdef} establishes that $ G(R,-s)\le H_{1-s}(A|E|P_{AE})-R$ if $H_{1-s}(A|E|P_{AE})-R\ge 0$.

Now, when $H_{1-s}(A |E |P_{AE}) -  R \le  0$, we follow the steps leading to~\eqref{eqn:use_suff_large2} but use  $\log(1+t)\le t$  to establish that 
\begin{align}
H_{1-s}(A^n | f_{X_n}(A^n) , E^n,X_n |P_{AE}^n\times P_{X_n})   \le \frac{1}{s}\epsilon^s \cdot \rme^{ns (H_{1-s}(A|E|P_{AE})-R )} . \label{eqn:sec_case}
\end{align}
From \eqref{eqn:sec_case}, we conclude that if $H_{1-s}(A |E |P_{AE}) -  R  \le  0$, we have $G(R,-s) =   0$ (because $G(R,-s)$ cannot be negative). Since the two  bounds  in \eqref{eqn:use_suff_large1} and \eqref{eqn:sec_case} hold  for all sequences of universal$_2$ hash functions $f_{X_n} \in\calU_2$ (taking $\epsilon=1$ above), together they establish  \eqref{eqn:min_uni_res}.

\subsection{Proof of  \eqref{eqn:min_gal_res} in Theorem \ref{thm:rem}}\label{sec:dir_rem_2}
To prove the upper bound in \eqref{eqn:min_gal_res}, we use the one-shot bound in \eqref{10-15-31} in Lemma~\ref{lem:one-shot-direct} (Appendix \ref{app:one-shot}). Similarly, to the analysis in Section~\ref{sec:dir_rem_1}, we may consider two cases $H_{1-s}^\uparrow (A|E|P_{AE})-R >0$ or $H_{1-s}^\uparrow (A|E|P_{AE})-R \le 0$. We will only consider the former since the analysis of the latter parallels that in Section \ref{sec:dir_rem_1}. Under the former condition, we may assume that 
\begin{equation}
M^{- \frac{s}{1-s}} \epsilon^{  \frac{s}{1-s}}  \rme^{  \frac{s}{1-s} H_{1-s}^\uparrow(A^n|E^n|P_{AE}^n)} \ge 1 \label{eqn:suff_large}
\end{equation}
 for $n$ sufficiently large.  The one-shot bound in \eqref{10-15-31} implies that  for any $\epsilon$-almost universal$_2$ hash function $f_{X_n}$, 
\begin{align}
&H_{1-s}^\uparrow(A^n | f_{X_n}(A^n) , E^n,X_n |P_{AE}^n\times P_{X_n}) \nn\\* 
&=\frac{1-s}{s} \log\bbE_{X_n}\left(  \rme^{ \frac{s}{1-s}  H_{1-s}^\uparrow(A^n | f_{X_n}(A^n) , E^n  | P_{AE}^n ) }  \right)\\
&\le \frac{1-s}{s} \log \left( 1+  \frac{\epsilon^{  \frac{s}{1-s}}  \rme^{  \frac{s}{1-s} H_{1-s}^\uparrow(A^n|E^n|P_{AE}^n)}}{M^{\frac{s}{1-s}}} \right)\\
&\le \frac{1-s}{s} \log \left( 2\cdot \frac{\epsilon^{  \frac{s}{1-s}}  \rme^{  \frac{s}{1-s} H_{1-s}^\uparrow(A^n|E^n|P_{AE}^n)}}{M^{\frac{s}{1-s}}}\right) \label{eqn:use_suff_large}\\
&=\frac{1-s}{s}\log \left(2 \epsilon^{  \frac{s}{1-s}} \right) + n  \left(H_{1-s}^\uparrow(A|E|P_{AE}) - R \right),
\end{align}
where in \eqref{eqn:use_suff_large} we used \eqref{eqn:suff_large}.  
Since this bound holds for all sequences of universal$_2$ hash functions $f_{X_n} \in\calU_2$ (taking $\epsilon=1$ above),  normalizing by $n$ and appealing to the definition $G^\uparrow(R,-s)$ in \eqref{eqn:Gdef_up}, we establish  the upper bound in \eqref{eqn:min_gal_res}.
 
\subsection{Proof of \eqref{eqn:plus_max_res} in Theorem \ref{thm:rem}}\label{sec:dir_rem_3}
To prove the upper bound in  \eqref{eqn:plus_max_res}, we will resort to the one-shot bound in \eqref{10-16-4} in Lemma~\ref{lem:one-shot-direct2} (Appendix \ref{app:one-shot}). We first observe by Cram\'er's theorem~\cite[Section 2.2]{Dembo} that 
\begin{equation}
\lim_{n\to\infty} -\frac{1}{n}\log P_{AE}^n \left\{ (\ba,\be) : P_{A|E}^n(\ba|\be)\ge\epsilon \rme^{-nR}\right\}=\sup_{t\ge 0}  \big\{t( H_{1+t}(A|E|P_{AE})-R) \big\}.  \label{eqn:cram_bd1}
\end{equation}
This is because the cumulant generating function of the random variable $\log P_{A|E}(A|E)$ where $(A,E)$ is distributed as $P_{AE}$ is 
\begin{equation}
\log \bbE \left[\rme^{t  \log P_{A|E}(A|E) } \right] = \log\sum_{e} P_E(e) \sum_a P_{A|E}(a|e)^{1+t} = -tH_{1+t}(A|E|P_{AE}).
\end{equation}
Next, we apply  a generalization  of   Cram\'er's theorem concerning arbitrary finite non-negative measures\footnote{The standard Cram\'er's theorem~\cite[Section 2.2]{Dembo} (or Sanov's theorem~\cite[Section 2.1]{Dembo}) is a  large-deviations result concerning the exponent  of $P^n(\calB)$ where $P$ is a {\em probability} measure and $\calB$ is an event in the sample space $\Omega$.  If $P$ is not necessarily a probability measure but a finite non-negative measure (as it is in our applications), say $\mu$,    Cram\'er's theorem  clearly also applies by defining the  new {\em probability} measure $\calB\mapsto \widetilde{P}(\calB) := \mu(\calB)/\mu(\Omega)$.  \label{fn:cramer} } (not necessarily probability measures) %(Lemma \ref{lem:gen_cramer} in Appendix \ref{sec:gen_cramer}) 
to the sequence of  random variables $-\log P_{A|E}^n(A^n|E^n) =\sum_{i=1}^n -\log P_{A|E}(A_i | E_i)$ under  the sequence of non-negative finite  joint measures $\calB\mapsto  \sum_{ (\ba,\be) \in \calB} P_{AE}^n(\ba,\be) P_{A|E}^n(\ba|\be)^s $ to establish  that 
\begin{align}
&\lim_{n\to\infty} -\frac{1}{n}\log \sum_{(\ba,\be) :   P_{A|E}^n(\ba|\be) <  \epsilon \rme^{-nR}} P_{AE}^n(\ba,\be) P_{A|E}^n(\ba|\be)^s \rme^{snR}\nn\\*
&\qquad=\left\{ \begin{array}{cc}
s(H_{1+s}(A|E|P_{AE}) - R) & R \le \hatR_s \\
\max_{t\in [0,s]} t(H_{1+t}(A|E|P_{AE}) - R) & R \ge\hatR_s    
\end{array} \right.\label{eqn:cram_bd2}  .
\end{align}
The  statement in \eqref{eqn:cram_bd2}   holds  because the relevant cumulant generating function is 
\begin{align}
\tau_s(t)& = \log\sum_{a,e} P_{AE}(a,e) P_{A|E}(a|e)^s \rme^{ - t\log P_{A|E}(a|e) }  \\
&= - (s-t) H_{1+(s-t)} (A|E|P_{AE}), \label{eqn:deftau}
\end{align}
from the definition of the conditional R\'enyi entropy in \eqref{eqn:renyi_ent_Q}. 
Hence,
\begin{align}
\Gamma_s&:=\lim_{n\to\infty} -\frac{1}{n}\log \sum_{(\ba,\be) :   P_{A|E}^n(\ba|\be) <  \epsilon \rme^{-nR}} P_{AE}^n(\ba,\be) P_{A|E}^n(\ba|\be)^s \\
&= \sup_{t\ge 0} \left\{ tR-\tau_s(t) \right\}\\
&= \sup_{t\ge 0} \left\{  tR+ (s-t)H_{1 +(s-t)}(A|E|P_{AE}) \right\}. \label{eqn:cramer_res}
\end{align}
By differentiating the objective function in \eqref{eqn:cramer_res}, we see that if $R\le\hatR_s =\frac{\rmd}{\rmd t}t H_{1+t}\big|_{t=s}$ (cf.\ the definition of the critical rate in \eqref{eqn:crit_rate1}), the optimal solution is attained at  $t^*=0$  (recall that $t\mapsto tH_{1+t}$ is concave so $s\mapsto\hatR_s$ is decreasing) and so $\Gamma_s= sH_{1 +s}(A|E|P_{AE})$,  leading to the first clause in \eqref{eqn:cram_bd2}. Conversely, when $R > \hatR_s$, the optimal solution is attained at  $t^* > 0$. This leads to the second clause   on the right-hand-side of  \eqref{eqn:cram_bd2} because the left-hand-side of  \eqref{eqn:cram_bd2} is now
\begin{align}
\Gamma_s - sR  &= \sup_{t \ge 0 }  \left\{  (t-s)R   + (s-t)H_{1 +(s-t)}(A|E|P_{AE}) \right\}  \\
& =\max_{t  \in [0,s]}  \left\{ t H_{1 +t }(A|E|P_{AE}) -t  R\right\}.
\end{align}
Since \eqref{eqn:cram_bd1} is not smaller than \eqref{eqn:cram_bd2}, the latter dominates. Now using the one-shot bound in \eqref{10-16-4} in Lemma~\ref{lem:one-shot-direct2} we see that for any sequence of $\epsilon$-almost universal$_2$ hash functions $f_{X_n}$,
\begin{align}
&\varlimsup_{n\to\infty}\frac{1}{n}H_{1+s} (A^n| f_{X_n}(A^n) , E^n, X_n | P_{AE}^n\times P_{X_n} )  \nn\\*
&=\varlimsup_{n\to\infty} -\frac{1}{ns}\log \bbE_{X_n}  \left[ \rme^{-sH_{1+s} ( A^n| f_{X_n}(A^n) , E^n| P_{AE}^n )} \right]\\
&\le \varlimsup_{n\to\infty} -\frac{1}{ns}\log\bigg[ 2^{-s} \sum_{ (\ba,\be) : P_{A|E}^n(\ba|\be) \ge  \epsilon \rme^{-nR}} P_{AE}^n(\ba,\be)  \nn\\*
&\qquad\qquad\qquad+ 2^{-s}  \sum_{ (\ba,\be) : P_{A|E}^n(\ba|\be)  <   \epsilon \rme^{-nR}} P_{AE}^n(\ba,\be) P_{A|E}^n(\ba|\be)^s \rme^{snR}\bigg]\\
&= \left\{ \begin{array}{cc}
  H_{1+s}(A|E|P_{AE}) - R  & R \le\hatR_s \\
\max_{t\in [0,s]} \frac{t}{s}(H_{1+t}(A|E|P_{AE}) - R) & R \ge\hatR_s
\end{array} \right.  .
\end{align}
Since this bound holds for all sequences of universal$_2$ hash functions $f_{X_n} \in\calU_2$ (taking $\epsilon=1$ above), we have  established the upper bound in~ \eqref{eqn:plus_max_res}.

\subsection{Proof of \eqref{eqn:plus_max_uni_str} in Theorem \ref{thm:rem}} \label{sec:dir_rem_4}
We now prove the upper bound in \eqref{eqn:plus_max_uni_str}. For this purpose, we use the one-shot bound \eqref{10-16-4b} in Lemma~\ref{lem:one-shot3} (Appendix \ref{app:one-shot}). We employ Cram\'er's theorem \cite[Section 2.2]{Dembo}  with the sequence of random variables $\log P_{A|E}^n (\ba|\be)( \sum_{\tba} P_{A|E}^n(\tba|\be)^{1+s})^{-\frac{1}{1+s}}$ (where $\tba=(\tila_1,\tila_2,\ldots, \tila_n) \in\calA^n$) under the sequence of joint distributions $P_{AE}^n (\ba,\be)$. We claim that
\begin{align}
&\lim_{n\to\infty}-\frac{1}{n}\log\sum_{\be}P_{E}^n(\be)  \sum_{\ba : P_{A|E}^n(\ba|\be)^{1+s} \ge\epsilon \rme^{-nR} \sum_{\tba} P_{A|E}^n(\tba|\be)^{1+s}} P_{A|E}^n(\ba|\be) \nn\\* 
& =\lim_{n\to\infty}-\frac{1}{n}\log\sum_{ (\ba,\be) :    P_{A|E}^n(\ba|\be)  ( \sum_{\tba} P_{A|E}^n(\tba|\be)^{1+s})^{-\frac{1}{1+s}} \ge\epsilon \rme^{-n \frac{R}{1+s}}   }P_{AE}^n(\ba,\be) \\
&=\max_{t\ge 0 }\frac{t}{1+s} (H_{1  + t|1+s}(A|E|P_{AE}) - R). \label{eqn:ge1}
\end{align}
Let us justify the claim in \eqref{eqn:ge1} carefully. The derivation here is similar to that in~\eqref{eqn:deftau}--\eqref{eqn:cramer_res} and involves calculating the relevant cumulant generating function
\begin{align}
\tau_s(t) & :=\log\sum_{a,e} P_{AE}(a,e) \exp\left(  t \log  \left[ P_{A|E}(a|e) \bigg( \sum_{\tila} P_{A|E}(\tila|e)^{1+s} \bigg)^{-\frac{1}{1+s}}  \right] \right)\\
&=\log\sum_e P_E(e) \left( \sum_a P_{A|E}(a|e)^{1 + t} \right) \left( \sum_{\tila} P_{A|E}(\tila|e)^{1+s} \right)^{- \frac{t}{1+s}}\\
&= -\frac{t}{1+s}H_{1 + t|1+s} (A|E|P_{AE}),  
\end{align}
where the last step results from the definition of the two-parameter conditional R\'enyi entropy in  \eqref{eqn:two_param}. By an application of Cram\'er's theorem, the corresponding exponent is~\eqref{eqn:ge1}.

In addition, we  apply the generalized version of  Cram\'er's theorem (see footnote \ref{fn:cramer}) %(Lemma \ref{lem:gen_cramer} in Appendix \ref{sec:gen_cramer})
 to compute another large deviations quantity. Consider the sequence of  random variables  $-\log P_{A|E}^n(\ba|\be)( \sum_{\tba} P_{A|E}^n(\tba | \be)^{1+s})^{ -\frac{1}{1+s} }$ distributed according to the sequence of non-negative  finite joint measures $\calB\mapsto \sum_{(\ba,\be)\in \calB}P_{E}^n(\be)  (P_{A|E}^n (\ba | \be))^{1+s} ( \sum_{\tba} P_{A|E}^n (\tba|\be)^{1+s})^{ - \frac{s}{1+s}}$. We claim that  the exponent can be calculated to be
\begin{align}
&\lim_{n\to\infty}-\frac{1}{n}\log \Bigg[ \rme^{ n\frac{s}{1+s}R } \sum_{\be} P_E^n(\be) \nn\\*
&\qquad\qquad\times \sum_{ \ba : P_{A|E}^n(\ba|\be)^{1+s} < \epsilon \rme^{-nR} \sum_{\tba} P_{A|E}^n(\tba|\be)^{1+s}}  P_{A|E}^n(\ba|\be)^{1+s}\left(  \sum_{\tba} P_{A|E}^n(\tba|\be)^{1+s}\right)^{-\frac{s}{1+s}}\Bigg]\label{eqn:ge20} \\
%&=\lim_{n\to\infty}-\frac{1}{n}\log \left[ \rme^{ n\frac{s}{1+s}R } \hspace{-.25in}  \sum_{  (\ba,\be) :  P_{A|E}^n(\ba|\be)(\sum_{\tba} P_{A|E}^n(\tba|\be)^{1+s})^{-\frac{1}{1+s}} <\epsilon \rme^{-n \frac{R}{1+s}}} \hspace{-.5in}  P_{AE}^n(\ba,\be)^{1+s} \left(  \sum_{\tba} P_{A | E}^n(\tba |\be)^{1+s}\right)^{-\frac{s}{1+s}}\right] \label{eqn:non-nega_meas}\\
&=\left\{ \begin{array}{cc}
\frac{s}{1+s} (H_{1+s}^\uparrow(A|E|P_{AE}) - R) & R \le \hatR_s^\uparrow  \\
\max_{t\in [0,s]}\frac{t}{1+s} (H_{1+t|1+s} (A|E|P_{AE} )- R) & R > \hatR_s^\uparrow
\end{array} \right. \label{eqn:ge2} .
\end{align}
%MAYBE WE SHOULD ALSO EXPLAIN THE FIRST CASE IN \eqref{eqn:ge2}. 
Let us justify the claim in \eqref{eqn:ge2} carefully.  This    step follows because the   relevant cumulant generating function is %of the relevant non-negative measure  in \eqref{eqn:non-nega_meas} $(P_{AE}^{1+s} (\cdot,\cdot)(\sum_{\tila} P_{A|E}^{1+s})(\tila,\cdot))^{-\frac{s}{1+s}})^n(\calB) := \sum_{ (\ba,\be)\in\calB} P_{AE}^n(\ba,\be)^{1+s}( \sum_{\tba } P_{A|E}^n(\tba|\be)^{1+s} )^{-\frac{s}{1+s} }$ is 
\begin{align}
\tau_s(t)  &:= \log \sum_e P_E(e) \sum_a P_{A|E}(a|e)^{1+s} \left(\sum_{\tila} P_{A|E}(\tila|e)^{1+s} \right)^{  -\frac{s}{1+s} } \nn\\
&\qquad \times\exp\left( -t \log \bigg[ P_{A|E}(a|e) \Big(  \sum_{\tila} P_{A|E}(\tila|s)^{1+s}   \Big)^{ - \frac{1}{1+s}}\bigg] \right)\\
&= \log\sum_e P_E(e) \sum_a P_{A|E}(a|e)^{1+s-t} \left(\sum_{\tila} P_{A|E}(\tila|e)^{1+s} \right)^{   \frac{t-s}{1+s} }.
\end{align}
Thus~\eqref{eqn:ge20} reduces to
\begin{align}
&\max_{t \ge 0 }  \left\{-  \tau_s(t)-\frac{s-t}{1+s}R \right\}\nn\\*
&=\max_{t \le s }   \left\{-\log\sum_e P_E(e) \sum_a P_{A|E}(a|e)^{1+t} \left(\sum_{\tila} P_{A|E}(\tila|e)^{1+s} \right)^{   -\frac{t}{1+s} }- \frac{t}{1+s}R \right\}\\
&=\max_{t \le s} \left\{\frac{t}{1+s}H_{1+t|1+s}(A|E|P_{AE}) - \frac{t}{1+s}R\right\},  \label{eqn:t_opt}
\end{align}
where the last step follows from the definition of the two-parameter conditional R\'enyi entropy in  \eqref{eqn:two_param}. Now from the definition of the critical rate $\hatR_s^\uparrow$ in \eqref{eqn:crit_rate2} and the fact that $H_{1+s | 1+s} = H_{1+s}^\uparrow$, we know that if $R\le \hatR_{s}^\uparrow$, the maximization in  \eqref{eqn:t_opt} is attained $t=s$, resulting in the first case in \eqref{eqn:ge2}. Conversely, the second case results from   $R > \hatR_{s}^\uparrow$ where the domain of $t$ is $[0,s]$ since the eventual exponent cannot be negative. This proves~\eqref{eqn:ge2}.
%I THINK WE SHOULD EXPLAIN HOW WE APPLY THE GE THEOREM TO GET TO \eqref{eqn:ge2}. WHAT IS THE EASIEST WAY TO SEE OR EXPLAIN \eqref{eqn:ge2}.
%HOW TO GET FROM HERE TO \eqref{eqn:ge2}.
%Verifying with the definition of the two-parameter conditional R\'enyi entropy in  \eqref{eqn:two_param}  proves \eqref{eqn:ge2}.

Since \eqref{eqn:ge2} is not greater than \eqref{eqn:ge1}, the former dominates the exponential behavior of $G^\uparrow(R,s)$, and so plugging these evaluations into  the one-shot bound in~\eqref{10-16-4b} which holds     for any   $\epsilon$-almost universal$_2$ hash function $f_{X_n}$,
\begin{align}
&\varlimsup_{n\to\infty}\frac{1}{n}H_{1+s}^\uparrow (A^n| f_{X_n}(A^n) , E^n, X_n | P_{AE}^n\times P_{X_n} )  \nn\\*
&= \varlimsup_{n\to\infty}- \frac{1+s}{ns}\log \bbE_{X_n}\left[ \rme^{ -\frac{s}{1+s}H_{1+s}^\uparrow(A^n | f_{X_n}(A^n) , E^n | P_{AE}^n )} \right] \\
&\le \varlimsup_{n\to\infty}- \frac{1+s}{ns}\log \Bigg[   2^{-\frac{s}{1+s} }\sum_{\be}P_{E}^n(\be)  \sum_{\ba : P_{A|E}^n(\ba|\be)^{1+s} \ge\epsilon \rme^{-nR} \sum_{\tba} P_{A|E}^n(\tba|\be)^{1+s}} P_{A|E}^n(\ba|\be)  \nn\\*
&  \qquad +  2^{-\frac{s}{1+s} } \epsilon^{-\frac{s}{1+s}} \rme^{ n\frac{s}{1+s}R }\sum_{\be} P_E^n(\be)  \nn\\*
&\qquad\qquad \times \sum_{ \ba : P_{A|E}^n(\ba|\be)^{1+s} < \epsilon \rme^{-nR} \sum_{\tba} P_{A|E}^n(\tba|\be)^{1+s}}   P_{A|E}^n(\ba|\be)^{1+s}\left(  \sum_{\tba} P_{A|E}^n(\tba|\be)^{1+s}\right)^{-\frac{s}{1+s}}  \Bigg]\\
&=\left\{ \begin{array}{cc}
   H_{1+s}^\uparrow(A|E|P_{AE} )- R   & R \le \hatR_s^\uparrow \\
\max_{t\in [0,s]}\frac{t}{ s} (H_{1+t|1+s  } (A|E|P_{AE}) - R) & R >\hatR_s^\uparrow 
\end{array} \right. .
\end{align}
Since this bound holds for all sequences of universal$_2$ hash functions $f_{X_n} \in\calU_2$ (taking $\epsilon=1$ above), we have  established the upper bound in~\eqref{eqn:plus_max_uni_str}.
%\subsection{Direct Proof of Part (4) in Theorem \ref{thm:rem}} \label{sec:dir_rem_4}
%Since there are no upper bounds to establish for Part (4) of Theorem \ref{thm:rem} (bounds \eqref{eqn:plus_uni} and \eqref{eqn:plus_uni2}), there is nothing to prove here. 
\section{Proof of Theorem \ref{cor:threshold}} \label{sec:threshold_prf}
The bounds on the optimal compression rates corresponding to the conditional R\'enyi entropy and Gallager form of the conditional R\'enyi entropy  are proved in Subsections \ref{sec:prf_optkey1} and  \ref{sec:prf_optkey2} respectively. 
\subsection{Proofs of  \eqref{eqn:optkey1}, \eqref{eqn:optkey1_s0}, and \eqref{eqn:optkey1a}} \label{sec:prf_optkey1} 
\begin{proof}
%We first prove the results in \eqref{eqn:optkey1}, \eqref{eqn:optkey1_s0}, and \eqref{eqn:optkey1a}. 
Recall the definitions of  the optimal rates $T_s$,   $\widetilde{T}_s$, and $\overline{T}_s$ in Theorem \ref{cor:threshold}.   
%\begin{align}
% T_s  &:= \inf\{ R: {G}(R,s) =0 \},\\
%\widetilde{T}_s & := \inf\{ R: \widetilde{G}(R, s) =0 \},\quad\mbox{and} \\
%\overline{T}_s & := \inf\{ R: \overline{G}(R, s) =0 \}.
%\end{align}
Since $\widetilde{G}(R,s)\le{G}(R, s)$,  and both functions are monotonically non-increasing in $R$, it holds that $\widetilde{T}_s\le T_s$ for all $s$. 
 
 First, we prove the upper bounds to  $T_{-s}$ and $T_{s}$ in Section \ref{sec:ub}; next we prove the lower bound  to $\overline{T}_{-s}$ in Section \ref{sec:lb1}; and finally we prove  the lower bound to $\widetilde{T}_s$ in Section \ref{sec:lb2}. %We show that for the $-s$ case, the bounds match for some limited range of $s\in [0,s_0]$ where $s_0 \in (0 ,1]$. This proves \eqref{eqn:optkey1} and~\eqref{eqn:optkey1_s0} (and also \eqref{eqn:optkey_restr}). 
 For the $+s$ case, the upper and lower bounds match for all $s\in (0,1)$ and so this proves \eqref{eqn:optkey1a}. % If the  match and equal the right-hand-sides   of \eqref{eqn:optkey1} and \eqref{eqn:optkey1a} for some range of $s$, then we will be done. 

\subsubsection{Upper Bounds} \label{sec:ub}
We refer to the statement in \eqref{eqn:min_uni_res}. We observe that if $R\ge H_{1-s}(A|E|P_{AE})$, $G(R,-s)=0$ since $G(R,-s)$ is upper bounded by $|H_{1-s} (A|E |P_{AE}) - R|^+$ and $G(R,-s)$ is non-negative.  Hence, $T_{-s}\le H_{1-s}(A|E|P_{AE})$ for all $s\in [0,1]$.   This proves \eqref{eqn:optkey1}.

Next we refer to the statement in  \eqref{eqn:plus_max_res}. If $R\ge H(A|E|P_{AE})$, we know from the monotonically decreasing nature of $s\mapsto H_{1+s}(A|E|P_{AE})$ that $H_{1+t} (A|E|P_{AE})-R$ is non-positive for  $t\in [0,s]$. Thus, the optimal $t$   in the optimization in $\max_{t\in [0,s]}\frac{t}{s}( H_{1+t} (A|E|P_{AE})-R)$ is $t^*=0$  and consequently, the optimal objective value is also $0$. On the other hand, for $R \in [\hatR_s,H(A|E|P_{AE}))$, the optimal $t^* \in (0,s]$ and so the the optimal objective value is positive. We conclude for $s\in [0,\infty)$ that  the optimal key generation rate is upper bounded by the  conditional Shannon entropy $H (A|E|P_{AE})$. In summary, we conclude that 
$
T_{s}  \le H (A|E|P_{AE})$ for all $s \in [0,\infty)$, proving the upper bound for \eqref{eqn:optkey1a}.
% and $T_s$ is upper bounded by $H (A|E|P_{AE})$ for $s\in [0,\infty)$.

\subsubsection{Lower Bound to $\overline{T}_{-s}$ }\label{sec:lb1}
We  now  consider {\em strongly universal   hash functions}~\cite{Wegman81} (cf.\ Definition~\ref{def:strong_has}) and  $s \in [0,s_0]$, where $s_0=s_0(A|E|P_{AE})$ is defined in \eqref{eqn:def_s0}. More precisely, we shall show that  for the sequence of   strongly universal   hash functions $\{\barf_{X_n} : \calA^n\to \{1,\ldots, \rme^{nR}\}\}_{n\in\bbN}$ and any $s\in [0,s_0]$, we have 
\begin{equation}
\overline{G}(R, -s)  =\varliminf_{n\to\infty}\frac{1}{n} H_{1-s}( A^n | \barf_{X_n} (A^n), E^n, X_n | P_{AE}^n\times P_{X_n} ) \ge | H_{1-s}(A|E|P_{AE}) - R|^+ ,\label{eqn:strong_lb}
\end{equation}
immediately implying that $\overline{T}_{-s}\ge H_{1-s}(A|E|P_{AE})$. 
%Because  a strong universal$_2$ hash function is a special case of a hash function, the bound in \eqref{eqn:strong_lb} establishes that $G(R,-s)\ge   | H_{1-s}(A|E|P_{AE}) - R|^+$  and hence that $\overline{T}_{-s} \ge  H_{1-s}(A|E|P_{AE})$ for all  $s\in [0,s_0]$.  
In fact, for this range of $s$, not only is it true that the minimum value of $R$ such that $\overline{G}(R,-s)=0$ coincides with that in \eqref{eqn:optkey1}, the bound in \eqref{eqn:strong_lb} serves as a {\em tight} lower bound to the achievability (upper) bound for $G(R,-s)$ in \eqref{eqn:min_uni_res} (at least for strongly universal   hash functions). In the following, we make use heavy use  of the method of types; relevant notation is summarized in Section~\ref{sec:types}.

%Uniting the lower bounds in   \eqref{eqn:expected_N3}  and  \eqref{eqn:expected_case2} and 
For ease of exposition, we first consider the case in which $|\calE|=1$ or equivalently, $\calE=\emptyset$. Subsequently, we generalize our result to the general case in which $|\calE|>1$. Starting with the one-shot bound in  %the alternative expression for $H_{1-s}$ in \eqref{10-15-30x} (in Appendix~\ref{sec:alt_expressions}) or
 \eqref{eqn:derivation_H_1minuss} (in Lemma \ref{lem:one-shot-direct}), we have 
\begin{align}
&\bbE_{X_n} \left[ \rme^{s H_{1-s} (A^n | \barf_{X_n}(A^n), X_n | P_{A}^n  ) } \right]\nn\\*
& = \bbE_{X_n}  \left[ \sum_{\ba} P_A^n(\ba)^{1-s} \left( \sum_{ \tba \in \barf_{X_n}^{-1}  (\barf_{X_n} (\ba))} P_A^n (\tba)\right)^s \right]\\
 & \ge 2^{s-1}  \bbE_{X_n } \left[  \sum_{\ba}  P_A^n(\ba)^{1-s} \left(  P_A^n(\ba)^s +\bigg[ \sum_{ \tba \in \barf_{X_n}^{-1}  (\barf_{X_n} (\ba)) \setminus\{\ba\}}  P_A^n (\tba)\bigg]^s \right)  \right] \label{eqn:apply_jens} \\
  &\doteq \bbE_{X_n } \left[  \sum_{\ba}  P_A^n(\ba)^{1-s} \left(  P_A^n(\ba)^s +\bigg[ \sum_{Q \in\calP_n(\calA)} \sum_{ \tba \in\calT_Q\setminus\{\ba\}  : \barf_{X_n} (\ba)=\barf_{X_n} (\tba)  }  P_A^n (\tba)\bigg]^s \right)  \right] \label{eqn:split_typ} \\ 
 &\doteq  \bbE_{X_n } \left[  \sum_{\ba}  P_A^n(\ba)^{1-s} \left(  P_A^n(\ba)^s +\bigg[ \max_{Q \in\calP_n(\calA)} \sum_{ \tba \in\calT_Q\setminus\{\ba\}  : \barf_{X_n} (\ba)=\barf_{X_n} (\tba)  }  P_A^n (\tba)\bigg]^s \right)  \right] \label{eqn:split_typ0} \\ 
  &=   \sum_{\ba}  P_A^n(\ba)^{1-s} \left(  P_A^n(\ba)^s + \bbE_{X_n}  \left\{\max_{Q \in\calP_n(\calA)}  \bigg[  \sum_{ \tba \in\calT_Q\setminus\{\ba\}  : \barf_{X_n} (\ba)=\barf_{X_n} (\tba)  }  P_A^n (\tba)\bigg]^s \right\} \right)    \label{eqn:split_typ1} \\
    %&\doteq   \sum_{\ba}  P_A^n(\ba)^{1-s} \left(  P_A^n(\ba)^s + \bbE_{X_n} \left\{\sum_{Q \in\calP_n(\calA)}  \bigg[  \sum_{ \tba \in\calT_Q\setminus\{\ba\}  : \barf_{X_n} (\ba)=\barf_{X_n} (\tba)  }  P_A^n (\tba)\bigg]^s \right\} \right)    \label{eqn:split_typ2} \\
        &\doteq   \sum_{\ba}  P_A^n(\ba)^{1-s} \left(  P_A^n(\ba)^s + \sum_{Q \in\calP_n(\calA)} \bbE_{X_n} \left\{  \bigg[  \sum_{ \tba \in\calT_Q\setminus\{\ba\}  : \barf_{X_n} (\ba)=\barf_{X_n} (\tba)  }  P_A^n (\tba)\bigg]^s  \right\}\right)    \label{eqn:split_typ3}  ,   
\end{align}  
where in \eqref{eqn:apply_jens}  we used the bound $(b+c)^s \ge 2^{s-1}(b^s + c^s)$ for $b,c\ge 0$ and $s\in [0,1]$ (a consequence of Jensen's inequality applied to the concave function $t\mapsto t^{s}$ for $s\in [0,1]$), in \eqref{eqn:split_typ}, we split the inner sum into $n$-types on $\calA$, and in~\eqref{eqn:split_typ0}  and~\eqref{eqn:split_typ3} we used the fact that there are  polynomially many types so we can interchange sums over types with maximums over types and vice versa. This derivation is similar to \cite[Eqn.~(20)]{merhav08}.  

Now, we we assume that $R<H_{1-s}(A|P_A)$ and also that $s\le s_0(A|P_A)$. The latter assumption means that $H_{1-s}(A|P_A)\le H(A|P_A^{(s-1)})$ (see Section \ref{sec:opt_rates}).  In this case, we may use  the bound in~\eqref{eqn:case1_dom1}   in Lemma~\ref{lem:exp} (in Appendix~\ref{app:exp})  to lower bound the (inner) sum over   expectations in~\eqref{eqn:split_typ3}. We have  
%Now, we examine the sum over the types in~\eqref{eqn:split_typ3}.  We have 
\begin{align}
&\sum_{Q \in\calP_n(\calA)} \bbE_{X_n } \left\{ \bigg[\sum_{ \tba \in\calT_Q\setminus\{\ba\}  : \barf_{X_n} (\ba)=\barf_{X_n} (\tba)  }  P_A^n (\tba)\bigg]^s \right\} \nn\\*
&\dotgeq\sum_{Q \in\calP_n(\calA)} \left\{\bbE_{X_n }\bigg[\sum_{ \tba \in\calT_Q\setminus\{\ba\}  : \barf_{X_n} (\ba)=\barf_{X_n} (\tba)  } P_A^n(\tba) \bigg] \right\}^{s}  \label{eqn:use_lower_bds}\\
&\ge   \left\{\sum_{Q \in\calP_n(\calA)}\bbE_{X_n }\bigg[  \sum_{ \tba \in\calT_Q\setminus\{\ba\}  : \barf_{X_n} (\ba)=\barf_{X_n} (\tba)  }  P_A^n (\tba)\bigg]\right\}^s\label{eqn:gal_bds} \\
& =  \left\{ \bbE_{X_n }\bigg[  \sum_{ \tba\ne \ba   : \barf_{X_n} (\ba)=\barf_{X_n} (\tba)  }  P_A^n (\tba)\bigg]\right\}^s\label{eqn:gal_bds1} \\
&= \left\{  \sum_{\tba\ne \ba} P_{A}^n(\tba)  \Pr \left\{ \barf_{X_n} (\ba)=\barf_{X_n} (\tba) \right\}  \right\}^s \\
&= \left\{ \rme^{-nR} \sum_{\tba\ne \ba} P_A^n(\tba)\right\}^s \label{eqn:hash_prop}  ,
\end{align}
where   \eqref{eqn:use_lower_bds} uses \eqref{eqn:case1_dom1}  in Lemma \ref{lem:exp}, and \eqref{eqn:gal_bds} uses the inequality $\sum_i b_i^s \ge (\sum_ib_i)^s$ which holds for non-negative $b_i$ and $s \in [0,1]$ \cite[Problem~4.15(f)]{gallagerIT} and \eqref{eqn:hash_prop}  uses the fact that $\barf_{X_n}$ is a  strongly universal   hash function (cf.\ Definition~\ref{def:strong_has}).  Substituting~\eqref{eqn:hash_prop} into~\eqref{eqn:split_typ3}, we obtain
\begin{align}
&\bbE_{X_n} \left[ \rme^{s H_{1-s} (A^n | \barf_{X_n}(A^n), X_n | P_{A}^n  ) } \right] \nn\\*
& \dotgeq  \sum_{\ba}  P_A^n(\ba)^{1-s} \left(  P_A^n(\ba)^s  + \left\{ \rme^{-nR} \sum_{\tba\ne \ba} P_A^n(\tba)\right\}^s \right) \\
& = 1 + \rme^{-nsR} \sum_{\ba}  P_A^n(\ba)^{1-s} \big\{ 1- P_{A}^n(\ba)\big\}^s\\
&\doteq 1+ \exp \big(  ns  [ H_{1-s}(A|P_{A}) -  R] \big)  \label{eqn:lower_bd_H2}  \\*
&\doteq \exp\big(  ns  [ H_{1-s}(A|P_{A}) -  R] \big) \label{eqn:lower_bd_H} ,
\end{align}
where  in \eqref{eqn:lower_bd_H2} we used the fact that $\max_{\ba} P_A^n(\ba) \le \frac{1}{2}$ (say)    for $n$ large enough so $1\ge\{ 1- P_{A}^n(\ba)\}^s\ge 1-P_A^n(\ba)^s \ge 1-(\frac{1}{2})^s>0$, and  % alternative expression for the R\'enyi entropy in \eqref{10-15-30x}   in Appendix \ref{sec:alt_expressions} (with $f$  therein set to be a constant function and $E$ therein set to be a deterministic random variable), and 
in \eqref{eqn:lower_bd_H}, we used the condition $H_{1-s}(A|P_A)>R$ so the expression in \eqref{eqn:lower_bd_H2}  is exponentially large. 

In the other case, when $R\ge H_{1-s}(A|P_A)$, we simply lower bound  the sum over types term in \eqref{eqn:split_typ3} by $0$ and hence, the entire expression in \eqref{eqn:split_typ3} can be lower bounded  by $1$. Thus, we conclude that 
\begin{equation}
 \bbE_{X_n} \left[ \rme^{s H_{1-s} (A^n | \barf_{X_n}(A^n), X_n |  P_{A}^n  ) } \right] \dotgeq  \exp \left(  ns  | H_{1-s}(A|P_{A}) -  R|^+  \right). \label{eqn:bound_e1}
\end{equation}
For the case $|\calE|=1$,  this establishes~\eqref{eqn:strong_lb} for $s\in [0,s_0(A|P_A)]$.

%\todo{For the case $|\calE|>1$, INCOMPLETE}

Now we extend our analysis to   $|\calE|>1$. Naturally, we operate on a type-by-type basis over $\calE^n$. Analogously to the derivation of~\eqref{eqn:bound_e1}  via Lemma \ref{lem:exp}, we see that  if $0\le s\le s_0 = \min_e s_0(A|P_{A|E=e})$, we have 
\begin{align}
&\bbE_{X_n} \left[ \rme^{s H_{1-s} (A^n | \barf_{X_n}(A^n), E^n, X_n | P_{AE}^n  ) } \right] \nn\\*
&\dotgeq\sum_{Q_E\in\calP_n(\calE) } P_E^n (\calT_{Q_E})   \exp \bigg(  ns \bigg |  \sum_{e\in\calE} Q_E(e) H_{1-s}(A|P_{A|E=e}) - R  \bigg|^+ \bigg) . \label{eqn:split_e}
\end{align}
See Remark~\ref{rmk:conditional} in Appendix~\ref{app:exp}  for a detailed description of this step.   
In fact, his derivation is similar to the corresponding calculations in Merhav's work in~\cite[Section~IV-C]{merhav08} and~\cite[Section~IV-D]{merhav14}.   Because $ P_E^n (\calT_{Q_E})  \doteq \rme^{-nD(Q_E \| P_E)}$,  
\begin{align}
&\bbE_{X_n} \left[ \rme^{s H_{1-s} (A^n | \barf_{X_n}(A^n), E^n, X_n | P_{AE}^n  ) } \right] \nn\\*
&\dotgeq  \exp\left( n \max_{Q_E \in \calP(\calE)} \left\{ -D(Q_E \| P_E) +   s \bigg |  \sum_{e\in\calE} Q_E(e) H_{1-s}(A|P_{A|E=e}) - R  \bigg|^+  \right\}\right) . \label{eqn:Ege1}
\end{align} 
%where in \eqref{eqn:split_e}, we split the sum over all $\be\in\cal\rme^n$ into $n$-types on $\calE$ and in \eqref{eqn:Ege1}, we used the fact that $P_\rme^n(\calT_{Q_E}) \doteq \exp ( - nD(Q_E\| P_E) )$. 
Denote the  optimizer in the maximization in \eqref{eqn:Ege1} as $Q_E^*$. If the  $|\fndot|^+$ is inactive for $Q_E^*$, by straightforward calculus, 
\begin{align}
Q_E^*(e)= \frac{P_E(e) \rme^{s H_{1-s} ( A|P_{A|E=e} ) }}{\sum_{e' \in\calE} P_E(e') \rme^{s H_{1-s}( A|P_{A|E=e'} ) }} , \quad\forall\,  e\in \calE,
\end{align}
while if the $|\fndot|^+$ is active for $Q_E^*$, obviously $Q_E^*(e) = P_E(e)$ for all $e\in \calE$.  By using these forms of the optimizer $Q_{E}^*$ and the fact that the conditional R\'enyi entropy $H_{1-s}(A|E|P_{AE})$ (defined in \eqref{eqn:renyi_ent_Q})  is related to the distribution $P_E$ and the unconditional  R\'enyi entropies  $\{H_{1-s}( A|P_{A|E=e }):e\in\calE\}$ as follows 
\begin{equation}
sH_{1-s}(A|E|P_{AE})  =  \log\sum_{  {e\in\calE} }  P_E(e ) \rme^{s H_{1-s}( A|P_{A|E=e } ) },
\end{equation}
  we see that the maximization in \eqref{eqn:Ege1} reduces to $s|H_{1-s}(A|E|P_{AE})-R|^+$.  Upon taking the $\log$, dividing by $ns$, and taking the $\varliminf$, we complete  the proof of \eqref{eqn:strong_lb} for the case where $|\calE|>1$, assuming $s\in [0,s_0]$. As such, we have completed the proof of the lower bound on the optimal compression rate for strongly universal   hash functions   in \eqref{eqn:optkey1_s0}.

\subsubsection{Lower Bound to $\widetilde{T}_{ s}$ }\label{sec:lb2}
For the lower bound to  $\widetilde{T}_{s}$, we  note  from the work by Hayashi in \cite[Lemma 5]{Hayashi_security} that for any $s\in (-1,1)\setminus \{0\}$ that  the conditional R\'enyi entropy and its Gallager form satisfy
\begin{equation}
H_{1+s} (A|E|P_{AE}) \ge H_{\frac{1}{1-s}}^\uparrow(A|E|P_{AE}) .
\end{equation} 
Furthermore,  because $s\mapsto H_{1+s}^\uparrow$ is monotonically non-increasing, %    and the relation between $H_{\infty}^\uparrow$ and $H_{\min}$ in \eqref{eqn:min_gal}, we know that 
\begin{align}
&\frac{1}{n}H_{1+s} (A^n|f_n (A^n) , E^n| P_{AE}^n )\nn\\*
& \ge   \frac{1}{n} H_{ \frac{1}{1-s}}^\uparrow(A^n|f_n (A^n) , E^n| P_{AE}^n ) \label{eqn:lb_min10} \\
%& \ge   \frac{1}{n} H_{ \min}^\uparrow(A^n|f_n (A^n) , E^n| P_{AE}^n ) \\
&= \frac{1 }{n}  H_{ \infty}^\uparrow (A^n|f_n (A^n) , E^n | P_{AE}^n ). \label{eqn:lb_min1}
\end{align}
We require that the term on the leftmost term of this inequality to vanish to since the constraint $\widetilde{G}(R,s)=0$ is present. This implies that  the normalized Gallager min-entropy $\frac{1 }{n}  H_{ \infty}^\uparrow$  necessarily vanishes. From the relation between the strong converse exponent and $H_\infty^\uparrow$ in~\eqref{eqn:str_conv_exp}, we see that $-\frac{1}{n}\log \rmP_{\rmc}^{(n)} (f_n)\to 0$, where  $\rmP_{\rmc}^{(n)}(f_n)$ is the  probability of correct optimal  (MAP) decoding under encoder $f_n$. By the  exponential   strong converse for Slepian-Wolf coding  by Oohama and Han~\cite[Theorem 2]{Oohama94}, we know that if $R< H(A|E|P_{AE})$, it is also necessarily   true that $\varliminf_{n\to\infty}-\frac{1}{n}\log \rmP_{\rmc}^{(n)}(f_n) > 0$. Hence, by contraposition,   $R  \ge H(A|E|P_{AE})$. Thus,  $
\widetilde{T}_{ s} \ge H(A|E|P_{AE})$.   
This, together with the corresponding upper bound proved in Section \ref{sec:ub}, immediately  establishes the lower bound to~\eqref{eqn:optkey1a} for all $s \in (0,1)$.
\end{proof}

\subsection{Proofs of  \eqref{eqn:optkey2} and \eqref{eqn:optkey2a}} \label{sec:prf_optkey2}
\begin{proof}
To prove  \eqref{eqn:optkey2} and \eqref{eqn:optkey2a},  first recall  the definitions of the optimal rates $T_{ s}^\uparrow$ and $ \widetilde{T}_s^\uparrow$ in Theorem \ref{cor:threshold}.  Since $\widetilde{G}^\uparrow(R,s)\le{G}^\uparrow(R, s)$,   and both functions are monotonically non-increasing in $R$,  it holds that $\widetilde{T}^\uparrow_s\le T^\uparrow_s$ for all $s$. 
%\begin{align}
% T_{ s}^\uparrow  &:= \inf\{ R: {G}^\uparrow(R,s) =0 \},\quad\mbox{and}\\
%\widetilde{T}_s^\uparrow & := \inf\{ R: \widetilde{G}^\uparrow(R, s) =0 \}.
%\end{align}

\subsubsection{Upper Bounds}
The upper bounds for $T_{- s}^\uparrow$ and $T_{ s}^\uparrow $ can be shown in the same way as the   arguments to upper bound $T_{-s}$ and $T_{s}$ in Section \ref{sec:ub} and  using the results in \eqref{eqn:min_gal_res} and \eqref{eqn:plus_max_uni_str}.  Details are omitted for brevity. 

\subsubsection{Lower Bounds} \label{sec:lower_bds}
For the lower bound  to $\widetilde{T}_{ - s}^\uparrow$, we use an important result by Fehr and Berens \cite[Theorem~3]{fehr}, which in our context states that for any hash function $f_n:\calA\to\{1,\ldots, M\}$, we have 
\begin{align}
H_{1-s}^\uparrow (A|f_n(A) , E| P_{AE}) \ge  H_{1-s}^\uparrow (A|  E| P_{AE})-\log M \label{eqn:fehr}
\end{align}
for all $s\in (-\infty, 1)$. Note that $s\in (-\infty,1)$ includes the range  of interest  for $\widetilde{T}_{ - s}^\uparrow$  which is $s\in [0,1/2]$.  The inequality in \eqref{eqn:fehr} is a consequence of   monotonicity under conditioning and the chain rule for the Gallager form of the conditional R\'enyi entropy.  See \eqref{eqn:mono} and \eqref{eqn:cr}.
Since $M = \rme^{nR}$ and $H_{1-s}^\uparrow (A^n|  E^n| P_{AE}^n) = nH_{1-s}^\uparrow (A|  E| P_{AE})$, this  implies that for $s\in [0,1/2]$, we have   
$
\widetilde{G}^{\uparrow}(R,-s) \ge | H_{1-s}^\uparrow(A|E|P_{AE}) - R |^+ $ which immediately leads  to  the bound $\widetilde{T}_{-s}^\uparrow \ge H_{1-s}^\uparrow(A|E|P_{AE})$,  
establishing~\eqref{eqn:optkey2}.  

Now, for the lower bound of $\widetilde{T}_{ s}^\uparrow$ in~\eqref{eqn:optkey2a}, we note from the monotonically decreasing nature of $s\mapsto H_{1+s}^\uparrow$ that  
\begin{equation}
\frac{1}{n} H_{1+s}^\uparrow (A^n|f_n (A^n) , E^n| P_{AE}^n )\ge \frac{1 }{n}  H_{ \infty}^\uparrow (A^n|f_n (A^n) , E^n | P_{AE}^n ) \label{eqn:lb_min}
\end{equation}
for all $s\in [0,\infty)$. 
We require that the term on the left to vanish since the constraint $\widetilde{G}^\uparrow(R,s)=0$ is present. Hence,  $\frac{1 }{n}  H_{ \infty}^\uparrow $ also vanishes. Similarly to the argument in Section \ref{sec:lb2}  after \eqref{eqn:lb_min1},  by invoking the exponential  strong converse to the  Slepian-Wolf theorem \cite[Theorem 2]{Oohama94}, we know that
%This implies that the term on the right of \eqref{eqn:lb_min} vanishes. This means from the definition of $H_{\min}^\uparrow$ that the source $A^n$ is decodable from its compression index $f(A^n)$ and the side information $\rme^n$. From the Slepian-Wolf theorem \cite{sw73}, we know that the rate of compresion must satisfy $R\ge H(A|E|P_{AE})$. Thus, 
$
\widetilde{T}_{ s}^\uparrow \ge H(A|E|P_{AE})
$.  
 This    establishes \eqref{eqn:optkey2a}.
\end{proof}
\begin{remark}
We remark that the proof of the lower bound to $\widetilde{T}_{-s}^\uparrow$  above is  much simpler than the proof of the   lower bound  of $\overline{T}_{-s}$  for strongly universal hash functions in~Section~\ref{sec:lb1} because we can leverage two useful properties of  $H_{1-s}^\uparrow $ (monotonicity and chain rule) leading to \eqref{eqn:lb_min}.  In contrast, $H_{1 - s}$ does not possess these properties. Observe that the proof in the first half of Section~\ref{sec:lower_bds} also allows us to conclude that the upper bound  in~\eqref{eqn:min_gal_res} for the direct part is coincident with the lower bound  on $\widetilde{G}^\uparrow(R,-s)$ for all $s\in [0,1/2]$, i.e., 
\begin{equation}
G^\uparrow(R,-s)=\widetilde{G}^\uparrow(R,-s) = | H_{1-s } ^\uparrow(A|E|P_{AE}) - R|^+ . 
\end{equation}

\end{remark}

\section{Proof of Theorem~\ref{thm:exponents}}\label{sec:prf_exp}
We  prove statements \eqref{eqn:exp1} and~\eqref{eqn:exp2} in Subsections \ref{sec:dir_exp_1} and \ref{sec:dir_exp_2} respectively. Statements \eqref{eqn:exp3}  and \eqref{eqn:exp4}   are jointly proved in Subsection  \ref{sec:dir_exp_4}.

\subsection{Proof of  \eqref{eqn:exp1} in Theorem \ref{thm:exponents}}\label{sec:dir_exp_1}
First we note that all the exponents are non-negative since $H_{1-s}(A^n | f_{X_n}(A^n), E^n, X_n | P_{AE}^n \times P_{X_n} )= O(n)$ and similarly for all the other R\'enyi information quantities. This gives the $|\fndot|^+$ signs in all lower bounds in~\eqref{eqn:exp1}--\eqref{eqn:exp4}. 

Fix $t\in [s,1]$. The one-shot bound in \eqref{10-15-30}  in Lemma \ref{lem:one-shot-direct} implies that     for any $\epsilon$-almost universal$_2$ hash function $f_{X_n}$, 
\begin{align}
&-\log\bbE_{X_n}\left[H_{1-s} (A^n | f_{X_n} (A^n) , E^n, X_n | P_{AE}^n ) \right]\nn\\*
&\ge -\log\bbE_{X_n}\left[H_{1-t} (A^n | f_{X_n} (A^n) , E^n, X_n | P_{AE}^n ) \right] \label{eqn:low_s_t}\\*
&=-\log \left\{\frac{1 }{t}\bbE_{X_n}\left[\log\left[\rme^{t H_{1-t} (A^n | f_{X_n} (A^n) , E^n, X_n | P_{AE}^n  ) } \right]\right] \right\}\\
&\ge -\log \left\{\frac{1 }{t} \log\bbE_{X_n} \left[\rme^{ t H_{1-t}  (A^n | f_{X_n} (A^n) , E^n, X_n | P_{AE}^n ) }  \right]\right\} \label{eqn:jens_log}\\
 &\ge -\log \left\{ \frac{1 }{t} \log \left( 1 + \frac{ \epsilon^{t} \rme^{t H_{1-t}  (A^n|E^n|P_{AE}^n )  } }{M^{t}} \right) \right\}\\
 &\ge -\log\left\{ \frac{1 }{t} \cdot  \frac{ \epsilon^{t}  \rme^{ n t H_{1-t}  (A |E  |P_{AE}  )  } }{M^{t}} \right\} \label{eqn:log_ine}\\
 &= -\log\left\{ \frac{1 }{t}\epsilon^{t}\right\} + nt (  R-H_{1-t}  (A |E  |P_{AE}  )  ), \label{eqn:log_ine2}
\end{align}
where in \eqref{eqn:low_s_t} we used the fact that $t\mapsto H_{1-t}$ is monotonically non-decreasing, in \eqref{eqn:jens_log} we applied Jensen's inequality to the concave function $t\mapsto \log t$,   and in \eqref{eqn:log_ine} we employed the bound  $\log(1+ t)\le t$. The   bound  in \eqref{eqn:log_ine2} holds for all $f_{X_n}\in\calU_2$ and all $t\in [s,1 ]$. Now,  we normalize by $n$ and take the $\varliminf$ as $n\to\infty$. Finally, we maximize over all $t\in [s,1]$. This yields~\eqref{eqn:exp1}, concluding the proof.

%HOW TO INTRODUCE THE OPTIMIZATION OVER $t$?

\subsection{Proof of  \eqref{eqn:exp2} in Theorem \ref{thm:exponents}}\label{sec:dir_exp_2}
Fix $t\in [s,1/2]$. The one-shot bound in \eqref{10-15-31} in Lemma \ref{lem:one-shot-direct}  implies  that  for any $\epsilon$-almost universal$_2$ hash function $f_{X_n}$, 
\begin{align}
&-\log\bbE_{X_n}\left[H_{1-s}^\uparrow (A^n | f_{X_n} (A^n) , E^n, X_n | P_{AE}^n ) \right]\nn\\*
&\ge -\log\bbE_{X_n}\left[H_{1-t}^\uparrow (A^n | f_{X_n} (A^n) , E^n, X_n | P_{AE}^n ) \right]\label{eqn:low_s_t2}\\*
&=-\log \left\{\frac{1-t}{t}\bbE_{X_n}\left[\log\left[\rme^{ \frac{t}{1-t} H_{1-t}^\uparrow (A^n | f_{X_n} (A^n) , E^n, X_n | P_{AE}^n ) } \right]\right] \right\}\\
&\ge -\log \left\{\frac{1-t}{t} \log\bbE_{X_n} \left[\rme^{ \frac{t}{1-t} H_{1-t}^\uparrow (A^n | f_{X_n} (A^n) , E^n, X_n | P_{AE}^n ) }  \right]\right\}\\
 &\ge -\log \left\{ \frac{1-t}{t} \log \left( 1 + \frac{ \epsilon^{\frac{t}{1-t}} \rme^{ \frac{t}{1-t} H_{1-t}^\uparrow (A^n|E^n|P_{AE}^n )  } }{M^{\frac{t}{1-t}}} \right) \right\}\\
 &\ge -\log\left\{ \frac{1-t}{t} \cdot \frac{ \epsilon^{\frac{t}{1-t}} \rme^{ n \frac{t}{1-t} H_{1-t}^\uparrow (A |E  |P_{AE}  )  } }{M^{\frac{t}{1-t}}} \right\}\\
 &= -\log\left\{ \frac{1-t}{t}\epsilon^{\frac{t}{1-t}}\right\} + n\frac{t}{1-t} \left(  R-H_{1-t}^\uparrow (A |E  |P_{AE}  )   \right)
\end{align}
In  \eqref{eqn:low_s_t2}, we used the fact that for $t\mapsto H_{1-t}^\uparrow$ is monotonically non-decreasing.   
This bound holds for all $f_{X_n}\in\calU_2$ and all $t\in [s,  1/2 ]$. Now,  we normalize by $n$ and take the $\varliminf$ as $n\to\infty$. Finally, we maximize over all $t\in [s,1/2]$. This yields \eqref{eqn:exp2}, concluding the proof.

%\subsection{Proof of  \eqref{eqn:exp3} in Theorem \ref{thm:exponents}}\label{sec:dir_exp_3}
%Now, we note that 
%\begin{align}
%&\varliminf_{n\to\infty}-\frac{1}{n}\log H_{1+s } (A^n | f_{X_n} (A^n) , E^n, X_n | P_{AE}^n\times P_{X_n})\nn\\*
%&\ge\varliminf_{n\to\infty}-\frac{1}{n}\log H_{1-s'} (A^n | f_{X_n} (A^n) , E^n, X_n | P_{AE}^n\times P_{X_n})
%\end{align}
%for all $s \in [0,\infty)$ and $s' \in [0,1]$. This is because $\alpha\mapsto H_{\alpha}$ is monotonically non-increasing.  This bound implies that $E(R,s )\ge E(R,-s')$. Combining this  with the lower bound   on the exponent of $H_{1-s'}$  in \eqref{eqn:exp1}  and noting that the  lower bound is maximized at $s'=0$ immediately establishes~\eqref{eqn:exp3}. 

%PROBABLY INCOMPLETE
\subsection{Proofs of  \eqref{eqn:exp3} and \eqref{eqn:exp4} in Theorem \ref{thm:exponents}}\label{sec:dir_exp_4}
We prove \eqref{eqn:exp4}  before proving \eqref{eqn:exp3}. We note that 
\begin{align}
&\varliminf_{n\to\infty}-\frac{1}{n}\log H_{1+s }^\uparrow (A^n | f_{X_n} (A^n) , E^n, X_n | P_{AE}^n\times P_{X_n})\nn\\*
&\ge\varliminf_{n\to\infty}-\frac{1}{n}\log H_{1-s'}^\uparrow (A^n | f_{X_n} (A^n) , E^n, X_n | P_{AE}^n\times P_{X_n})
\end{align}
for all $s \in [0,\infty)$ and $s'\in [0,1]$. This is because $\alpha\mapsto H_{\alpha}^\uparrow$ is monotonically non-increasing. This bound implies that $E^\uparrow(R,s )\ge E^\uparrow(R,-s')$. %Combining this  with the lower bound on the exponent of $H_{1-s}^\uparrow$ in \eqref{eqn:exp2} immediately establishes~\eqref{eqn:exp4}. 
 Combining this  with the lower bound   on the exponent of $H_{1-s'}^\uparrow$  in \eqref{eqn:exp2}  and noting that the  lower bound is maximized at $s'=0$ immediately establishes~\eqref{eqn:exp4}.  
 
Finally, we note from \eqref{eqn:ent_min} that  $H_{1+s}\le H_{1+s}^\uparrow$ so 
 \begin{align}
& \varliminf_{n\to\infty}-\frac{1}{n}\log H_{1+s } (A^n | f_{X_n} (A^n) , E^n, X_n | P_{AE}^n\times P_{X_n}) \nn\\*
&\ge   \varliminf_{n\to\infty}-\frac{1}{n}\log H_{1+s }^\uparrow (A^n | f_{X_n} (A^n) , E^n, X_n | P_{AE}^n\times P_{X_n}) .
 \end{align}
 Combining this with the lower bound on $E^\uparrow(R,s)$ in \eqref{eqn:exp4} completes the proof of \eqref{eqn:exp3}.
 
\section{Conclusion and Future Work} \label{sec:con} 

In this paper, we have developed novel techniques to bound the asymptotic behaviors of remaining uncertainties measured according to various conditional R\'enyi entropies. This is in contrast to other works~\cite{Liang,BlochBarros,Wyn75,tandon13,villard13, bross16} that quantify uncertainty using Shannon information measures. We motivated our study by showing that the quantities we characterize are generalizations of the error exponent and the strong converse exponent  for the Slepian-Wolf problem. We studied various important classes of hash functions, including universal$_2$ and strongly universal hash functions. Finally, we also showed that in many cases, the optimal compression rates to ensure that the normalized remaining uncertainties vanish can be  characterized exactly, and that they exhibit behaviors that are somewhat different  to when Shannon information measures are used.

In the future, we hope to derive lower bounds to the normalized remaining uncertainties and upper bounds on their exponents that match or approximately match their achievability counterparts in Theorems \ref{thm:rem} and \ref{thm:exponents}. In addition, just as in the authors' earlier work in~\cite{HayashiTan15}, we may also study the {\em second-order} or $\sqrt{n}$ behavior~\cite{TanBook} of the remaining uncertainties.  These challenging endeavors require  the development of new one-shot bounds as well as the application of new large-deviation and central-limit-type  bounds on various probabilities. 
\appendices

\section{One-Shot Direct Part Bounds} \label{app:one-shot}
 
\newcommand{\rE}{\bbE}
\newcommand{\sM}{M}

In this appendix, we state and prove several one-shot bounds on the various conditional R\'enyi entropies.  Lemmas \ref{lem:one-shot-direct2} and \ref{lem:one-shot3} are used in the proofs for the remaining uncertainties (Theorem~\ref{thm:rem}).  Lemma \ref{lem:one-shot-direct} is used in the proofs for the exponents (Theorem~\ref{thm:exponents}).

\begin{lemma}\label{lem:one-shot-direct2}
%Any ensemble of  $\epsilon$-almost universal$_2$ 
%hash functions 
%$f_X:{\cal A} \to {\cal M}=\{1, \ldots, \sM\}$
%satisfies 
For $\epsilon$-almost universal$_2$ hash functions $f_X:{\cal A} \to {\cal M}=\{1, \ldots, \sM\}$,
we have
\begin{align}
%\rE_X \rme^{-s H_{1+s}(A|f_X(A),E,X |P_{AE})}
%& \ge 
%\left(\frac{c+\epsilon}{\sM}\right)^{-s} 
% \sum_e P_E(e) \sum_{a:P_{A|E}(a|e)\le \frac{c}{\sM}} 
%P_{A|E}(a|e)^{1+s} \label{10-16-3}\\
\rE_X   \rme^{-s H_{1+s}(A|f_X(A),E,X |P_{AE} )}
& \ge 
 2^{-s}\sum_e P_E(e) \sum_{a:P_{A|E}(a|e) \ge \frac{\epsilon}{\sM}} 
P_{A|E}(a|e)
 \nn\\*
 &\qquad + 
\left(\frac{\epsilon}{\sM} \right)^{-s}
2^{-s}\sum_e P_E(e) \sum_{a:P_{A|E}(a|e) < \frac{\epsilon}{\sM}} 
P_{A|E}(a|e)^{1+s}\label{10-16-4}
\end{align}
for any $s \in [0,1]$.% any real number $c>0$.
\end{lemma}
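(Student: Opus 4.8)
The plan is to start from the definition of the Gallager-form conditional R\'enyi entropy and expand the inner $\ell_{1+s}$-norm over $\calA$, then use the randomness of the hash function to bound the preimage masses. Concretely, writing $f=f_X$ and conditioning on $E=e$, we have
\begin{equation}
\rme^{-sH_{1+s}(A|f_X(A),E,X|P_{AE})} = \bbE_X \sum_{e,m} P_E(e)\,P_{f(A)|E,X}(m|e,X)\Bigg(\sum_{a:f(a)=m} \frac{P_{A|E}(a|e)^{1+s}}{P_{f(A)|E,X}(m|e,X)^{1+s}}\Bigg)^{\frac{1}{1+s}},
\end{equation}
which after cancellation is $\bbE_X\sum_e P_E(e)\sum_m\big(\sum_{a:f(a)=m}P_{A|E}(a|e)^{1+s}\big)^{\frac{1}{1+s}}$. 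The key manipulation is then, for each fixed $a$ with $f(a)=m$, to lower bound the bracketed sum by splitting off the term $a$ itself: $\sum_{a':f(a')=m}P_{A|E}(a'|e)^{1+s}\ge P_{A|E}(a|e)^{1+s}+\sum_{a'\ne a:f(a')=f(a)}P_{A|E}(a'|e)^{1+s}$, and using the elementary inequality $(u+v)^{1/(1+s)}\ge 2^{-s/(1+s)}(u^{1/(1+s)}+v^{1/(1+s)})$ for $u,v\ge 0$ (superadditivity of $t\mapsto t^{1/(1+s)}$ up to the constant $2^{-s/(1+s)}$, since $1/(1+s)\in[1/2,1]$). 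Reorganizing the sum over $m$ into a sum over $a$ (so each $a$ contributes $P_{A|E}(a|e)^{1+s}\cdot(\cdots)^{\frac1{1+s}-1}$ appropriately — here I would instead carry the whole bracket and only later identify which $a$ to single out), I obtain a lower bound of the form $2^{-s}\sum_e P_E(e)\sum_a P_{A|E}(a|e)\cdot\big(1 + \tfrac{1}{P_{A|E}(a|e)^{1+s}}\sum_{a'\ne a,f(a')=f(a)}P_{A|E}(a'|e)^{1+s}\big)^{\frac{1}{1+s}-1}$-type expression; the cleaner route is to bound each $m$-block below by $\max$ of its two parts raised to $\frac{1}{1+s}$ and recombine.

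The decisive step is handling the expectation over $X$ via the $\epsilon$-almost universal$_2$ property. For the ``large'' atoms, where $P_{A|E}(a|e)\ge \epsilon/M$, I keep only the diagonal contribution $P_{A|E}(a|e)^{1+s}$ inside the block and use $(P_{A|E}(a|e)^{1+s})^{1/(1+s)}=P_{A|E}(a|e)$, which — after the $2^{-s}$ loss from the superadditivity split — yields the first term on the right of \eqref{10-16-4}. For the ``small'' atoms, where $P_{A|E}(a|e)<\epsilon/M$, I instead exploit collisions: by \eqref{eqn:hash}, $\bbE_X[\,\sum_{a'\ne a:f(a')=f(a)}P_{A|E}(a'|e)^{1+s}\,]\le \frac{\epsilon}{M}\sum_{a'}P_{A|E}(a'|e)^{1+s}\le \frac{\epsilon}{M}\cdot P_{A|E}(a|e)^{-s}\cdot(\text{something})$ — more precisely I want to show the collision block is, in expectation, at least a constant times $(\epsilon/M)^{-s}P_{A|E}(a|e)^{1+s}$ relative to what the diagonal gives, producing the factor $(\epsilon/M)^{-s}$ in the second term. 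Jensen's inequality (concavity of $t\mapsto t^{1/(1+s)}$) will be used to move $\bbE_X$ inside, or a reverse argument will be used so that the inequality direction is preserved; I need to be careful here since $\bbE_X$ of a concave function is $\le$ the function of $\bbE_X$, which is the wrong direction for a lower bound, so I will instead lower bound the block deterministically and only take expectations of the collision-mass term where the hash property gives a clean value.

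I expect the main obstacle to be exactly this tension between Jensen's inequality and the direction of the bound: I am after a \emph{lower} bound on $\bbE_X(\cdot)$, but the natural concavity of $t^{1/(1+s)}$ pushes the other way. The resolution I anticipate is to avoid applying Jensen at the level of the whole block and instead (i) use the deterministic superadditive split $(u+v)^{1/(1+s)}\ge 2^{-s/(1+s)}(u^{1/(1+s)}+v^{1/(1+s)})$ first, which is an honest pointwise lower bound, (ii) then take $\bbE_X$ which now passes through the sum linearly onto the collision masses, and (iii) invoke \eqref{eqn:hash} on those linear collision terms. A secondary technical point is the bookkeeping when reindexing the sum over hash buckets $m$ as a sum over source symbols $a$ — since each $a$ lies in exactly one bucket $f(a)$, this is just Fubini, but one must make sure the ``diagonal'' term is counted once and the off-diagonal collision terms are attributed consistently. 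Once these are in place, separating the sum according to the threshold $P_{A|E}(a|e)\gtrless \epsilon/M$ and discarding the non-dominant contribution in each regime gives precisely the two-term bound \eqref{10-16-4}, and the restriction $s\in[0,1]$ enters through $1/(1+s)\in[1/2,1]$, which is what makes the superadditivity constant $2^{-s/(1+s)}$ (hence $2^{-s}$ after raising to the appropriate power) valid.
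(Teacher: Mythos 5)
Your starting point is not the quantity in the statement. Lemma~\ref{lem:one-shot-direct2} concerns $H_{1+s}$ as defined in \eqref{eqn:renyi_ent_Q}, for which
\begin{equation}
\bbE_X\, \rme^{-sH_{1+s}(A|f_X(A),E,X|P_{AE})}
=\bbE_X\sum_e P_E(e)\sum_a P_{A|E}(a|e)^{1+s}\Bigl(\sum_{a'\in f_X^{-1}(f_X(a))}P_{A|E}(a'|e)\Bigr)^{-s},
\end{equation}
whereas the expression you expand, $\bbE_X\sum_e P_E(e)\sum_m\bigl(\sum_{a:f(a)=m}P_{A|E}(a|e)^{1+s}\bigr)^{\frac{1}{1+s}}$, is $\bbE_X\,\rme^{-\frac{s}{1+s}H^{\uparrow}_{1+s}(A|f_X(A),E,X|P_{AE})}$, i.e.\ the Gallager form treated in a different lemma. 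Consequently your whole strategy (superadditivity of $t\mapsto t^{1/(1+s)}$ with the constant $2^{-s/(1+s)}$, and the worry that Jensen goes the wrong way for a concave power) is aimed at the wrong target. For the actual statement the exponent on the preimage mass is $-s$, so $t\mapsto t^{-s}$ is \emph{convex}, Jensen moves $\bbE_X$ inside in the right direction for a lower bound, and the $\epsilon$-almost-universal property is used as an \emph{upper} bound, $\bbE_X\sum_{a'\in f_X^{-1}(f_X(a))}P_{A|E}(a'|e)\le P_{A|E}(a|e)+\frac{\epsilon}{M}\le 2\max\{P_{A|E}(a|e),\frac{\epsilon}{M}\}$, which the negative power then flips into the two-regime lower bound of \eqref{10-16-4} (this also explains the clean factor $2^{-s}$, not $2^{-s/(1+s)}$; your parenthetical ``hence $2^{-s}$ after raising to the appropriate power'' has no corresponding step in your plan).

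Even read as an attempt at the Gallager-form bound, the sketch has a real gap at the ``small atoms'' step: after your pointwise split, the collision mass $\sum_{a'\ne a:\,f(a')=f(a)}P_{A|E}(a'|e)^{1+s}$ enters the lower bound with a \emph{positive} exponent, so to proceed you would need a lower bound on its expectation; but Definition~\ref{def:has} only gives $\Pr(f_X(a_1)=f_X(a_2))\le\epsilon/M$, an upper bound, and an $\epsilon$-almost universal$_2$ hash may have no collisions at all (e.g.\ when $M\ge|\calA|$ an injective $f$ qualifies), so no such lower bound exists in general. This is precisely why the correct route keeps the preimage mass raised to a negative power before invoking the hash property. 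There is also an unresolved bookkeeping issue in your reindexing: the per-bucket quantity $\bigl(\sum_{a:f(a)=m}P_{A|E}(a|e)^{1+s}\bigr)^{1/(1+s)}$ is not a sum of per-$a$ terms, so ``splitting off the term $a$ itself'' can only be done for one designated $a$ per bucket; the paper's device for converting a sum over buckets into a sum over $a$ is to factor the bracket as $(\sum)\cdot(\sum)^{-s/(1+s)}$ and distribute the first factor, a step your proposal does not contain.
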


\begin{proof} 
We first establish some basic inequalities: For any $(a,e)\in\calA \times \calE$,   and any $\epsilon$-almost universal$_2$ hash function $f_{X_n}:{\cal A} \to {\cal M}=\{1, \ldots, \sM\}$,
\begin{align}
\bbE_{X} \sum_{a'\in f_X^{-1}(f_X(a))}P_{A|E}(a'|e)&\le    P_{A|E}(a|e ) +\frac{\epsilon}{M}\sum_{a'\ne a}P_{A|E}(a'|e)\label{eqn:expect_hash}\\
&\le    P_{A|E}(a|e ) +\frac{\epsilon}{M}\label{eqn:expect_hash2}\\
&\le  2\max\left\{ P_{A|E}(a |e),\frac{\epsilon}{M} \right\}. \label{10-14-7}
\end{align}
Using \eqref{10-14-7}, we have
\begin{align}
&\rE_X \rme^{-s H_{1+s}(A|f_X(A),E,X|P_{AE})}\nn\\*
& = 
\rE_X \sum_e P_E(e) \sum_{i} \left(\sum_{a\in f_X^{-1}(i) }P_{A|E}(a|e) \right)
\left(
\sum_{a'\in f_X^{-1}(i) }
\left(\frac{P_{A|E}(a'|e)}{ \sum_{a\in f_X^{-1}(i) }P_{A|E}(a|e )}\right )^{1+s}
\right) \\
& =
\rE_X \sum_e P_E(e) \sum_{i}  \left(\sum_{a\in f_X^{-1}(i) }P_{A|E}(a|e) \right)^{-s}
\left(
\sum_{a'\in f_X^{-1}(i) }
P_{A|E}(a'|e)^{1+s}
\right) \\
& =
\rE_X \sum_e P_E(e) \sum_{a} 
P_{A|E}(a|e)^{1+s}
\left(\sum_{a'\in f_X^{-1}(f_X(a)) }P_{A|E}(a'|e)\right)^{-s} \\
& \ge
 \sum_e P_E(e) \sum_{a} 
P_{A|E}(a|e)^{1+s}
\left(\rE_X \sum_{a'\in f_X^{-1}(f_X(a)) }P_{A|E}(a'|e)\right)^{-s} \\
& \ge
 \sum_e P_E(e) \sum_{a} 
P_{A|E}(a|e)^{1+s}
\left(2 \max\left\{ P_{A|E}(a|e)  , \frac{\epsilon}{\sM} \right\}\right)^{-s} \\
& =
 2^{-s}\sum_e P_E(e) \sum_{a} 
P_{A|E}(a|e)^{1+s}
 \max \left\{ P_{A|E}(a|e)^{-s}  , \left(\frac{\epsilon}{\sM}\right)^{-s} \right\} \\
& =
 2^{-s}\sum_e P_E(e) \sum_{a:P_{A|E}(a|e) \ge \frac{\epsilon}{\sM}} 
P_{A|E}(a|e)
+ 
\left(\frac{\epsilon}{\sM}\right)^{-s}
2^{-s}\sum_e P_E(e) \sum_{a:P_{A|E}(a|e) < \frac{\epsilon}{\sM}} 
P_{A|E}(a|e)^{1+s}
\end{align}
Thus, we obtain \eqref{10-16-4}.
\end{proof}

\begin{lemma} \label{lem:one-shot3}
%Any ensemble of    $\epsilon$-almost universal$_2$ 
%hash functions 
%$f_X:{\cal A} \to {\cal M}=\{1, \ldots, \sM\}$
%satisfies 
For $\epsilon$-almost universal$_2$ hash functions $f_X:{\cal A} \to {\cal M}=\{1, \ldots, \sM\}$,
we have
%\begin{align}
%& \rE_X \rme^{-\frac{s}{1+s} H_{1+s}^{\uparrow}(A|f_X(A),E,X|P_{A E})} \nn\\
%&\ge 
%\left(\frac{c+\epsilon}{\sM} \right)^{-\frac{s}{1+s}} 
% \sum_e P_E(e) \left(\sum_{a:P_{A|E}(a|e)\le \frac{c}{\sM}} 
%P_{A|E}(a|e)^{1+s}\right)^{\frac{1}{1+s}} \label{10-16-3b} \end{align}
%and 
\begin{align}
&\rE_X \rme^{-\frac{s}{1+s} H_{1+s}^{\uparrow}(A|f_X(A),E,X|P_{A E})} \ge  
2^{-\frac{s}{1+s}}
 \sum_e P_E(e) 
\sum_{a: P_{A|E}(a|e)^{1+s} \ge \frac{\epsilon}{\sM}
\sum_{a'} P_{A|E}(a'|e)^{1+s}} 
P_{A|E}(a|e) \nn\\
&\qquad  +
2^{-\frac{s}{1+s}}
\left(\frac{\epsilon}{\sM} \right)^{-\frac{s}{1+s}}
\sum_e P_E(e) 
\sum_{a: P_{A|E}(a|e)^{1+s} < \frac{\epsilon}{\sM}
\sum_{a'} P_{A|E}(a'|e)^{1+s}} 
P_{A|E}(a|e)^{1+s}
\left( 
\sum_{a'} P_{A|E}(a'|e)^{1+s}
\right)^{-\frac{s}{1+s}}.
\label{10-16-4b}
\end{align}
for any $s \in [0,\infty)$.% any real number $c>0$.
\end{lemma}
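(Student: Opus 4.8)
The plan is to mirror the proof of Lemma~\ref{lem:one-shot-direct2} but work with the Gallager form $H_{1+s}^{\uparrow}$ instead of $H_{1+s}$. First I would recall the definition in \eqref{eqn:min_gal}--\eqref{eqn:gallager_form}: from the expression $\rme^{-\frac{s}{1+s}H_{1+s}^{\uparrow}(A|E|P_{AE})}=\sum_e P_E(e)\big(\sum_a P_{A|E}(a|e)^{1+s}\big)^{\frac{1}{1+s}}$, one sees that conditioning additionally on $f_X(A)$ and $X$ turns the inner summation into a sum over hash bins. Concretely, writing $Q(\cdot|m,e)$ for the conditional law of $A$ given $f_X(A)=m,E=e$, the bin $f_X^{-1}(m)$ carries probability $\sum_{a\in f_X^{-1}(m)}P_{A|E}(a|e)$, and inside each bin the normalized distribution has $(1+s)$-moment $\big(\sum_{a\in f_X^{-1}(m)}P_{A|E}(a|e)\big)^{-(1+s)}\sum_{a\in f_X^{-1}(m)}P_{A|E}(a|e)^{1+s}$. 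Raising to the power $\frac{1}{1+s}$ and multiplying by $P_E(e)$ times the bin weight, the bin-weight factors combine to exponent $1-\frac{1}{1+s}=\frac{s}{1+s}$, giving
\begin{align}
\rE_X\,\rme^{-\frac{s}{1+s}H_{1+s}^{\uparrow}(A|f_X(A),E,X|P_{AE})}
=\rE_X\sum_e P_E(e)\sum_a P_{A|E}(a|e)^{1+s}\Big(\sum_{a'\in f_X^{-1}(f_X(a))}P_{A|E}(a'|e)\Big)^{-\frac{s}{1+s}}.
\end{align}

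The next step is to lower bound the expectation over $X$. Since $t\mapsto t^{-\frac{s}{1+s}}$ is convex on $(0,\infty)$, Jensen's inequality pushes the expectation inside, so the right-hand side is at least $\sum_e P_E(e)\sum_a P_{A|E}(a|e)^{1+s}\big(\rE_X\sum_{a'\in f_X^{-1}(f_X(a))}P_{A|E}(a'|e)\big)^{-\frac{s}{1+s}}$. Then I invoke the $\epsilon$-almost universal$_2$ bound \eqref{10-14-7}, namely $\rE_X\sum_{a'\in f_X^{-1}(f_X(a))}P_{A|E}(a'|e)\le 2\max\{P_{A|E}(a|e),\epsilon/\sM\}$. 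Applying this inside the decreasing function $t\mapsto t^{-\frac{s}{1+s}}$ yields the lower bound
\begin{align}
\rE_X\,\rme^{-\frac{s}{1+s}H_{1+s}^{\uparrow}(A|f_X(A),E,X|P_{AE})}
\ge 2^{-\frac{s}{1+s}}\sum_e P_E(e)\sum_a P_{A|E}(a|e)^{1+s}\,\max\Big\{P_{A|E}(a|e),\tfrac{\epsilon}{\sM}\Big\}^{-\frac{s}{1+s}}.
\end{align}

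Finally I split the inner sum over $a$ according to whether $P_{A|E}(a|e)\ge \epsilon/\sM$ or $P_{A|E}(a|e)<\epsilon/\sM$. On the first set the maximum equals $P_{A|E}(a|e)$, and $P_{A|E}(a|e)^{1+s}\cdot P_{A|E}(a|e)^{-\frac{s}{1+s}}=P_{A|E}(a|e)^{1+s-\frac{s}{1+s}}=P_{A|E}(a|e)$ — wait, this is not quite the stated form, so here I must be careful: the threshold in \eqref{10-16-4b} is phrased in terms of $P_{A|E}(a|e)^{1+s}$ versus $\frac{\epsilon}{\sM}\sum_{a'}P_{A|E}(a'|e)^{1+s}$, not in terms of $P_{A|E}(a|e)$ versus $\epsilon/\sM$. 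So the honest route is to use the scale-invariant version of \eqref{10-14-7}: dividing numerator and denominator by $\sum_{a'}P_{A|E}(a'|e)^{1+s}$ is not what the hash bound gives, so instead one should apply the hash inequality in the form $\rE_X\sum_{a'\in f_X^{-1}(f_X(a))}P_{A|E}(a'|e)^{1+s}\le 2\max\{P_{A|E}(a|e)^{1+s},\frac{\epsilon}{\sM}\sum_{a'}P_{A|E}(a'|e)^{1+s}\}$, and then the factorization works out with the $\big(\sum_{a'}P_{A|E}(a'|e)^{1+s}\big)^{-\frac{s}{1+s}}$ factor appearing exactly as in the second term. The main obstacle, and the step I expect to need the most care, is getting this bookkeeping of exponents exactly right: one must arrange the Gallager expression so that the quantity the hash inequality controls is $\sum_{a'\in f_X^{-1}(f_X(a))}P_{A|E}(a'|e)^{1+s}$ (for which \eqref{10-14-7} applies verbatim with $P_{A|E}$ replaced by $P_{A|E}^{1+s}$, since the $\epsilon$-almost universal$_2$ property bounds collision probabilities regardless of the weights attached), and then check that the powers $1+s$ and $-\frac{s}{1+s}$ recombine to produce $P_{A|E}(a|e)$ on the large-probability set and the stated expression on the small-probability set. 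Once the split is performed, the two terms are exactly the right-hand side of \eqref{10-16-4b}, completing the proof.
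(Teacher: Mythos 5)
Your final, self-corrected plan is exactly the paper's proof: expand the Gallager entropy bin-by-bin, push $\bbE_X$ inside the negative power by Jensen's inequality, apply the $\epsilon$-almost-universal$_2$ bound to the weights $P_{A|E}(\cdot|e)^{1+s}$, and split according to whether $P_{A|E}(a|e)^{1+s}$ exceeds $\frac{\epsilon}{M}\sum_{a'}P_{A|E}(a'|e)^{1+s}$. The genuine problem is the identity you display at the outset, which is precisely the ``bookkeeping'' you flag but never resolve. The bin weight does not survive with exponent $\frac{s}{1+s}$: writing $w_i=\sum_{a\in f_X^{-1}(i)}P_{A|E}(a|e)$, each bin contributes $w_i\big(\sum_{a\in f_X^{-1}(i)}(P_{A|E}(a|e)/w_i)^{1+s}\big)^{\frac{1}{1+s}}=w_i\cdot w_i^{-1}\big(\sum_{a\in f_X^{-1}(i)}P_{A|E}(a|e)^{1+s}\big)^{\frac{1}{1+s}}$, so $w_i$ cancels completely and the correct starting identity is
\begin{align}
\bbE_X\,\rme^{-\frac{s}{1+s}H_{1+s}^{\uparrow}(A|f_X(A),E,X|P_{AE})}
=\bbE_X\sum_e P_E(e)\sum_a P_{A|E}(a|e)^{1+s}\Big(\sum_{a'\in f_X^{-1}(f_X(a))}P_{A|E}(a'|e)^{1+s}\Big)^{-\frac{s}{1+s}},
\end{align}
with the $(1+s)$-powers inside the bin sum rather than the bin weight. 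Your displayed identity (and the first Jensen-plus-\eqref{10-14-7} chain built on it, with threshold $P_{A|E}(a|e)$ versus $\epsilon/M$) bounds a different quantity and rightly does not match \eqref{10-16-4b}.

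Once the identity is corrected, the rest of your plan goes through verbatim and is the paper's argument: $t\mapsto t^{-\frac{s}{1+s}}$ is convex on $(0,\infty)$, so Jensen brings $\bbE_X$ inside; the derivation of \eqref{10-14-7} applies with $P_{A|E}(\cdot|e)$ replaced by $P_{A|E}(\cdot|e)^{1+s}$ (the universal$_2$ property only constrains collision probabilities, and one keeps $\frac{\epsilon}{M}\sum_{a'\ne a}P_{A|E}(a'|e)^{1+s}\le\frac{\epsilon}{M}\sum_{a'}P_{A|E}(a'|e)^{1+s}$ instead of bounding it by $\frac{\epsilon}{M}$), giving $\bbE_X\sum_{a'\in f_X^{-1}(f_X(a))}P_{A|E}(a'|e)^{1+s}\le 2\max\big\{P_{A|E}(a|e)^{1+s},\frac{\epsilon}{M}\sum_{a'}P_{A|E}(a'|e)^{1+s}\big\}$; and the exponents recombine as $P_{A|E}(a|e)^{(1+s)(1-\frac{s}{1+s})}=P_{A|E}(a|e)$ on the large-probability set, while the small-probability set produces the second term of \eqref{10-16-4b}. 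So the gap is the incorrect intermediate identity; with it replaced as above, your proof is complete and coincides with the paper's.
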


\begin{proof} 
Using \eqref{10-14-7}, we have
\begin{align}
& \rE_X \rme^{-s H_{1+s}^{\uparrow}(A|f_X(A),E,X|P_{AE})} \\
& = 
\rE_X \sum_e P_E(e) \sum_{i} \left(\sum_{a\in f_X^{-1}(i) }P_{A|E}(a|e)\right)
\left(
\sum_{a'\in f_X^{-1}(i) }
\left(\frac{P_{A|E}(a'|e)}{ \sum_{a\in f_X^{-1}(i) }P_{A|E}(a|e )} \right)^{1+s}
\right)^{\frac{1}{1+s}} \\
&=
\rE_X \sum_e P_E(e) \sum_{i} 
\left(\sum_{a\in f_X^{-1}(i) }P_{A|E}(a|e)^{1+s} \right)
\left(\sum_{a'\in f_X^{-1}(i)}P_{A|E}(a'|e)^{1+s}\right)^{-\frac{s}{1+s}} \\
& = 
\rE_X \sum_e P_E(e) \sum_{a} 
P_{A|E}(a|e)^{1+s}
\left(\sum_{a'\in f_X^{-1}(f_X(a)) }P_{A|E}(a'|e)^{1+s}\right)^{-\frac{s}{1+s}} \\
&\ge 
 \sum_e P_E(e) \sum_{a} 
P_{A|E}(a|e)^{1+s}
\left(\rE_X \sum_{a'\in f_X^{-1}(f_X(a)) }P_{A|E}(a'|e)^{1+s}\right)^{-\frac{s}{1+s}} \\
&\ge 
 \sum_e P_E(e) \sum_{a} 
P_{A|E}(a|e)^{1+s}
\left(2 
\max \left\{ P_{A|E}(a|e)^{1+s}, 
\frac{\epsilon}{\sM}
\sum_{a'} P_{A|E}(a'|e)^{1+s} \right\}
\right)^{-\frac{s}{1+s}} \\
&\ge 
 \sum_e P_E(e) 
\sum_{a: P_{A|E}(a|e)^{1+s} \ge \frac{\epsilon}{\sM}
\sum_{a'} P_{A|E}(a'|e)^{1+s}} 
P_{A|E}(a|e)^{1+s}
(2 
 P_{A|E}(a|e)^{1+s}
)^{-\frac{s}{1+s}} \nn\\*
& \qquad +
\sum_e P_E(e) 
\sum_{a: P_{A|E}(a|e)^{1+s} < \frac{\epsilon}{\sM}
\sum_{a'} P_{A|E}(a'|e)^{1+s}} 
P_{A|E}(a|e)^{1+s}
\left(2 
\frac{\epsilon}{\sM}
\sum_{a'} P_{A|E}(a'|e)^{1+s}
\right)^{-\frac{s}{1+s}} \\
&= 
2^{-\frac{s}{1+s}}
 \sum_e P_E(e) 
\sum_{a: P_{A|E}(a|e)^{1+s} \ge \frac{\epsilon}{\sM}
\sum_{a'} P_{A|E}(a'|e)^{1+s}} 
P_{A|E}(a|e) \nn\\*
& \qquad+
2^{-\frac{s}{1+s}}
\left(\frac{\epsilon}{\sM} \right)^{-\frac{s}{1+s}}
\sum_e P_E(e) 
\sum_{a: P_{A|E}(a|e)^{1+s} < \frac{\epsilon}{\sM}
\sum_{a'} P_{A|E}(a'|e)^{1+s}} 
P_{A|E}(a|e)^{1+s}
\left( 
\sum_{a'} P_{A|E}(a'|e)^{1+s}
\right)^{-\frac{s}{1+s}}.
\end{align}
Thus, we obtain \eqref{10-16-4b}.
\end{proof}

\begin{lemma} \label{lem:one-shot-direct}
%When the ensemble of 
%hash functions 
%$f_X:{\cal A} \to {\cal M}=\{1, \ldots, \sM\}$
%is $\epsilon$-almost universal$_2$,
For $\epsilon$-almost universal$_2$ hash functions $f_X:{\cal A} \to {\cal M}=\{1, \ldots, \sM\}$,
we have
\begin{align}
\rE_X \rme^{s H_{1-s}(A|f_X(A),E,X |P_{AE}  )}
& \le 1+ 
\frac{\epsilon^s \rme^{s H_{1-s}(A|E|P_{AE})}}{\sM^s}, 
\label{10-15-30}
\end{align} for all $s \in [0,1]$. In addition, we also have 
\begin{align}
\rE_X \rme^{\frac{s}{1-s} H_{1-s}^{\uparrow}(A|f_X(A),E,X |P_{AE}  )}
& \le 
1+ 
\frac{\epsilon^{\frac{s}{1-s}}
\rme^{\frac{s}{1-s} H_{1-s}^{\uparrow}(A|E|P_{AE})} 
}{\sM^{\frac{s}{1-s}}}, 
\label{10-15-31}
\end{align}
for all $s \in  \left[0,1/2 \right]$.
\end{lemma}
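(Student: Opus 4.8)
\textbf{Proof plan for Lemma~\ref{lem:one-shot-direct}.}

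The plan is to prove both bounds by the same two-stroke strategy: first rewrite the averaged R\'enyi quantity as an expectation over the hash randomness of a sum indexed by $(a,e)$, then bound the ``collision mass'' inside via the defining property of an $\epsilon$-almost universal$_2$ hash function, namely $\Pr(f_X(a_1)=f_X(a_2))\le\epsilon/\sM$ for distinct $a_1,a_2$. For \eqref{10-15-30}, the key starting identity is the analogue of \eqref{10-15-30x}, i.e.
\begin{equation}
\rE_X \rme^{s H_{1-s}(A|f_X(A),E,X|P_{AE})}
= \sum_e P_E(e)\sum_a P_{A|E}(a|e)^{1-s}\,\rE_X\!\left(\sum_{a'\in f_X^{-1}(f_X(a))} P_{A|E}(a'|e)\right)^{\!s}. \nonumber
\end{equation}
Since $s\in[0,1]$, the map $t\mapsto t^s$ is concave, so Jensen's inequality (applied to the inner expectation over $X$) gives the upper bound $\rE_X(\cdots)^s \le (\rE_X \cdots)^s$. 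Then split $f_X^{-1}(f_X(a))$ into the term $a'=a$ and the rest, use $\rE_X\sum_{a'\neq a:\,f_X(a')=f_X(a)}P_{A|E}(a'|e)\le \frac{\epsilon}{\sM}\sum_{a'}P_{A|E}(a'|e)\le\frac{\epsilon}{\sM}$ (exactly \eqref{eqn:expect_hash}--\eqref{eqn:expect_hash2}), and finally apply the elementary inequality $(b+c)^s\le b^s+c^s$ for $b,c\ge0$, $s\in[0,1]$. This yields
\begin{equation}
\rE_X \rme^{s H_{1-s}(A|f_X(A),E,X|P_{AE})}
\le \sum_e P_E(e)\sum_a P_{A|E}(a|e)^{1-s}\!\left(P_{A|E}(a|e)^s + \Big(\tfrac{\epsilon}{\sM}\Big)^{\!s}\right)
= 1 + \frac{\epsilon^s}{\sM^s}\,\rme^{sH_{1-s}(A|E|P_{AE})},\nonumber
\end{equation}
using $\sum_e P_E(e)\sum_a P_{A|E}(a|e)=1$ and the definition \eqref{eqn:renyi_ent_Q} of $H_{1-s}(A|E|P_{AE})$ for the last term.

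For \eqref{10-15-31} the skeleton is identical but with the Gallager-form identity \eqref{10-15-31x} as the starting point: write $\rE_X \rme^{\frac{s}{1-s}H_{1-s}^\uparrow(\cdots)}$ as a double sum over $(e,a)$ of $P_{A|E}(a|e)^{1-s}$ times $\rE_X\big(\sum_{a'\in f_X^{-1}(f_X(a))}P_{A|E}(a'|e)^{1-s}\big)^{\frac{s}{1-s}}$ (one checks the exponent $\frac{1}{1+(-s)/(1-s)}\cdot$ bookkeeping reduces precisely to this). The crucial point is that the outer exponent here is $\frac{s}{1-s}$, and the restriction $s\in[0,1/2]$ is exactly what guarantees $\frac{s}{1-s}\in[0,1]$, so again $t\mapsto t^{s/(1-s)}$ is concave and Jensen applies in the same direction; one then uses $\rE_X\sum_{a'\neq a}\mathbbm{1}\{f_X(a')=f_X(a)\}P_{A|E}(a'|e)^{1-s}\le \frac{\epsilon}{\sM}\sum_{a'}P_{A|E}(a'|e)^{1-s}$ and the subadditivity $(b+c)^{s/(1-s)}\le b^{s/(1-s)}+c^{s/(1-s)}$. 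Collecting terms gives $1$ plus $\frac{\epsilon^{s/(1-s)}}{\sM^{s/(1-s)}}\sum_e P_E(e)\big(\sum_{a'}P_{A|E}(a'|e)^{1-s}\big)^{1+\frac{s}{1-s}}$, and since $1+\frac{s}{1-s}=\frac{1}{1-s}$, this last sum is exactly $\sum_e P_E(e)\big(\sum_{a'}P_{A|E}(a'|e)^{1-s}\big)^{1/(1-s)}=\rme^{\frac{s}{1-s}H_{1-s}^\uparrow(A|E|P_{AE})}$ by \eqref{eqn:gallager_form} (with parameter $1+s\rightsquigarrow 1-s$).

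I expect the main obstacle to be purely bookkeeping rather than conceptual: getting every exponent right when one substitutes $1+s\mapsto 1-s$ into the Gallager-form identities (the signs and the fact that $\frac{-s}{1+s}$ becomes $\frac{s}{1-s}$), and in particular verifying that the concavity direction needed for Jensen is available precisely on the stated parameter ranges---$s\in[0,1]$ for \eqref{10-15-30} and $s\in[0,1/2]$ for \eqref{10-15-31}, the latter because $s/(1-s)\le 1$ fails as soon as $s>1/2$. One should also be slightly careful that $\rE_X$ and the sums over $a,a'$ may be freely interchanged (they are finite) and that the bound $(b+c)^p\le b^p+c^p$ is being used for $p\in[0,1]$ in both cases. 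Everything else is the routine algebra already displayed in the proofs of Lemmas~\ref{lem:one-shot-direct2} and~\ref{lem:one-shot3}, run with the inequalities pointing the other way.
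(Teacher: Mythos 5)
Your proposal is correct and follows essentially the same route as the paper's own proof: the identity rewriting the averaged R\'enyi quantity as a sum over $(e,a)$ of $P_{A|E}(a|e)^{1-s}$ times a collision mass, Jensen's inequality for $t\mapsto t^{s}$ (resp.\ $t\mapsto t^{s/(1-s)}$, which is where $s\le 1/2$ enters), the $\epsilon$-almost universal$_2$ bound on the expected collision mass, and the subadditivity $(b+c)^{p}\le b^{p}+c^{p}$ for $p\in[0,1]$, followed by the same algebraic collection of terms. No gaps; the bookkeeping you flag (including $1+\tfrac{s}{1-s}=\tfrac{1}{1-s}$) is exactly what the paper carries out.
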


\begin{proof}
%For any $a$, \eqref{eqn:expect_hash}
%\begin{align}
%\bbE_X\sum_{a' \in f_X(f_X(a))} P_{A|E}(a'|e) & \le P_{A|E}(a|e ) +\frac{\epsilon}{M}\sum_{a'\ne a}P_{A|E}(a'|e)\\
%& \le  P_{A|E}(a|e) + \frac{\epsilon}{M}. \label{eqn:this_fact}
%\end{align}
%Using this fact, we have
We have 
\begin{align}
&\rE_X \rme^{s H_{1-s}(A|f_X(A),E,X|P_{AE})} \nn\\*
& =
\rE_X \sum_e P_E(e) \sum_{i} \left(\sum_{a\in f_X^{-1}(i) }P_{A|E}(a|e) \right)
\left(
\sum_{a'\in f_X^{-1}(i) }
\left(\frac{P_{A|E}(a'|e)}{ \sum_{a\in f_X^{-1}(i) }P_{A|E}(a|e) } \right)^{1-s}
\right) \\
& =
\rE_X \sum_e P_E(e) \sum_{i} \left(\sum_{a\in f_X^{-1}(i) }P_{A|E}(a|e)\right)^s
\left(
\sum_{a'\in f_X^{-1}(i) }
P_{A|E}(a'|e)^{1-s}
\right) \\
& =
\rE_X \sum_e P_E(e) \sum_{a} 
P_{A|E}(a|e)^{1-s}
\left(\sum_{a'\in f_X^{-1}(f_X(a)) }P_{A|E}(a'|e) \right)^s \label{eqn:derivation_H_1minuss} \\
& \le
 \sum_e P_E(e) \sum_{a} 
P_{A|E}(a|e)^{1-s}
\left(\rE_X \sum_{a'\in f_X^{-1}(f_X(a)) }P_{A|E}(a'|e)\right)^s \label{eqn:jensen1} \\
& \le
 \sum_e P_E(e) \sum_{a} 
P_{A|E}(a|e)^{1-s}
\left(P_{A|E}(a|e)  + \frac{\epsilon}{\sM}\right)^s  \label{eqn:use_this_fact}\\
& \le
 \sum_e P_E(e) \sum_{a} 
P_{A|E}(a|e)^{1-s}
\left(P_{A|E}(a|e)^s  +  \left(\frac{\epsilon}{\sM}  \right)^s  \right) \label{eqn:use_this_fact2}\\
& =
 \sum_e P_E(e) \sum_{a} 
\left(P_{A|E}(a|e)
+P_{A|E}(a|e)^{1-s} \left(\frac{\epsilon}{\sM} \right)^s \right) \\
& =
1+ \sum_e P_E(e) \sum_{a} P_{A|E}(a|e)^{1-s} \left(\frac{\epsilon}{\sM} \right)^s  \\
&=
1+ 
\frac{\epsilon^s \rme^{s H_{1-s}(A|E|P_{AE})}}{\sM^s}.
\end{align} In \eqref{eqn:jensen1}, we   applied Jensen's inequality with the concave function $t\mapsto t^{s}$. Here is where the condition $s \in [0,1 ]$ is used.  In~\eqref{eqn:use_this_fact}, we used \eqref{eqn:expect_hash2} and in \eqref{eqn:use_this_fact2}, we used the fact that $(b+c)^s\le b^s+c^s$ for $b,c\ge 0 $ and $s\in [0,1]$  \cite[Problem 4.15(f)]{gallagerIT}.  
Thus, we obtain \eqref{10-15-30}.

For \eqref{10-15-31}, consider,
\begin{align}
&\rE_X \rme^{\frac{s}{1-s} H_{1-s}^{\uparrow}(A|f_X(A),E,X|P_{AE})}\nn\\*
& =
\rE_X \sum_e P_E(e) \sum_{i}  \left(\sum_{a\in f_X^{-1}(i) }P_{A|E}(a|e) \right)
\left(\sum_{a'\in f_X^{-1}(i) }
\left(\frac{P_{A|E}(a'|e)}{ \sum_{a\in f_X^{-1}(i) }P_{A|E}(a|e )}\right)^{1-s}
\right)^{\frac{1}{1-s}}
 \\
& =
\rE_X \sum_e P_E(e) \sum_{i} 
\left(
\sum_{a'\in f_X^{-1}(i) }
P_{A|E}(a'|e)^{1-s}
\right)^{\frac{1}{1-s}} \\
& =
\rE_X \sum_e P_E(e) \sum_{i} 
\left(
\sum_{a\in f_X^{-1}(i) }
P_{A|E}(a|e)^{1-s}
\right)
\left(
\sum_{a'\in f_X^{-1}(i) }
P_{A|E}(a'|e)^{1-s}
\right)^{\frac{s}{1-s}} 
\\
& =
\rE_X \sum_e P_E(e) 
\sum_{a}
P_{A|E}(a|e)^{1-s}
\left(
\sum_{a'\in f_X^{-1}(f_X(a)) }
P_{A|E}(a'|e)^{1-s}
\right)^{\frac{s}{1-s}} 
\\
& \le 
\sum_e P_E(e) 
\sum_{a}
P_{A|E}(a|e)^{1-s}
\left(
\rE_X \sum_{a'\in f_X^{-1}(f_X(a)) }
P_{A|E}(a'|e)^{1-s}
\right)^{\frac{s}{1-s}}  \label{eqn:jensen2}
\\
& \le 
\sum_e P_E(e) 
\sum_{a}
P_{A|E}(a|e)^{1-s}
\left(P_{A|E}(a|e)^{1-s}  
+ \frac{\epsilon}{\sM}  \sum_{a'\neq a } P_{A|E}(a'|e)^{1-s}
\right)^{\frac{s}{1-s}} 
\\
& \le 
\sum_e P_E(e) 
\sum_{a}
P_{A|E}(a|e)^{1-s}
\left(P_{A|E}(a|e)^{s}  
+ \left(\frac{\epsilon}{\sM} \sum_{a'\neq a } P_{A|E}(a'|e)^{1-s}
\right)^{\frac{s}{1-s}} 
\right)
\\
& \le 
\sum_e P_E(e) 
\sum_{a}
P_{A|E}(a|e)^{1-s}
\left(P_{A|E}(a|e)^{s}  
+ \left(\frac{\epsilon}{\sM} \sum_{a'} P_{A|E}(a'|e)^{1-s}
\right)^{\frac{s}{1-s}} 
\right)
\\
& =
1+
\sum_e P_E(e) 
\sum_{a}
P_{A|E}(a|e)^{1-s}
\left(\frac{\epsilon}{\sM} \sum_{a'} P_{A|E}(a'|e)^{1-s}
\right)^{\frac{s}{1-s}} 
\\
& =
1+
\left(\frac{\epsilon}{\sM} \right)^{\frac{s}{1-s}} 
\sum_e P_E(e) 
\left(\sum_{a}
P_{A|E}(a|e)^{1-s}\right)^{1+\frac{s}{1-s}} 
\\
& =
1+
\left(\frac{\epsilon}{\sM} \right)^{\frac{s}{1-s}} 
\sum_e P_E(e) 
\left(\sum_{a}
P_{A|E}(a|e)^{1-s}\right)^{\frac{1}{1-s}} 
\\
&=
1+ 
\frac{\epsilon^{\frac{s}{1-s}}
\rme^{\frac{s}{1-s} H_{1-s}^{\uparrow}(A|E|P_{AE})} 
}{\sM^{\frac{s}{1-s}}}.
\end{align}
In \eqref{eqn:jensen2}, we applied Jensen's inequality with the concave function $t\mapsto t^{\frac{s }{1-s}}$. Here is where the condition $s \in [0,1/2]$ is used. The explanations for the other bounds parallel those for the proof of  \eqref{10-15-30} and are omitted for the sake of brevity.
This completes the proof of \eqref{10-15-31}.
\end{proof}

\section{Bound on the Sum of Expectations in  \eqref{eqn:split_typ3}} \label{app:exp}
%Let $P_A\in\calP(\calA)$ be a given distribution. Let $\gamma(t) := t H_{1+t}(A|P_A) =-\log\sum_a P_A(a)^{1+t}$ and let $P_A^{(t)}(a) := P_A(a)^{1+t} \rme^{\gamma(t)}$ be a {\em tilted distribution}; this is indeed a distribution. 
Recall the definitions of $\gamma(t) = tH_{1+t}(A|P_A)$ and the tilted distribution $P_A^{(t)}(a) = P_A(a)^{1+t}\rme^{\gamma(t)}$ introduced in Section~\ref{sec:opt_rates}.
 It is easily seen (cf.~Section~\ref{sec:info_measures}) that $\gamma(t)$ is strictly concave for $t>-1$. It also holds that
\begin{align}
D(P_A^{(t)} \| P_A) & = \gamma(t) - t \gamma'(t),\quad\mbox{and} \label{eqn:div_t} \\
H(P_A^{(t)}  ) & = (1+t)\gamma'(t) -   \gamma (t) \label{eqn:ent_t} .
\end{align} 
A fact we   use in the sequel is that if $t >  -1$,  the function $t\mapsto H(P_A^{(t)})$ is monotonically decreasing; this can be seen by considering the derivative $\frac{\rmd }{\rmd t}H(P_A^{(t)}) = (1+t)\gamma''(t) < 0$.  
\begin{lemma} \label{lem:exp}
Let the  sum of expectations in  \eqref{eqn:split_typ3} be denoted as 
\begin{align}
\Psi_n :=\sum_{Q \in\calP_n(\calA)} \bbE_{X_n}   \left\{ \bigg[  \sum_{ \tba \in\calT_Q\setminus\{\ba\}  : \barf_{X_n} (\ba)=\barf_{X_n} (\tba)  }  P_A^n (\tba)\bigg]^s  \right\}  .
\end{align}
 Let $t_R\in [-1,\infty)$ be the unique  number satisfying $H(P_A^{(t_R)}) = R \le \log|\calA|$. Then,
\begin{align}
\Psi_n\dotgeq  \exp( -n \Lambda (s,R))\label{eqn:exp_bd}
\end{align}
where 
\begin{align}
\Lambda (s,R) := \left\{  \begin{array}{cc}
 s  ( R + D(P^{ (t_R) }_A \| P_A)) & s-1\le t_R \\
 R + \gamma(s-1)    & s-1  > t_R 
\end{array} \right. . \label{eqn:Gamma}
\end{align}
Furthermore, if $R< H_{1-s}(A|P_A)$, then for all $s\in [0,s_0(A|P_A)]$ (cf.\ definition in \eqref{eqn:def_s0A}),
\begin{align}
\Psi_n &\dotgeq \exp( -n [s (R+D(P_A^{(t_R)} \| P_A) ) ] ) \label{eqn:case1_dom} \\
&\doteq \sum_{Q \in\calP_n(\calA)} \left\{\bbE_{X_n }\bigg[\sum_{ \tba \in\calT_Q\setminus\{\ba\}  : \barf_{X_n} (\ba)=\barf_{X_n} (\tba)  } P_A^n(\tba) \bigg] \right\}^{s} \label{eqn:case1_dom1}  .
\end{align}
The inequality in \eqref{eqn:case1_dom} implies that  under the stated conditions on $R$ and $s$, the first clause in \eqref{eqn:Gamma} is active. 
\end{lemma}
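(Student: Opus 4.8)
The plan is to compute the inner expectation in $\Psi_n$ using the strong universality of $\barf_{X_n}$ and then optimize the resulting exponent over types $Q$. First, I would fix a type $Q\in\calP_n(\calA)$ and let $N_Q(\ba)$ denote the random number of sequences $\tba\in\calT_Q\setminus\{\ba\}$ colliding with $\ba$ under $\barf_{X_n}$. Since $\barf_{X_n}$ is strongly universal, for $\tba\ne\ba$ the events $\{\barf_{X_n}(\tba)=\barf_{X_n}(\ba)\}$ each have probability $1/M_n=\rme^{-nR}$, so $\bbE_{X_n}[N_Q(\ba)]\doteq |\calT_Q|\rme^{-nR}\doteq \rme^{n(H(Q)-R)}$. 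The quantity inside the $s$-th power is $\sum_{\tba\in\calT_Q\setminus\{\ba\}:\,\mathrm{collide}}P_A^n(\tba) = P_A^n(\tba_0)N_Q(\ba)$ where $\tba_0$ is any element of $\calT_Q$ (all have the same probability $P_A^n(\tba_0)\doteq\rme^{-n(H(Q)+D(Q\|P_A))}$). The key step is then a two-sided estimate: $\bbE_{X_n}[(P_A^n(\tba_0)N_Q(\ba))^s]$ is, on the exponential scale, $\max\{P_A^n(\tba_0)^s\bbE_{X_n}[N_Q(\ba)]^s,\;P_A^n(\tba_0)^s\bbE_{X_n}[N_Q(\ba)]\}$ — because collisions are ``rare'' binary-type events, the $s$-th moment of a sum of nearly-independent indicators behaves like the max of $(\text{mean})^s$ and (mean), the second term dominating when the mean is small (sub-exponential in the relevant regime) and the first when the mean is exponentially large. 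This is precisely the moments-of-type-class-enumerator phenomenon of Merhav~\cite{merhav08,merhav14}; I would cite those estimates (or reprove the needed one-sided lower bound via a truncated second-moment / Paley–Zygmund argument together with Markov for the upper side) rather than grinding it out.

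With that estimate in hand, I would sum over types. Using $|\calP_n(\calA)|\doteq 1$, $\Psi_n$ is dominated by $\max_Q$ of the per-type contribution. Writing $Q=P_A^{(t)}$ along the tilted family (which, as $t$ ranges over $(-1,\infty)$, traces out exactly the types relevant here, with $H(P_A^{(t)})$ decreasing from $\log|\calA|$ toward $0$), and recalling $R=H(P_A^{(t_R)})$, the mean $\bbE_{X_n}[N_Q(\ba)]\doteq\rme^{n(H(Q)-R)}$ is exponentially large precisely when $H(Q)>R$, i.e.\ $t<t_R$, and exponentially small when $t>t_R$. One then checks which branch of the max, combined with the factor $P_A^n(\tba_0)^s\doteq\rme^{-ns(H(Q)+D(Q\|P_A))}$, gives the largest overall exponent. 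The ``$(\text{mean})^s$'' branch contributes exponent $-s(H(Q)+D(Q\|P_A)) + s(H(Q)-R) = -s(R+D(Q\|P_A))$, optimized over $t\le t_R$; since $D(P_A^{(t)}\|P_A)=\gamma(t)-t\gamma'(t)$ is increasing in $|t|$ in the relevant direction, the best choice is $t=t_R$ when $s-1\le t_R$, giving $-s(R+D(P_A^{(t_R)}\|P_A))$. The ``(mean)'' branch contributes $-s(H(Q)+D(Q\|P_A))+(H(Q)-R)$; using $H(P_A^{(t)})+D(P_A^{(t)}\|P_A)$ and the identities \eqref{eqn:div_t}--\eqref{eqn:ent_t} this simplifies, and the optimizing $t$ turns out to be $t=s-1$ (so $Q=P_A^{(s-1)}$), giving exponent $-(R+\gamma(s-1))$, and this branch is the operative one exactly when $s-1>t_R$. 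Combining the two cases yields \eqref{eqn:Gamma} and hence \eqref{eqn:exp_bd}.

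For the refined statement \eqref{eqn:case1_dom}, I would argue that the hypothesis $R<H_{1-s}(A|P_A)$ together with $s\le s_0(A|P_A)$ forces $s-1\le t_R$, so that the first clause of $\Lambda(s,R)$ is the active one. Indeed, by definition $s_0(A|P_A)$ is the largest $s\in[0,1]$ with $H_{1-s}(A|P_A)\le H(P_A^{(s-1)})$; for $s\le s_0$ this gives $R<H_{1-s}(A|P_A)\le H(P_A^{(s-1)})$, and since $t\mapsto H(P_A^{(t)})$ is strictly decreasing with $H(P_A^{(t_R)})=R$, the inequality $H(P_A^{(s-1)})>R=H(P_A^{(t_R)})$ yields $s-1<t_R$, i.e.\ the first branch. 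That establishes \eqref{eqn:case1_dom}. Finally, for \eqref{eqn:case1_dom1}, I observe that $\bbE_{X_n}[\sum_{\tba\in\calT_Q\setminus\{\ba\}:\,\mathrm{collide}}P_A^n(\tba)]\doteq P_A^n(\tba_0)\bbE_{X_n}[N_Q(\ba)]\doteq\rme^{-n(D(Q\|P_A)+R)}$, so raising to the $s$-th power and summing over types gives $\max_Q\rme^{-ns(D(Q\|P_A)+R)}\doteq\rme^{-ns(D(P_A^{(t_R)}\|P_A)+R)}$ (the maximum over $t\le t_R$ being at $t=t_R$, which is attained in the feasible range by the previous sentence), matching \eqref{eqn:case1_dom}; this is the $(\text{mean})^s$ lower bound, valid without any moment subtlety by Jensen applied to $t\mapsto t^s$. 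The main obstacle is the moment estimate $\bbE_{X_n}[N_Q(\ba)^s]\dotgeq\bbE_{X_n}[N_Q(\ba)]$ in the small-mean regime: this is where strong universality (pairwise independence of the hash values) is essential and where one must be careful that the $s$-th moment of the type-class enumerator does not collapse below its mean; I would handle it by the standard truncated-moment argument from the moments-of-type-class-enumerator literature.
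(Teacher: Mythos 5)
Your proposal follows essentially the same route as the paper's proof: the per-type enumerator $N_Q$ under strong universality, the Merhav-style moment dichotomy $\bbE[N_Q^s]\doteq\max\{(\bbE[N_Q])^s,\,\bbE[N_Q]\}$ split according to $H(Q)\gtrless R$, a Laplace-type optimization over types (via the tilted family, with optimizers $P_A^{(t_R)}$ and $P_A^{(s-1)}$) yielding the two clauses of $\Lambda(s,R)$, and the identical monotonicity argument that $R<H_{1-s}(A|P_A)$ together with $s\le s_0(A|P_A)$ forces $s-1\le t_R$ so that the first clause is active. The only real difference is that you delegate the two moment lower bounds to citations (Merhav, Paley--Zygmund, truncated moments) whereas the paper proves them directly --- the large-mean case via the Chernoff-type concentration bound of Lemma~\ref{lem:conc} and the small-mean case via the elementary estimate $\bbE[N_Q^s]\ge\Pr(N_Q=1)\dotgeq\bbE[N_Q]$ --- which is a matter of self-containedness rather than substance.
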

The calculations here are somewhat similar to those in Merhav's work in \cite[Section~IV-C]{merhav08} and \cite[Section~IV-D]{merhav14}  but the whole proof  for the case in which $\calE=\emptyset$ is  included for completeness.  See Remark~\ref{rmk:conditional} for a sketch of  how to extend the analysis to the memoryless but non-stationary case in which $\calE\ne\emptyset$.

We remark that when $s-1=t_R$, 
\begin{equation}
\Lambda(s,R) \big|_{s= 1+t_R} = (1+t_R  )  \gamma'(t_R) .
\end{equation}
By using $H(P_A^{(t_R)})=R$, \eqref{eqn:div_t}, and~\eqref{eqn:ent_t}, we immediately see that  this ``boundary'' case coincides with the two cases of~\eqref{eqn:Gamma} so $s\mapsto\Lambda(s,R)$ is continuous at $1+t_R$. % In the proof, we often drop the random variable $A$ from the notation $H(A|P_A)$ and simply write $H(P_A)$ for the Shannon entropy calculated according to $P_A$. 
 
\begin{proof}
Let $\calG_R := \{ Q\in\calP (\calA): H(Q) > R\}$, $\calG_{R,n} := \calG_R \cap\calP_n(\calA)$ and $\mathrm{cl}(\calG_R)$ be the closure of $\calG_R$.
We   split $\Psi_n$ into the following two sums
\begin{align}
\Psi_n&= \underbrace{ \sum_{Q  \in \calG_{R,n}} \bbE_{X_n}  \left\{ \bigg[  \sum_{ \tba \in\calT_Q\setminus\{\ba\}  : \barf_{X_n} (\ba)=\barf_{X_n} (\tba)  }  P_A^n (\tba)\bigg]^s \right\}  }_{\alpha_n}  \nn\\*
&\qquad  + \underbrace{ \sum_{Q  \in \calG_{R,n}^c} \bbE_{X_n}  \left\{ \bigg[  \sum_{ \tba \in\calT_Q\setminus\{\ba\}  : \barf_{X_n} (\ba)=\barf_{X_n} (\tba)  }  P_A^n (\tba)\bigg]^s \right\} }_{\beta_n}.
\end{align}
%Denote the first sum as $A$ and the second as $B$. 
  Define $N_Q   :=\sum_{\tba \in\calT_Q \setminus\{\ba\}} \bone \{\barf_{X_n} (\ba)=\barf_{X_n} (\tba) \} $. This is a sum of $L_Q :=| \calT_Q\setminus\{\ba\}  |$ independent and identically distributed $\{0,1\}$-random variables $\{Y_i\}_{i=1}^L$ with   $\Pr(Y_i = 1) =  \rme^{-nR}=:p$ (property of strong universal$_2$ hash functions). Let $\ba_Q \in \calT_Q$ be any generic vector of type $Q$. Let us now lower bound $\alpha_n$ and $\beta_n$.
\begin{itemize}
\item  For the expectation within $\alpha_n$, we have 
\begin{align}
&\bbE_{X_n } \left\{\bigg[ \sum_{ \tba \in\calT_Q\setminus\{\ba\}  : \barf_{X_n} (\ba)=\barf_{X_n} (\tba)  } P_A^n(\tba) \bigg]^{s} \right\} \nn\\*
&=P_A^n(\ba_Q)^{ s}    \bbE_{X_n } \left\{ \bigg[ \sum_{ \tba \in\calT_Q\setminus\{\ba\}  : \barf_{X_n} (\ba)=\barf_{X_n} (\tba)  }  1\bigg]^{s} \right\} \label{eqn:expected_N0}\\
&= P_A^n(\ba_Q)^{ s} \bbE  \left[ N_Q ^{s} \right]  \label{eqn:expected_N}\\
&\dotgeq P_A^n(\ba_Q)^{ s}  \left\{ \bbE  \left[ N_Q \right] \right\}^s  \label{eqn:expected_N2} \\
&= \left\{\bbE_{X_n }\bigg[\sum_{ \tba \in\calT_Q\setminus\{\ba\}  : \barf_{X_n} (\ba)=\barf_{X_n} (\tba)  } P_A^n(\tba) \bigg] \right\}^{s}, \label{eqn:expected_N3} 
\end{align}
where \eqref{eqn:expected_N0} follows  because all $\tba$ in the sum have the same type $Q$, \eqref{eqn:expected_N} from the definition of $N_Q $,  \eqref{eqn:expected_N2} follows from Lemma \ref{lem:conc}  (in Appendix \ref{app:conc}) under the condition that $Q \in \calG_{R,n}$ (so $L_Q\cdot  p =\bbE[N_Q ] \ge (n+1)^{- |\calA|} \exp( n [H(Q)-R]) \to\infty$ and \eqref{eqn:pL_inf} applies).
Thus, we conclude that 
\begin{align}
\alpha_n &\dotgeq   \sum_{Q\in\calG_{R,n}}  (\bbE[N_Q ] )^s P_{A}^n(\ba_Q)^s  \\
&\doteq \max_{Q \in  \mathrm{cl}(\calG_R) } (\bbE[N_Q ] )^s P_{A}^n(\ba_Q)^s .
\end{align}
By further using the fact that  $P_{A}^n(\ba_Q) = \exp( - n [ D(Q \| P_A) + H(Q)])$ \cite[Lemma~2.6]{Csi97}, we have  
\begin{equation}
\alpha_n \dotgeq \exp(-nsR) \exp\Big(-ns \min_{Q  \in \mathrm{cl}(\calG_R)  }    D(Q\| P_A)  \Big) =: \widetilde{\alpha}_n. \label{eqn:Aunder}
\end{equation}
\item Next, we lower bound the expectation in $\beta_n$. The step from \eqref{eqn:expected_N0} to~\eqref{eqn:expected_N} remains the same but we bound $\bbE [ N_Q ^s]$ differently. We have 
\begin{align}
\bbE \left[N_Q ^{s} \right] &\ge 1^{ s } \Pr(N_Q =1) \\
& = 1^{s} \binom{L_Q}{1} p^1 (1-p)^{ L_Q-1} \\
& =  L_Q \cdot p \cdot \exp \big( (L_Q-1)\log (1-p) \big) \\
& \ge  L_Q \cdot p\cdot \exp \bigg( -(L_Q-1)  \frac{ p}{1-p} \bigg) \label{eqn:kl_ineq} \\
& \dotgeq  L_Q \cdot  p  =\bbE [N_Q ]\label{eqn:kl_ineq2} , 
\end{align}
where \eqref{eqn:kl_ineq} follows from the basic inequality $\log (1-t) \ge -\frac{t}{1-t}$ and  \eqref{eqn:kl_ineq2}
 follows from the fact that $(L_Q-1)   p \le  \rme^{-nR}   |\calT_Q \setminus\{\ba\} |\le \exp( n [H(Q)-R])\le 1$ when $Q \in\calG_{R,n}^c = \{ Q \in\calP_n(\calA) : H(Q)\le R\}$. 
Thus, we have 
\begin{align}
\beta_n &\dotgeq  \sum_{Q \in \calG_{R,n}^c }  \bbE[N_Q ]  P_{A}^n(\ba_Q)^s \\
&\doteq \max_{Q \in \calG_R^c }  \bbE[N_Q ]  P_{A}^n(\ba_Q)^s\\
& \doteq \exp( -nR) \exp\Big( - n \min_{Q  \in \calG_R^c } \big[ sD(Q\| P_A) -(1-s)H(Q)\big] \Big) =: \widetilde{\beta}_n. \label{eqn:Bunder}
\end{align}
%For ease of notation, let $\gamma(t) := t H_{1+t}(A|P_A) =-\log\sum_a P_A(a)^{1+t}$ and  $P_A^{(t)}(a) := P_A(a)^{1+t} \rme^{\gamma(t)}$ be a tilted distribution.
\end{itemize}
We remark that the evaluations in \eqref{eqn:expected_N2}  and \eqref{eqn:kl_ineq2} are, in fact, exponentially tight\footnote{The intuition here is that if $H(Q)>R$, the random variable $N_Q $ concentrates doubly-exponentially fast to its expectation, which itself is exponentially large. On the other hand, if $H(Q) < R$, $N_Q $ is typically exponentially small, so $\bbE [ N_Q  ]$ is dominated by the term $1^{s} \Pr(N_Q =1)$.  }~\cite[Eqn.~(34)]{merhav08}. This implies that  $\alpha_n \doteq \widetilde{\alpha}_n$ and  $\beta_n \doteq \widetilde{\beta}_n$. However,  we only require the lower bounds.

It is easy to see that the optimal distribution $Q^*$ in the optimization in the exponent of $\widetilde{\alpha}_n$  satisfies $H(Q^*)=R$ (i.e., $Q^*$ lies on the boundary of $\calG_R$). In fact, the exponent (which is the lossless source coding error exponent~\cite[Theorem~2.15]{Csi97}) can be expressed as $D(P_A^{(t_R)} \| P_A)$ where $t_R\ge -1$ is chosen such that $H(P_A^{(t_R)}) = (1+t_R) \gamma'(t_R) -\gamma(t_R)=R$. Thus,
\begin{align}
\widetilde{\alpha}_n&=\exp(-nsR) \exp \big(-ns D\big(P_A^{(t_R)}\| P_A\big)\big)  \\
&=\exp(-ns [R + \gamma(t_R) - t_R\gamma'(t_R) ] ) . \label{eqn:A_asymp}
\end{align}

Now, it is easy to verify (see Shayevitz~\cite[Section IV-A.8]{Shayevitz10} for example) by differentiating the convex function  $g(Q):= sD(Q\| P_A)- (1-s) H(Q)$ that the {\em unconstrained} minimum in the exponent in $\widetilde{\beta}_n$ takes the form  of a tilting of $P_A$, i.e., 
\begin{equation}
Q^*(a) := \frac{P_A^s(a)}{\sum_{a'} P_A^s(a')}  = P_A^{(s-1)}(a),\quad\forall\, a\in\calA. \label{eqn:Q_star}
\end{equation}
%Also see  for an elegant proof of \eqref{eqn:Q_star}.
Now depending on the value of $s$, we have two different scenarios. First, if $s-1 > t_R$ or  equivalently, $Q^* \in \calG_R^c$, then $\widetilde{\beta}_n\doteq \exp(-n[ R + g(Q^*) ])  = \exp(-n [R +\gamma(s-1)] )$ (using \eqref{eqn:div_t} and \eqref{eqn:ent_t}). On the other hand, if $s-1\le t_R$, the optimal solution in the optimization in  $\widetilde{\beta}_n$ is again attained at the boundary of $\calG_R$ and $\calG_R^c$ (i.e., the constraint $Q\in\calG_R^c$ is active). Thus, $\widetilde{\beta}_n\doteq \widetilde{\alpha}_n$ where $\widetilde{\alpha}_n$ is in~\eqref{eqn:A_asymp}. In summary,  $\widetilde{\beta}_n\doteq \exp( -n \Lambda (s,R))$ where  $\Lambda(s,R)$ is defined in~\eqref{eqn:Gamma}.  Now, clearly $\widetilde{\beta}_n$ always dominates $\widetilde{\alpha}_n$ (i.e., $\widetilde{\beta}_n$ is exponentially  at least  as large as $\widetilde{\alpha}_n$). This is because when $s-1\le t_R$, they are the same, and when $s-1 > t_R$, we are taking an {\em unconstrained} minimum of  $g(Q)$  in the exponent, making the overall expression larger.   We thus obtain the conclusion in \eqref{eqn:exp_bd}.

For the statements in \eqref{eqn:case1_dom}--\eqref{eqn:case1_dom1}, we assume that $R <  H_{1-s}(A|P_A)$ and   $s\in [0,s_0(A|P_A)]$. We claim that these imply that $s-1 \le t_R$, i.e., the first clause in \eqref{eqn:Gamma} is active. Note  from the definition of $s_0(A|P_A)$ in  \eqref{eqn:def_s0A} that $s\le s_0(A|P_A)$ means that 
%First, observe that there exists some $s_0 \in (0,1]$ such that 
\begin{equation}
H_{1-s}(A|P_A) \le  H(A|P_A^{(s-1)}) . \label{eqn:compare_H}
\end{equation}
%This follows by continuity  of both $s\mapsto H_{1-s}$ and $s\mapsto H(P_A^{(s-1)})$ . Indeed, when $s=0$, the left-hand-side of  \eqref{eqn:compare_H} is %$H_1(A|P_A) = H(A|P_A)$, the Shannon entropy, while   the right-hand-side is $\log|\calA|= H(P_{-1}) > H(A|P_A)$  ($P_{-1}$ is the uniform distribution on $\calA$ so it has the highest Shannon entropy $\log|\calA|$). From~\eqref{eqn:compare_H}, we know that
Since  $R< H_{1-s}(A|P_A)$, it holds that $R< H(A|P_A^{(s-1)})$, but this in turn implies that $s-1\le t_R$ because $t\mapsto H(A|P_A^{(t)})$ is  monotonically non-increasing. Thus~\eqref{eqn:case1_dom} holds.

That \eqref{eqn:case1_dom} is exponentially equal to~\eqref{eqn:case1_dom1} follows from the fact that when $R<H_{1-s}(A|P_A)$ and   $s\in [0,s_0(A|P_A)]$,  $\widetilde{\beta}_n$ is of the same exponential order $\widetilde{\alpha}_n$ and the latter is lower bounded (on the exponential scale) by \eqref{eqn:expected_N3} . % as  the calculation of the lower bound to $\alpha_n$  from \eqref{eqn:expected_N0} to  \eqref{eqn:Aunder}. We also employ the fact that there are polynomially many types so we may replace the maximum over $Q\in\calP_n(\calA)$ by a sum over $Q\in\calP_n(\calA)$.
\end{proof}

\begin{remark} \label{rmk:conditional}
To derive a conditional version of Lemma \ref{lem:exp} to obtain  \eqref{eqn:split_e}, we assume that the type of $\be\in\calE^n$ is $Q_E\in\calP_n(\calE)$. The above derivations go through essentially unchanged by averaging with respect to $Q_E$ everywhere. Specifically, we consider $\ba$ and $\tba$ to belong to various ``$V_{A|E}$-shells'' $\calT_{V_{A|E}}(\be) := \{\ba\in\calA^n: (\ba,\be) \in \calT_{Q_E \times V_{A|E}}\}$~\cite[Ch.~2]{Csi97}. The entropies $H(Q)$ are replaced by conditional entropies $H(V_{A|E} |Q_E)$, the relative entropies $D( Q\| P_A)$ by conditional relative entropies $D(V_{A|E} \| P_{A|E} | Q_E)$ and the tilted conditional distribution   is defined as $P_{A|E}^{(t)}(a|e) \propto P_{A|E}(a|e)^{1+t}$ and so on. For the analogues of  \eqref{eqn:case1_dom}--\eqref{eqn:case1_dom1}  to hold, we firstly require $R < \sum_e Q_E(e ) H_{1-s}(A|P_{A | E=e})$. If we further assume that $0\le s\le  s_0 = \min_e s_0(A|P_{A|E=e})$, $H_{1-s}(A|P_{A|E=e})\le H(A|P_{A|E}^{(s-1)}(\cdot | e))$ for all $e\in\calE$,  and so   $R< \sum_e Q_E(e ) H(A|P_{A|E}^{(s-1)}(\cdot |e))$ giving the first clause    in the conditional analogue of~\eqref{eqn:Gamma}, i.e., that $\Lambda(s,R) = s(R + D(P_{A|E}^{(t_R)}  \| P_{A|E} | Q_E))$ where $t_R \ge -1$ satisfies $H(P_{A|E}^{(t_R)} | Q_E) = R$.    These observations yield \eqref{eqn:split_e} upon averaging over all types on $\calE$.
\end{remark}

\section{A Useful Concentration Bound} \label{app:conc}
The following lemma is  essentially a restatement of Lemma~10 in \cite{HayashiTan2015a}.
\begin{lemma} \label{lem:conc}
Let $Y_1, \ldots, Y_L$ be independent random variables, each taking values in $\{0,1\}$ such that $\Pr(Y_i = 1) = p$. Let $N :=\sum_{i=1}^L Y_i$.  For every $s \in [0,1]$ and any $0<\epsilon <1$, 
\begin{equation}
\bbE [ N^s] \ge \lfloor Lp(1-\epsilon) \rfloor^s \left[ 1- \exp\bigg( - L \, \frac{p}{2}\, \epsilon^2 \bigg)\right]. \label{eqn:conc_lower}
\end{equation}
%for any $0<\epsilon <1$. 
%For $s\ge 1$, 
%\begin{equation}
%\bbE [ N^s] \ge (pL)^s. \label{eqn:jens_s}
%\end{equation}
\end{lemma}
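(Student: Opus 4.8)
The plan is to combine the monotonicity of the power map $x\mapsto x^s$ with a classical multiplicative Chernoff bound on the lower tail of $N$, which is a binomial random variable with mean $\mu:=\bbE[N]=Lp$.

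First I would record the pointwise inequality
\[
N^s \ge \lfloor Lp(1-\epsilon)\rfloor^s\,\bone\{N \ge \lfloor Lp(1-\epsilon)\rfloor\}.
\]
This is immediate: on the event $\{N\ge\lfloor Lp(1-\epsilon)\rfloor\}$ the fact that $x\mapsto x^s$ is non-decreasing on $[0,\infty)$ for $s\in[0,1]$ gives $N^s\ge\lfloor Lp(1-\epsilon)\rfloor^s$, and off that event the right-hand side vanishes while $N^s\ge 0$. Taking expectations,
\[
\bbE[N^s] \ge \lfloor Lp(1-\epsilon)\rfloor^s\,\Pr\big(N \ge \lfloor Lp(1-\epsilon)\rfloor\big).
\]

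Second, I would lower bound the probability on the right. Since $\lfloor Lp(1-\epsilon)\rfloor \le Lp(1-\epsilon) = (1-\epsilon)\mu$, we have the nesting $\{N\ge(1-\epsilon)\mu\}\subseteq\{N\ge\lfloor Lp(1-\epsilon)\rfloor\}$, hence
\[
\Pr\big(N \ge \lfloor Lp(1-\epsilon)\rfloor\big) \ge \Pr\big(N \ge (1-\epsilon)\mu\big) \ge 1 - \Pr\big(N \le (1-\epsilon)\mu\big).
\]
Now I would invoke the standard multiplicative Chernoff estimate for sums of independent $\{0,1\}$-valued random variables: for $\epsilon\in(0,1)$,
\[
\Pr\big(N \le (1-\epsilon)\mu\big) \le \exp\!\Big(-\tfrac{\mu\epsilon^2}{2}\Big) = \exp\!\Big(-L\,\tfrac{p}{2}\,\epsilon^2\Big).
\]
If a self-contained derivation is wanted, this follows from the exponential Markov inequality $\Pr(N\le(1-\epsilon)\mu)\le e^{\lambda(1-\epsilon)\mu}\,\bbE[e^{-\lambda N}]$ together with $\bbE[e^{-\lambda N}]=(1-p+pe^{-\lambda})^L\le\exp\big(\mu(e^{-\lambda}-1)\big)$, optimizing over $\lambda>0$, and using the elementary bound $(1-\epsilon)\log(1-\epsilon)+\epsilon\ge\epsilon^2/2$ on $(0,1)$. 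Chaining the three displays yields \eqref{eqn:conc_lower}.

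I do not expect a genuine obstacle here; the argument is routine once the reduction above is in place. The only points requiring some care are (i) taking the floor in the ``correct direction'' so that $(1-\epsilon)\mu$ dominates $\lfloor Lp(1-\epsilon)\rfloor$ and the Chernoff event is nested as claimed, and (ii) quoting the precise constant $\tfrac12$ in the exponent, which is the classical form of the multiplicative lower-tail bound and is exactly what is consumed downstream, e.g.\ in deducing $\bbE[N_Q]\to\infty\Rightarrow\bbE[N_Q^s]\dotgeq(\bbE[N_Q])^s$ in the proof of Lemma~\ref{lem:exp}.
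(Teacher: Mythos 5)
Your proof is correct. Note that the paper itself does not prove Lemma~\ref{lem:conc}: it only remarks that the statement is essentially Lemma~10 of the authors' earlier work \cite{HayashiTan2015a}, so there is no in-paper argument to compare against line by line. Your self-contained derivation — the pointwise bound $N^s \ge \lfloor Lp(1-\epsilon)\rfloor^s\,\bone\{N \ge \lfloor Lp(1-\epsilon)\rfloor\}$ using monotonicity of $x\mapsto x^s$ for $s\in[0,1]$, the nesting $\{N\ge (1-\epsilon)Lp\}\subseteq\{N\ge\lfloor Lp(1-\epsilon)\rfloor\}$ obtained by flooring in the right direction, and the multiplicative Chernoff lower-tail estimate $\Pr(N\le(1-\epsilon)Lp)\le \exp(-Lp\epsilon^2/2)$ via the MGF bound and $(1-\epsilon)\log(1-\epsilon)+\epsilon\ge\epsilon^2/2$ — is the standard route and delivers exactly the constant $\tfrac12$ consumed in \eqref{eqn:pL_inf} and Lemma~\ref{lem:exp}. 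The only degenerate case, $\lfloor Lp(1-\epsilon)\rfloor=0$, makes the right-hand side of \eqref{eqn:conc_lower} vanish (for $s>0$), so the claim is trivially true there and your argument needs no modification.
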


In particular, if $Lp$ is a sequence in $n$ that tends to infinity (as $n$ tends to infinity) exponentially fast, then by taking $\epsilon=1/2$  (say) in \eqref{eqn:conc_lower},
\begin{equation}
\bbE [N^s]\dotgeq (Lp)^s = \left\{\bbE[ N ]\right\}^s.  \label{eqn:pL_inf}
\end{equation}
In fact, we have $\bbE[N^s ] \doteq \left\{\bbE[ N ]\right\}^s$ because $\bbE[N^s]\le \left\{\bbE[ N ]\right\}^s$ by Jensen's inequality.

\subsection*{Acknowledgements}   

VYFT is partially supported an NUS Young Investigator Award (R-263-000-B37-133) and a Singapore Ministry of Education Tier 2 grant ``Network Communication with Synchronization Errors: Fundamental Limits and Codes'' (R-263-000-B61-112).

MH is partially supported by a MEXT Grant-in-Aid for Scientific Research (A) No.\ 23246071. 
MH is also partially supported by the Okawa Reserach Grant
and Kayamori Foundation of Informational Science Advancement. The Centre for Quantum Technologies is funded by the Singapore Ministry of Education and the National Research Foundation as part of the Research Centres of Excellence programme.

\bibliographystyle{unsrt}
\bibliography{isitbib}

\end{document}